\newif\ifcomments   
\newif\ifanon       
\newif\ifcrypto     
\newif\ifllncs      
  \theoremstyle{plain}
  \newtheorem{theorem}{Theorem}[section]
  \newtheorem{lemma}[theorem]{Lemma}
  \newtheorem{claim}{Claim}
  \newtheorem{corollary}[theorem]{Corollary}
  \newtheorem{proposition}[theorem]{Proposition}
  \newtheorem{definition}[theorem]{Definition}
  \newtheorem{remark}[theorem]{Remark}
  \newcommand{\email}[1]{\href{mailto:#1}{\texttt{#1}}}
\newtheorem{fact}[theorem]{Fact}
  \newcommand{\todo}[1]{\textcolor{red}{TODO: #1}}
    \renewcommand{\note}[1]{\textcolor{blue}{NOTE: #1}}
    \newcommand{\note}[1]{\textcolor{blue}{NOTE: #1}}
  \newcommand{\pnote}[1]{}
  \newcommand{\ag}[1]{}
  \newcommand{\luowen}[1]{}
  \newcommand{\hnote}[1]{}
  \newcommand{\todo}[1]{}
  \ifllncs\renewcommand{\note}[1]{}\else\newcommand{\note}[1]{}\fi
  \newcommand{\pnote}[1]{}
  \newcommand{\ag}[1]{}
  \newcommand{\luowen}[1]{}
  \newcommand{\hnote}[1]{}  
\newcommand{\ketbra}[2]{\ket{#1}\!\bra{#2}}
\renewcommand{\cal}[1]{\mathcal{#1}}
\newcommand{\R}{\mathbb{R}}
\newcommand{\C}{\mathbb{C}}
\newcommand{\N}{\mathbb{N}}
\newcommand{\E}{\mathop{\mathbb{E}}}
\newcommand{\Tr}{\mathrm{Tr}}
\newcommand{\sym}{\mathsf{sym}}
\newcommand{\eps}{\varepsilon}
\newcommand{\Haar}{\mathscr{H}}
\DeclareMathOperator{\TD}{TD}
\newcommand{\TraceDist}[2]{{\TD\left(#1, #2\right)}}
\newcommand{\norm}[1]{{\left\lVert#1\right\rVert}}
\newcommand{\diamondnorm}[1]{{\norm{#1}_\diamond}}
\newcommand{\operatornorm}[1]{{\norm{#1}}}
\newcommand{\bit}{{\{0, 1\}}}
\newcommand{\SR}{\mathsf{SR}}
\newcommand{\dmx}{\mathcal{D}}
\newcommand{\bfX}{\mathbf{X}}
\newcommand{\bfY}{\mathbf{Y}}
\newcommand{\ip}[2]{\langle #1 | #2 \rangle}
\newcommand{\secparam}{\lambda}
\newcommand{\commit}{\mathsf{Commit}}
\newcommand{\reveal}{\mathsf{Reveal}}
\DeclareFontFamily{U}{skulls}{}
\DeclareFontShape{U}{skulls}{m}{n}{ <-> skull }{}
\date{}
\title{Pseudorandom (Function-Like) Quantum State Generators: \\ New Definitions and Applications}
  \author{}
    \institute{}
    \author{Prabhanjan Ananth\inst{1} \and
        Aditya Gulati\inst{1} \and
        Luowen Qian\inst{2}\thanks{Supported by DARPA under Agreement No. HR00112020023.} \and
        Henry Yuen\inst{3}\thanks{Supported by AFOSR award FA9550-21-1-0040 and NSF CAREER award CCF-2144219. }}
    \institute{University of California, Santa Barbara, USA,  \email{prabhanjan@cs.ucsb.edu},  \email{adityagulati@ucsb.edu}\and Boston University, USA,  \email{luowenq@bu.edu} \and Columbia University, USA,  \email{hyuen@cs.columbia.edu}}
    \author{
    Prabhanjan Ananth\thanks{\email{prabhanjan@cs.ucsb.edu}} \vspace{0.3em}\\ \small{UCSB}
    \and Aditya Gulati\thanks{\email{adityagulati@ucsb.edu}} \vspace{0.3em}\\
    \small {UCSB}
    \and Luowen Qian\thanks{\email{luowenq@bu.edu}. Supported by DARPA under Agreement No. HR00112020023.} \vspace{0.3em} \\
    \small{Boston University}
    \and Henry Yuen\thanks{\email{hyuen@cs.columbia.edu}. Supported by AFOSR award
    FA9550-21-1-0040 and NSF CAREER award CCF-2144219. }\vspace{0.3em}\\
    \small{Columbia University}\vspace*{10pt}
    }
\begin{document}

\maketitle


\begin{abstract}
\noindent Pseudorandom quantum states (PRS) are efficiently constructible states that are computationally indistinguishable from being Haar-random, and have recently found cryptographic applications. We explore new definitions, new properties and applications of pseudorandom states, and present the following contributions:
\begin{enumerate}
    \item {\bf New Definitions}: We study variants of pseudorandom \emph{function-like} state (PRFS) generators, introduced by Ananth, Qian, and Yuen (CRYPTO'22), where the pseudorandomness property holds even when the generator can be queried adaptively or in superposition. We show feasibility of these variants assuming the existence of post-quantum one-way functions. 
    \item {\bf Classical Communication}: We show that PRS generators with logarithmic output length imply commitment and encryption schemes with \emph{classical communication}. Previous constructions of such schemes from PRS generators required quantum communication. 
    \item {\bf Simplified Proof}: We give a simpler proof of the Brakerski--Shmueli (TCC'19) result that polynomially-many copies of uniform superposition states with random binary phases are indistinguishable from Haar-random states.
    \item {\bf Necessity of Computational Assumptions}: We also show that a secure PRS with  output length logarithmic, or larger, in the key length necessarily requires computational assumptions. 
\end{enumerate}

\end{abstract}

\newcommand{\tomography}{\mathsf{Tomography}}
\newcommand{\verify}{\mathsf{Verify}}
\newcommand{\enc}{\mathsf{Enc}}
\newcommand{\dec}{\mathsf{Dec}}
\newcommand{\majority}{\mathsf{majority}}
\newcommand{\unanimous}{\mathsf{unanimous}}
\newcommand{\valid}{\mathsf{Valid}}
\newcommand{\invalid}{\mathsf{Invalid}}
\newcommand{\prob}{\mathsf{Pr}}

\ifllncs\else
\newpage
\tableofcontents
\newpage
\fi

\section{Introduction}
\label{sec:intro}

The study of pseudorandom objects is central to the foundations of cryptography. After many decades, cryptographers have developed a deep understanding of the zoo of pseudorandom primitives such as one-way functions (OWF), pseudorandom generators (PRG), and pseudorandom functions (PRF)~\cite{goldreich1986construct,haastad1999pseudorandom}.

The study of pseudorandomness in the quantum setting, on the other hand, is just getting started.
Objects such as state and unitary $k$-designs have been studied extensively, but these are best thought of as quantum analogues of $k$-wise independent hash functions~\cite{ambainis2007quantum,dankert2009exact}. There are unconditional constructions of state and unitary designs and they do not imply any computational assumptions~\cite{ambainis2007quantum,roy2009unitary}. 

Quantum pseudorandomness requiring computational assumptions, in contrast, has been studied much less. Ji, Liu, and Song introduced the notion of \emph{pseudorandom quantum states (PRS)} and \emph{pseudorandom quantum unitaries (PRU)}~\cite{ji2018pseudorandom}. At a high level, these are efficiently sampleable distributions over states/unitaries that are computationally indistinguishable from being sampled from the Haar distribution (i.e., the uniform measure over the space of states/unitaries). Ji, Liu, and Song as well as Brakerski and Shmueli have presented constructions of PRS that are based on quantum-secure OWFs~\cite{ji2018pseudorandom,BS19,BrakerskiS20}. Kretschmer showed, however, that PRS do not necessarily imply OWFs; there are oracles relative to which PRS exist but OWFs don't~\cite{Kretschmer21}. This was followed by recent works that demonstrated the cryptographic utility of PRS: basic cryptographic tasks such as bit commitment, symmetric-key encryption, and secure multiparty computation can be accomplished using only PRS as a primitive~\cite{AQY21,MY21}. It is an intriguing research direction to find more cryptographic applications of PRS and PRU.

The key idea in~\cite{AQY21} that unlocked the aforementioned applications was the notion of a \emph{pseudorandom function-like state (PRFS) generator}. To explain this we first review the definition of PRS generators. A quantum polynomial-time (QPT) algorithm $G$ is a PRS generator if for a uniformly random key $k \in \{0,1\}^\secparam$ (with $\secparam$ being the security parameter), polynomially-many copies of the state $\ket{\psi_k} = G(k)$ is indistinguishable from polynomially-many copies of a state $\ket{\vartheta}$ sampled from the Haar measure by all QPT algorithms. One can view this as a quantum analogue of classical PRGs. Alternately, one could consider a version of PRS where the adversary only gets one copy of the state. However, as we will see later, the multi-copy security of PRS will play a crucial role in our applications.

The notion of PRFS generator introduced by~\cite{AQY21} is a quantum analogue of PRF (hence the name \emph{function-like}): in addition to taking in a key $k$, the generator $G$ also takes an \emph{input} $x$ (just like a PRF takes a key $k$ and an input $x$). Let $\ket{\psi_{k,x}} = G(k,x)$. The pseudorandomness property of $G$ is that for all sequences of inputs $(x_1,\ldots,x_s)$ for polynomially large $s$, averaged over the key $k$, the collection of states $\ket{\psi_{k,x_1}}^{\otimes t}, \ldots, \ket{\psi_{k,x_s}}^{\otimes t}$ for polynomially large $t$ is computationally indistinguishable from $\ket{\vartheta_1}^{\otimes t},\ldots,\ket{\vartheta_s}^{\otimes t}$ where the $\ket{\vartheta_i}$'s are sampled independently from the Haar measure. In other words, while PRS generators look like (to a computationally bounded distinguisher) they are sampling a \emph{single} state from the Haar measure, PRFS generators look like they are sampling \emph{many} (as compared to the key length) states from the Haar measure. Importantly, this still holds true even when the distinguisher is given the inputs $x_1,\ldots,x_s$.

As mentioned, this (seemingly) stronger notion of quantum pseudorandomness provided a useful conceptual tool to perform cryptographic tasks (encryption, commitments, secure computation, etc) using pseudorandom states alone. Furthermore, \cite{AQY21} showed that for a number of applications, PRFS generators with logarithmic input length suffices and furthermore such objects can be constructed in a black-box way from PRS generators.\footnote{However, unlike the equivalence between PRG and PRF in the classical setting~\cite{goldreich1986construct}, it is not known whether \emph{every} PRFS generator can be constructed from PRS generators in a black-box way.} 



Despite exciting progress in this area in the last few years, there is still much to understand about the properties, relationships, and applications of pseudorandom states. In this paper we explore a number of natural questions about pseudorandom states:
\begin{itemize}
    \item {\em Feasibility of Stronger Definitions of PRFS}: In the PRFS definition of~\cite{AQY21}, it was assumed that the set of inputs on which the adversary obtains the outputs are determined ahead of time. Moreover, the adversary could obtain the output of PRFS on only classical inputs. This is often referred to as \emph{selective security} in the cryptography literature. For many interesting applications, this definition is insufficient\footnote{For example, the application of private-key encryption from PRFS as described in~\cite{AQY21} is only selectively secure. This is due to the fact that the underlying PRFS is selectively secure.}. This leads us to ask: \emph{is it feasible to obtain strengthened versions of PRFS that maintain security in the presence of adaptive and superposition queries?} 
    
    \item {\em Necessity of Assumptions}: In the classical setting, essentially all cryptographic primitives require computational assumptions, at the very least $\sf P \neq {\sf NP}$. What computational assumptions are required by pseudorandom quantum states? The answer appears to depend on the output length of the PRS generator. Brakerski and Shmueli~\cite{BrakerskiS20} constructed PRS generators with output length $c \log \secparam$ for some $c > 0$ satisfying statistical security (in other words, the outputs are statistically close to being Haar-random). On the other hand, Kretschmer showed that the existence of PRS generators with output length $\secparam$ implies that ${\sf BQP} \neq {\sf PP}$~\cite{Kretschmer21}. This leads to an intriguing question: {\em is it possible to unconditionally show the existence of $n(\secparam)$-length output PRS, for some $n(\secparam) \geq \log(\secparam)$?}


    \item {\em Necessity of Quantum Communication}: A common theme in all the different PRS-based cryptographic constructions of~\cite{AQY21,MY21} is that the parties involved in the system perform quantum communication. Looking forward, it is conceivable that quantum communication will be a much more expensive resource than having access to a quantum computer. Achieving quantum cryptography with classical communication has been an important direction, dating back to Gavinsky~\cite{Gav12}. We ask the following question: {\em is quantum communication inherent in  the cryptographic constructions based on PRS}?
 
\end{itemize}


\subsection{Our Results}

We explore the aforementioned questions. Our results include the following.

\paragraph{Adaptive-Secure and Quantum-Accessible PRFS.} As mentioned earlier, the notion of PRFS given by~\cite{AQY21} has \emph{selective security}, meaning that the inputs $x_1,\ldots,x_s$ are fixed ahead of time. Another way of putting it is, the adversary can only make non-adaptive, classical queries to the PRFS generator (where by query we mean, submit an input $x$ to the generator and receive $\ket{\psi_{k,x}} = G(k,x)$ where $k$ is the hidden, secret key). 

We study the notion of \emph{adaptively secure PRFS}, in which the security holds with respect to adversaries that can make queries to the generator adaptively. We consider two variants of this: one where the adversary is restricted to making classical queries to the generator (we call this a \emph{classically-accessible adaptively secure PRFS}), and one where there are no restrictions at all; the adversary can even query the generator on a \emph{quantum superposition of inputs} (we call this a \emph{quantum-accessible adaptively secure PRFS}). These definitions can be found in \Cref{sec:adaptive-security}. 

We then show feasibility of these definitions by constructing classically- and quantum-accessible adaptively secure PRFS generators from the existence of post-quantum one-way functions. These constructions are given in \Cref{sec:classical:aprfs} and \Cref{sec:cons:qrfs} respectively.

\paragraph{A Sharp Threshold For Computational Assumptions.} In \Cref{sec:computational-assumptions} we show that there is a sharp threshold between when computational assumptions are required for the existence of PRS generators: we give a simple argument that demonstrates that PRS generators with $\log \secparam$-length outputs require computational assumptions on the adversary\footnote{We also note that there is a much more roundabout argument for a quantitatively weaker result: ~\cite{AQY21} constructed bit commitment schemes from $O(\log \secparam)$-length PRS. If such PRS were possible to construct unconditionally, this would imply information-theoretically secure bit commitment schemes in the quantum setting. However, this contradicts the famous results of~\cite{lo1997quantum,mayers1997unconditionally}, which rules out this possibility. Our calculation, on the other hand, directly shows that $\log \secparam$ (without any constants in front) is a sharp threshold.}. This complements the aforementioned result of Brakerski and Shmueli~\cite{BrakerskiS20} that shows $c \log \secparam$-length PRS for some $c > 0$ do not require computational assumptions. We also note that the calculations of~\cite{Kretschmer21} can be refined to show that the existence of $(1 + \epsilon)\log \secparam$-length PRS for all $\epsilon > 0$ implies that ${\sf BQP} \neq {\sf PP}$. 

\paragraph{PRS-Based Constructions With Classical Communication.}\ We show that bit commitments and pseudo one-time pad schemes can be achieved using only classical communication based on the existence of PRS with $\secparam$-bit keys and $O(\log(\secparam))$-output length. This improves upon the previous result of~\cite{AQY21} who achieved bit commitments and pseudo one-time pad schemes from PRS using quantum communication. However, we note that~\cite{AQY21} worked with a wider range of parameters, while our constructions are based on PRS with $O(\log(\secparam))$-output length.
\par En route, we use quantum state tomography (or tomography for short), a well studied concept in quantum information. Roughly speaking, tomography, allows for obtaining a classical string $u$ that captures some properties of an unknown quantum state $\rho$, given many copies of this state. 
\par We develop a new notion called \emph{verifiable tomography} that might particularly be useful in cryptographic settings. Verifiable tomography allows for verifying whether a given string $u$ is consistent (according to some prescribed verification procedure) with a quantum state $\rho$. We present the definition and instantiations of verifiable tomography in~\Cref{sec:tomography}. In~\Cref{sec:apps}, we use verifiable tomography to achieve the aforementioned applications. At a high level, our constructions are similar to the ones in~\cite{AQY21}, except that verifiable tomography is additionally used to make the communication classical. 

\paragraph{A Simpler Analysis of Binary-Phase PRS.} 
Consider the following construction of PRS. Let $\{F_k: \{0,1\}^n \to \{0,1\} \}_{k \in \{0,1\}^\secparam}$ denote a (quantum-secure) pseudorandom function family. Then $\{ \ket{\psi_k} \}_k$ forms a PRS, where $\ket{\psi_k}$ is defined as
\begin{equation}
    \label{eq:binary-phase}
    \ket{\psi_k} = 2^{-n/2} \sum_{x \in \{0,1\}^n} (-1)^{F_k(x)} \ket{x}~.
\end{equation}
In other words, the pseudorandom states are \emph{binary phase states} where the phases are given by a pseudorandom function. This is a simpler construction of PRS than the one originally given by~\cite{ji2018pseudorandom}, where the phases are pseudorandomly chosen $N$-th roots of unity with $N = 2^n$. Ji, Liu, and Song conjectured that the binary phase construction should also be pseudorandom, and this was confirmed by Brakerski and Shmueli~\cite{BS19}. 

We give a simpler proof of this in \Cref{sec:simpleranalysis:prs}, which may be of independent interest.

\section{Technical Overview}

\subsection{Threshold For Computational Assumptions}
We show that PRS generators with $\secparam$-bit keys and $\log \secparam$-length outputs cannot be statistically secure. To show this we construct an inefficient adversary, given polynomially many copies of a state, can distinguish whether the 
state was sampled from the output distribution of a $\log \secparam$-length PRS generator or sampled from the Haar distribution on $\log \secparam$-qubit states with constant probability. 

\paragraph{Simple Case: PRS output is always pure.} Let us start with a simple case when the PRS generator is such that each possible PRS state is pure. Consider the subspace spanned by all possible PRS outputs. The dimension of the subspace spanned by these states is atmost $2^\secparam$: the reason being that there are at most $2^{\secparam}$ keys. Now, consider the subspace spanned by $t$-copies of PRS states. The dimension of this subspace is still at most $2^{\secparam}$ and in particular, independent of $t$. Define $P^{(t)}$ to be a projector (which could have an inefficient implementation) onto this subspace. By definition, the measurement of $t$ copies of the output of a PRS generator with respect to $P^{(t)}$ always succeeds. 

Recall that the subspace spanned by $t$-copies of states sampled from the Haar distribution (of length $\log \secparam$) is a symmetric subspace of dimension $\binom{2^\secparam + t - 1}{t}$. By choosing $t$ as an appropriate polynomial (in particular, set $t\gg \secparam$), we can make $\binom{2^\secparam + t - 1}{t} \gg 2^\secparam$, such that a measurement with $P^{(t)}$ on $t$ copies of states sampled from the Haar distribution fails with constant probability. Hence, an adversary, who just runs $P$, can successfully distinguish between $t$ copies of the output of a $\log \secparam$-length PRS generator and $t$ copies of a sample from a Haar distribution with constant probability. 

\paragraph{General Case.} Now let us focus on the case when the PRS generator can also output mixed states. Then we have 2 cases:
\begin{itemize}
    \item \emph{The majority of outputs of the PRS generator are negligibly close to a pure state:} In this case, we show that the previous approach still works. We replace the projector $P^{(t)}$ with a projection onto the space spanned by states closest to the output states of the PRS generator and we can show that modified projector still succeeds with constant probability.
    \item \emph{The majority of outputs of the PRS generator are not negligibly close to a pure state:} In this case, most PRS outputs have purity\footnote{A density matrix $\rho$ has purity $p$ if $\Tr(\rho^2)=p$.} non-negligibly away from $1$. Thus, we can violate the security of PRS as follows: run polynomially (in $\secparam$) many SWAP tests to check if the state is mixed or not. When the input state is from a Haar distribution, the test will always determine the input state to be pure. On the other hand, if the input state is the output of a PRS generator, the test will determine the input to be pure with probability that is non-negligibly bounded away from 1. Thus, this case cannot happen if the PRS generator is secure. 
\end{itemize}

\noindent Details can be found in \Cref{sec:computational-assumptions}.

\subsection{Cryptographic Applications With Classical Communication}
We show how to construct bit commitments and pseudo one-time encryption schemes from $O(\log(\secparam))$-output PRS with classical communication. Previously,~\cite{AQY21} achieved the same result for a wider range of parameters. In this overview, we mainly focus on bit commitments since the main techniques used in constructing commitments will be re-purposed for designing pseudo one-time encryption schemes. 
\par We use the construction of bit commitments from~\cite{AQY21} as a starting point. Let $d=O\left(\log \secparam\right)$, $n=O\left(\log \secparam\right)$ and $G$ is a $(d,n)$-PRFS generator\footnote{This in turn can be built from $O(\log(\secparam))$-output PRS as shown in~\cite{AQY21}.}. The commitment scheme from~\cite{AQY21} is as follows:
\begin{itemize}
    \item In the commit phase, the receiver sends a random $2^d n$-qubit Pauli $P=P_1\otimes P_2\otimes\cdots\otimes P_{2^d-1}$ to the sender, where each $P_i$ is an $n$-qubit Pauli. The sender on input bit $b$, samples a key $k$ uniformly at random from $\{0,1\}^\secparam$. The sender then sends the state $\rho = \bigotimes_{x\in[2^d]}P^b_x\sigma_{k,x}P^b_x$, where $\sigma_{k,x} = G(k,x)$ to the receiver.
    \item In the reveal phase, the sender sends $(k,b)$ to the receiver. The receiver accepts if $P^b\rho P^b$ is a tensor product of the PRFS evaluations of $(k,x)$, for all $x=0,\ldots,2^d-1$. 
\end{itemize}

\noindent To convert this scheme into one that only has classical comunication, we need a mechanism to generate classical information $c$ from $\rho$, where $\rho$ is generated from $(k,b)$ as above, that have the following properties:
\begin{enumerate}
    \item {\em Classical Description}: $c$ can be computed efficiently and does not leak any information about $b$.
    \item {\em Correctness}: $(k,b)$ is accepted as a valid opening for $c$, 
    \item {\em Binding}: $(k',b')$, for $b \neq b'$, is rejected as an opening for $c$
\end{enumerate}

\paragraph{State Tomography.} To design such a mechanism, we turn to quantum state tomography. Quantum state tomography is a process that takes as input multiple copies of a quantum state $\sigma$ and outputs a string $u$ that is close (according to some distance metric) to a classical description of the state $\sigma$. In general, tomography procedures require exponential in $d$ number of copies of a state and also run in time exponential in $d$, where $d$ is the dimension of the state. Since the states in question are $O(\log(\secparam))$-output length PRFS states, all the algorithms in the commitment scheme would still be efficient. 
\par Since performing tomography on a PRFS state does not violate its pseudorandomness property, the hiding property is unaffected.  
For achieving correctness and binding properties, we need to also equip the tomography process with a verification algorithm, denoted by ${\sf Verify}$. A natural verification algorithm that can be associated with the tomography procedure is the following: to check if $u$ is a valid classical description of a state $\sigma$, simply run the above tomography procedure on many copies of $\sigma$ and check if the output density matrix is close to $u$. 

\par More formally, we introduce a new tomography called verifiable tomography and we present a generic transformation that converts a specific tomography procedure into one that is also verifiable. We will see how verifiable tomography helps us achieve both correctness and binding. Before we dive into the new notion and understand its properties, we will first discuss the specific tomography procedure that we consider. 


\paragraph{Instantiation.} We develop a tomography procedure based on \cite{Lowe21} that outputs a denisity matrix close (constant distance away) to the input with $1-\mathsf{negl}(\secparam)$ probability. This is an upgrade to the tomography procedure in \cite{Lowe21}, the expected distance of whose output was a constant. To  achieve this, we make use of the fact that if we repeat \cite{Lowe21}'s tomography procedure polynomially many times, most output states cluster around the input at a constant distance with $1-\mathsf{negl}(\secparam)$ probability. We believe this procedure might be of independent interest. Details about this procedure can be found in \Cref{sec:tomography:procedures}.

\paragraph{Verifiable Tomography.} 
Verifiable tomography is a pair of efficient algorithms $(\tomography,\verify)$ associated with a family of channels $\Phi_\secparam$ such that the following holds:
\begin{itemize}
    \item \emph{Same-input correctness:} Let $u_1 = \tomography(\Phi_\secparam (x))$, then $\verify(u_1,\Phi_\secparam (x))$ accepts with high probability.
    \item \emph{Different-input correctness:} Let $u_1 = \tomography(\Phi_\secparam (x_1))$, and $x_1\neq x_2$, then $\verify(u_1,\Phi_\secparam (x_2))$ rejects with high probability.
\end{itemize}
\noindent The family of channels we consider corresponds to the PRFS state generation. That is, $\Phi_{\secparam}(x=(k,i))$ outputs $G(k,i)$. As mentioned earlier, we can generically convert the above instantiation into a verifiable tomography procedure. Let us see how the generic transformation works. 
\par For simplicity, consider the case when the underlying PRFS has perfect state generation, i.e., the output of PRFS is always a pure state. In this case, the verification algorithm is the canonical one that we described earlier: on input $u$ and PRFS key $k$, input $i$, it first performs tomography on many copies of $G(k,i)$ to recover $u'$ and then checks if $u$ is close to $u'$ or not. The same-input correctness follows from the tomography guarantee of the instantiation. To prove the different-input correctness, we use the fact that PRFS outputs are close to uniformly distributed and the following fact \cite[Fact 6.9]{AQY21}: for two  arbitrary $n$-qubit states $\ket{\psi}$ and $\ket{\phi}$, $$\E_{P\xleftarrow{\$}\mathcal{P}_n} \left[\left|\bra{\psi}P\ket{\phi}\right|^{2}\right]=2^{-n}.$$
\noindent Thus, if $x_1 \neq x_2$ then $u_1$ and $u_2$ are most likely going to be far and thus, differing-input correctness property is satisfied as well. 
\par The proofs get more involved when the underlying PRFS does not satisfy perfect state generation. We consider PRFS generators that satisfy recognisable abort; we note that this notion of PRFS can be instantiated from PRS, also with $O(\log(\secparam))$ outpout length, using~\cite{AQY21}. A $(d(\secparam),n(\secparam))$-PRFS generator $G$ has the \emph{strongly recognizable abort property} if its output can be written as follows: $G_\secparam(k,x) = Tr_{\mathcal{A}}\left(\eta \ketbra{0}{0} \otimes \ketbra{\psi}{\psi} + (1-\eta) \ketbra{\bot}{\bot}\right)$, where $\mathcal{A}$ is the register with the first qubit. Moreover, $\ket{\bot}$ is of the form $\ket{1}\ket{\widehat{\bot}}$ for some $n(\secparam)$-qubit state state $\ket{\widehat{\bot}}$ so that, $(\bra{0} \otimes \bra{\psi})(\ket{\bot})=0$. The same-input correctness essentially follows as before; however arguing differing-input correctness property seems more challenging.
\par Consider the following degenerate case: suppose $k$ be a key and $x_1,x_2$ be two inputs such that PRFS on input $(k,x_1)$ and PRFS on $(k,x_2)$ abort with very high probability (say, close to 1). Note that the recognizable abort property does not rule out this degenerate case. Then, it holds that the outputs $u_1 = \tomography(\Phi_\secparam (x_1))$ and $u_2 = \tomography(\Phi_\secparam (x_2))$ are close. $\verify(u_1,\Phi_\secparam (x_2))$ accepts and thus, the different-input correctness is not satisfied. To handle such degenerate cases, we incorporate the following into the verification procedure: on input $(u_1,u_2)$, reject if either $u_1$ or $u_2$ is close to an abort state. Checking whether a classical description of a state is close to an abort state can be done efficiently.

\paragraph{From Verifiable Tomography to Commitments.} Incorporating verifiable tomography into the commitment scheme, we have the following:
\begin{itemize}
    \item The correctness follows from the same-input correctness of the tomography procedure.
    \item The binding property follows from the different-input correctness of the tomography procedure.
    \item The hiding property follows from the fact that the output of a PRFS generator is indistinguishable from Haar random, even given polynomially many copies of the state. 
\end{itemize}

\section{Preliminaries}

We present the preliminaries in this section.
We use $\secparam$ to denote the security parameter. We use the notation ${\sf negl}(\cdot)$ to denote a negligible function. 
\par We refer the reader to~\cite{nielsen_chuang_2010} for a comprehensive reference on the basics of quantum information and quantum computation. We use $I$ to denote the identity operator. We use $\dmx(\cal{H})$ to denote the set of density matrices on a Hilbert space $\cal{H}$.

\paragraph{Haar Measure.} The Haar measure over $\C^d$, denoted by $\Haar(\C^d)$ is the uniform measure over all $d$-dimensional unit vectors. One useful property of the Haar measure is that for all $d$-dimensional unitary matrices $U$, if a random vector $\ket{\psi}$ is distributed according to the Haar measure $\Haar(\C^d)$, then the state $U\ket{\psi}$ is also distributed according to the Haar measure. For notational convenience we write $\Haar_m$ to denote the Haar measure over $m$-qubit space, or $\Haar((\C^2)^{\otimes m})$.

\begin{fact}
\label{fact:haar-avg}
We have
\[
    \E_{\ket{\psi} \leftarrow \Haar(\C^d)} \ketbra{\psi}{\psi} = \frac{I}{d}~.
\]
\end{fact}

\subsection{Distance Metrics and Matrix Norms} 

\paragraph{Trace Distance.} Let $\rho,\sigma \in \dmx(\cal{H})$ be density matrices. We write $\TD(\rho,\sigma)$ to denote the trace distance between them, i.e.,
\[
    \TD(\rho,\sigma) = \frac{1}{2} \| \rho - \sigma \|_1
\]
where $\norm{X}_1 = \Tr(\sqrt{X^\dagger X})$ denotes the trace norm.
We denote $\norm{X} := \sup_{\ket\psi}\{\braket{\psi|X|\psi}\}$ to be the operator norm where the supremum is taken over all unit vectors.
For a vector $x$, we denote its Euclidean norm to be $\norm{x}_2$.

\paragraph{Frobenius Norm.} The Frobenius norm of a matrix $M$ is
$$\| M\|_F = \sqrt{\sum_{i,j} |M_{i,j}|^2} = \sqrt{\Tr\left(M M^{\dagger} \right)},$$
where $M_{i,j}$ denotes the $(i,j)^{th}$ entry of $M$.

\par We state some useful facts about Frobenius norm below. 

\begin{fact}\label{fact:frab:square}
For all matrices $A,B$ we have $\|A-B\|_F^2 = \|A\|_F^2 + \|B\|_F^2 - 2\Tr(A^{\dagger}B)$. 
\end{fact}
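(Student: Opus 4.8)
\textbf{Proof proposal for \Cref{fact:frab:square}.} The plan is to reduce the identity to a direct expansion using the characterization of the Frobenius norm recorded just above the statement, namely $\norm{M}_F^2 = \Tr(MM^\dagger)$. First I would set $M = A - B$ and expand the product inside the trace, exploiting linearity of both matrix multiplication and the trace:
\[
    \norm{A-B}_F^2 = \Tr\!\left((A-B)(A-B)^\dagger\right) = \Tr(AA^\dagger) - \Tr(AB^\dagger) - \Tr(BA^\dagger) + \Tr(BB^\dagger).
\]
The two outer terms are immediately $\Tr(AA^\dagger) = \norm{A}_F^2$ and $\Tr(BB^\dagger) = \norm{B}_F^2$ by the same characterization, so it remains only to handle the two cross terms.

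For the cross terms I would invoke cyclicity of the trace to write $\Tr(BA^\dagger) = \Tr(A^\dagger B)$, and note that $\Tr(AB^\dagger) = \Tr\!\left((A^\dagger B)^\dagger\right) = \overline{\Tr(A^\dagger B)}$. Hence $\Tr(AB^\dagger) + \Tr(BA^\dagger) = \Tr(A^\dagger B) + \overline{\Tr(A^\dagger B)} = 2\,\mathrm{Re}\,\Tr(A^\dagger B)$. In all applications in this paper the matrices of interest are Hermitian (typically differences of density matrices), so $\Tr(A^\dagger B) = \Tr(AB)$ is real and this reduces to $2\Tr(A^\dagger B)$, giving the stated identity; I would add a one-line remark that in full generality the last term should be read as $2\,\mathrm{Re}\,\Tr(A^\dagger B)$.

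There is no real obstacle here — the statement is a quantum-information analogue of the law of cosines and the proof is a two-line computation. The only point requiring a modicum of care is the complex-conjugation bookkeeping in the cross terms, which is why I would be explicit about the real-part subtlety rather than silently assuming Hermiticity.
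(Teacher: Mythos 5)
Your proof is correct; the paper states \Cref{fact:frab:square} without proof, and your direct expansion of $\Tr\!\left((A-B)(A-B)^\dagger\right)$ is exactly the intended one-line argument. Your observation that the cross term is in general $2\,\mathrm{Re}\,\Tr(A^\dagger B)$ is a valid refinement: the identity as literally stated holds only when $\Tr(A^\dagger B)$ is real, which is the case in every application in the paper since the matrices involved there are Hermitian.
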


\begin{fact}\label{fact:frob_inequality}
Let $M_0,M_1$ be density matricies and $\ket{\psi}$ be a pure state such that $\bra{\psi}M_0\ket{\psi}\leq\alpha$ and $\| M_0-M_1\|_{F}^2 \leq \beta$, where $\beta + 2\alpha < 1$ then $$\bra{\psi}M_1\ket{\psi}\leq\alpha+\sqrt{\beta}+\sqrt{\left(2-2\alpha\right)\beta}.$$
\end{fact}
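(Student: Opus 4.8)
The plan is to relate $\bra{\psi}M_1\ket{\psi}$ to $\bra{\psi}M_0\ket{\psi}$ by writing $M_1 = M_0 + (M_1 - M_0)$ and bounding the error term $\bra{\psi}(M_1-M_0)\ket{\psi}$ using the Frobenius bound together with the assumed constraints. The naive approach of passing through the operator norm via $\|M_1-M_0\| \leq \|M_1-M_0\|_F \leq \sqrt{\beta}$ would give $\bra{\psi}M_1\ket{\psi} \leq \alpha + \sqrt{\beta}$, which is \emph{weaker} than the claimed bound (the claim has an extra additive $\sqrt{(2-2\alpha)\beta}$ term), so the right statement to aim for cannot come from that crude step alone — the stated inequality is a \emph{looser} upper bound, meaning we have room, but we should check whether a cleaner argument actually yields it. I suspect the intended proof instead goes through $\|M_1\|_F$: from Fact \ref{fact:frab:square} applied with $A = M_1$, $B = M_0$, we get $\|M_1\|_F^2 = \|M_0\|_F^2 + \beta' - 2\Tr(M_0 M_1) \le \ldots$ wait — more directly, since $M_0, M_1$ are density matrices, $\|M_0\|_F^2 = \Tr(M_0^2) \le 1$ and similarly for $M_1$; combined with $\|M_0 - M_1\|_F^2 \le \beta$ one can extract information about $\Tr(M_0 M_1)$.

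Here is the route I would actually carry out. First, decompose $\ket{\psi}$'s overlap: $\bra{\psi}M_1\ket{\psi} = \Tr(\ketbra{\psi}{\psi} M_1)$. Write $M_1 = M_0 + E$ where $E = M_1 - M_0$ is Hermitian with $\|E\|_F^2 \le \beta$. Then $\bra{\psi}M_1\ket{\psi} = \bra{\psi}M_0\ket{\psi} + \bra{\psi}E\ket{\psi} \le \alpha + \bra{\psi}E\ket{\psi}$. Now I need $\bra{\psi}E\ket{\psi} \le \sqrt{\beta} + \sqrt{(2-2\alpha)\beta}$. Decompose $E = E_+ - E_-$ into positive and negative parts (orthogonal supports), so $\bra{\psi}E\ket{\psi} \le \bra{\psi}E_+\ket{\psi} \le \Tr(E_+) = \Tr(E_-)$ (the last equality since $\Tr E = \Tr M_1 - \Tr M_0 = 0$). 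Also $\|E\|_F^2 = \|E_+\|_F^2 + \|E_-\|_F^2 \le \beta$. To turn the Frobenius bound on $E_+$ into a trace bound, I would use that $M_1 \ge E_+$ is false in general, but $M_1 = M_0 + E_+ - E_- \ge E_+ - E_-$, hence on the support of $E_+$ (where $E_-$ vanishes) we have $M_1 \ge E_+$ restricted there; combined with $\bra{\psi}M_1\ket{\psi}$ being what we want to bound this risks circularity. The cleaner handle: for any PSD matrix $F$ of rank-supported trace $\Tr(F)$ and a pure state $\ket\psi$, $\bra\psi F \ket\psi \le \|F\|$, and $\|F\| \le \|F\|_F$; but also $\bra\psi F\ket\psi \le \sqrt{\Tr(F)\cdot\|F\|_F}$ is false dimensionally. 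So the workable inequality is just $\bra{\psi}E\ket{\psi} \le \bra{\psi}E_+\ket{\psi} \le \|E_+\| \le \|E_+\|_F \le \sqrt\beta$, which only recovers the weak bound.

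Given that, the extra term $\sqrt{(2-2\alpha)\beta}$ strongly suggests a Cauchy–Schwarz step of the form $\bra{\psi}E\ket{\psi} = \Tr(\Pi E)$ where $\Pi = \ketbra{\psi}{\psi}$, split as $\Tr(\Pi (M_1 - M_0)) $ and bounded by $\|\Pi\|_F \cdot \|M_1 - M_0\|_F$ — but $\|\Pi\|_F = 1$, again giving $\sqrt\beta$. The $(2-2\alpha)$ factor looks like it comes from $\|M_0 + M_1 - 2\Pi\|$-type quantity or from $\Tr((M_1 - \Pi)(M_1 - M_0))$ expanded via Fact \ref{fact:frab:square}: indeed $\|M_1 - \Pi\|_F^2 = \Tr(M_1^2) + 1 - 2\bra\psi M_1\ket\psi \le 2 - 2\bra\psi M_1\ket\psi$, so if $\bra\psi M_1\ket\psi \ge \alpha$ we'd get $\|M_1-\Pi\|_F \le \sqrt{2-2\alpha}$ — but that is the wrong direction. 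Therefore the main obstacle, and where I would spend the effort, is reverse-engineering the precise chain: I expect it is $\bra\psi M_1\ket\psi - \bra\psi M_0\ket\psi = \Tr(\Pi(M_1-M_0))$, then insert $M_0 = (M_0 - M_1) + M_1$ cleverly, use $\Tr(\Pi M_1) \le 1$ and $\Tr(M_1(M_1-M_0)) \le \|M_1\|_F\|M_1-M_0\|_F \le \sqrt\beta$ and $\Tr((\Pi - M_1)(M_1 - M_0)) \le \|\Pi - M_1\|_F \sqrt\beta \le \sqrt{(2-2\bra\psi M_1\ket\psi)}\sqrt\beta$, yielding an inequality $y \le \alpha + \sqrt\beta + \sqrt{(2-2y)\beta}$ for $y = \bra\psi M_1\ket\psi$; since the right side is decreasing in $y$ on the relevant range and $y=\alpha+\sqrt\beta+\sqrt{(2-2\alpha)\beta}$ is a fixed point dominating the solution (using $\beta + 2\alpha < 1$ to keep everything real and in range), the claimed bound follows. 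I would verify monotonicity and the fixed-point comparison carefully, as that self-referential step is the only non-routine part.
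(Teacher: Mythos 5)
Your final route is correct, and it is genuinely different from the paper's. The paper works entirely with Frobenius distances to the projector $\ketbra{\psi}{\psi}$: using \Cref{fact:frab:square} it lower-bounds $\|M_0-\ketbra{\psi}{\psi}\|_F$ via $\bra{\psi}M_0\ket{\psi}\le\alpha$, upper-bounds $\|M_1-\ketbra{\psi}{\psi}\|_F$ in terms of $\bra{\psi}M_1\ket{\psi}$ and $\|M_1\|_F\le\|M_0\|_F+\sqrt{\beta}$, chains these through the triangle inequality $\|M_0-\ketbra{\psi}{\psi}\|_F\le\|M_1-\ketbra{\psi}{\psi}\|_F+\sqrt{\beta}$, and then solves the resulting scalar inequality directly for $\bra{\psi}M_1\ket{\psi}$ (the hypothesis $\beta+2\alpha<1$ is what keeps the quantity being squared nonnegative in that rearrangement). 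Your version instead writes $\Tr\bigl(\ketbra{\psi}{\psi}(M_1-M_0)\bigr)=\Tr\bigl(M_1(M_1-M_0)\bigr)+\Tr\bigl((\ketbra{\psi}{\psi}-M_1)(M_1-M_0)\bigr)$ and applies Cauchy--Schwarz to each term, arriving at the self-referential inequality $y\le\alpha+\sqrt{\beta}+\sqrt{(2-2y)\beta}$ for $y=\bra{\psi}M_1\ket{\psi}$. The closing step you flag does go through: if $y<\alpha$ the claim is trivial, and if $y\ge\alpha$ then monotonicity of the right-hand side gives $y\le f(y)\le f(\alpha)$, which is exactly the stated bound; note your route never actually needs $\beta+2\alpha<1$. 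Both arguments rest on the same identity $\|M_1-\ketbra{\psi}{\psi}\|_F^2=1+\|M_1\|_F^2-2\bra{\psi}M_1\ket{\psi}$, just packaged differently. One correction: the ``naive'' bound you discarded, $\bra{\psi}M_1\ket{\psi}\le\alpha+\sqrt{\beta}$, is a \emph{smaller} upper bound and hence a \emph{stronger} statement than the fact, not a weaker one; since $M_1-M_0$ is Hermitian, $\bra{\psi}(M_1-M_0)\ket{\psi}\le\lambda_{\max}(M_1-M_0)\le\|M_1-M_0\|_F\le\sqrt{\beta}$, so that one-line argument already implies the fact with room to spare, and the detour was unnecessary.
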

\begin{proof}
From~\cref{fact:frab:square}, we have the following: 
\begin{align*}
    \|M_0-\ketbra{\psi}{\psi}\|_F &= \sqrt{\|M_0\|_F^2 + \|\ketbra{\psi}{\psi}\|_F^2 - 2\Tr(M_0^{\dagger}\ketbra{\psi}{\psi})} \\
    &=\sqrt{\|M_0\|_F^2+1-2\bra{\psi}M_0\ket{\psi}}\\
    &\geq \sqrt{\|M_0\|_F^2+1-2\alpha}.
    \end{align*}
By triangle inequality, we know $$\|M_1\|_F\leq\|M_0\|_F+\|M_0-M_1\|_F\leq ||M_0||_F+\sqrt{\beta}.$$
Similarly by~\cref{fact:frab:square}, $$\|M_1-\ketbra{\psi}{\psi}\|_F = \sqrt{1+\|M_1\|_F^2-2\bra{\psi}M_1\ket{\psi}}\leq \sqrt{1+\left(\|M_0\|_F+\sqrt{\beta}\right)^2-2\bra{\psi}M_1\ket{\psi}}.$$
By triangle inequality, we know $\|M_0-\ketbra{\psi}{\psi}\|_{F}\leq \|M_1-\ketbra{\psi}{\psi}\|_{F} + \|M_0-M_1\|_{F}.$
Hence, $$\sqrt{1+\|M_0\|_F^2-2\alpha} \leq \sqrt{1+\left(\|M_0\|_F+\sqrt{\beta}\right)^2-2\bra{\psi}M_1\ket{\psi}}+\sqrt{\beta}.$$ By rearranging the terms, we get $$\bra{\psi}M_1\ket{\psi}\leq\alpha+\|M_0\|_F^2\sqrt{\beta}+\sqrt{\left(1+\|M_0\|_F^2-2\alpha\right)\beta}\leq \alpha+\sqrt{\beta}+\sqrt{\left(2-2\alpha\right)\beta}.$$ 
\end{proof}

\begin{fact}
\label{fact:frobenius:haar}
For any $0 \leq \eps \leq 1$, $$\prob_{\ket{\psi_1},\ket{\psi_2} \leftarrow \Haar_n}\left[ \|\ketbra{\psi_1}{\psi_1} - \ketbra{\psi_2}{\psi_2} \|_F^2 \leq  \eps \right] \leq \frac{1}{e^{2^n(1 - \frac{\eps}{2})}}.$$
\end{fact}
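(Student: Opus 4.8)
The plan is to translate the event about the Frobenius distance into an elementary statement about the overlap $|\ip{\psi_1}{\psi_2}|^2$ and then invoke the (standard) distribution of this overlap under the Haar measure.

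First I would observe that since $\ketbra{\psi_1}{\psi_1}$ and $\ketbra{\psi_2}{\psi_2}$ are rank-one projectors, each has unit Frobenius norm and $\Tr\big(\ketbra{\psi_1}{\psi_1}\ketbra{\psi_2}{\psi_2}\big) = |\ip{\psi_1}{\psi_2}|^2$; feeding these into \Cref{fact:frab:square} gives $\|\ketbra{\psi_1}{\psi_1}-\ketbra{\psi_2}{\psi_2}\|_F^2 = 2 - 2|\ip{\psi_1}{\psi_2}|^2$. Hence the event $\{\|\ketbra{\psi_1}{\psi_1}-\ketbra{\psi_2}{\psi_2}\|_F^2 \le \eps\}$ is exactly the event $\{|\ip{\psi_1}{\psi_2}|^2 \ge 1-\eps/2\}$, and it suffices to upper bound $\Pr[|\ip{\psi_1}{\psi_2}|^2 \ge 1-\eps/2]$.

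Next I would invoke unitary invariance of the Haar measure: conditioned on any value of $\ket{\psi_1}$, the state $\ket{\psi_2}$ is still Haar-random, so $|\ip{\psi_1}{\psi_2}|^2$ has the same law as $X := |\ip{0}{\psi}|^2$ for a fixed basis state $\ket 0$ and $\ket\psi \leftarrow \Haar_n$. The distribution of $X$ is well known: representing a Haar-random vector in $\C^{2^n}$ as $2^n$ i.i.d.\ standard complex Gaussians rescaled to unit length exhibits $X$ as a ratio $|g_1|^2/\sum_j |g_j|^2 \sim \mathrm{Beta}(1, 2^n-1)$, equivalently $\Pr[X \ge s] = (1-s)^{2^n-1}$ for all $s \in [0,1]$ (this tail formula, which is all that is needed, also follows from a direct spherical-cap volume computation). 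Taking $s = 1-\eps/2$ gives $\Pr[X \ge 1-\eps/2] = (\eps/2)^{2^n-1}$.

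Finally, to reach the stated form I would use $\eps/2 = 1-(1-\eps/2) \le e^{-(1-\eps/2)}$ and raise to the power $2^n - 1$, obtaining $(\eps/2)^{2^n-1} \le e^{-(2^n-1)(1-\eps/2)}$; a short elementary estimate then upgrades the constant in the exponent from $2^n-1$ to $2^n$ (this last step uses $2^n \ge 4$, which holds in all regimes of interest). I do not anticipate a genuine obstacle: the only ingredient that is not a one-line manipulation is recalling the law of $|\ip{0}{\psi}|^2$, and the only point requiring a little care is pinning down the exponent constant in the final inequality.
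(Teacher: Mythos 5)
Your proof is correct and its first half is identical to the paper's: both use \Cref{fact:frab:square} to rewrite $\|\ketbra{\psi_1}{\psi_1}-\ketbra{\psi_2}{\psi_2}\|_F^2 = 2-2|\ip{\psi_1}{\psi_2}|^2$ and reduce to bounding $\Pr[|\ip{\psi_1}{\psi_2}|^2 \ge 1-\eps/2]$. Where you diverge is in how that tail is bounded: the paper black-boxes it by citing Equation 14 of \cite{BHH16}, whereas you compute the exact law of the overlap ($\mathrm{Beta}(1,2^n-1)$, so $\Pr[X\ge s]=(1-s)^{2^n-1}$) and derive the exponential bound elementarily. Your route is self-contained and in fact sharper — it gives the exact value $(\eps/2)^{2^n-1}$ — and it exposes a genuine fine point: the passage from $(\eps/2)^{2^n-1}$ to $e^{-2^n(1-\eps/2)}$ really does need $2^n\ge 4$ (for $n=1$, $\eps=1$ the overlap is uniform on $[0,1]$ and the true probability $1/2$ exceeds $1/e$, so the Fact as stated fails there), which you correctly flag. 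One small caution on the write-up: your intermediate inequality $(\eps/2)^{2^n-1}\le e^{-(2^n-1)(1-\eps/2)}$ cannot itself be "upgraded" to the target, since $e^{-(2^n-1)(1-\eps/2)} > e^{-2^n(1-\eps/2)}$; the final step must go directly from $(\eps/2)^{2^n-1}$, e.g.\ via $\ln(2/\eps)\ge 2\ln 2\,(1-\eps/2)$ for $\eps\le 1$ together with $(2^n-1)\cdot 2\ln 2\ge 2^n$ when $2^n\ge 4$. With that step made explicit, the argument is complete.
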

\begin{proof}
From~\Cref{fact:frab:square},
\begin{eqnarray*}
\|\ketbra{\psi_1}{\psi_1} - \ketbra{\psi_2}{\psi_2} \|_F^2 & = & \|\ketbra{\psi_1}{\psi_1}\|_F^2 +  \|\ketbra{\psi_2}{\psi_2} \|_F^2 - 2\Tr\left( \ketbra{\psi_1}{\psi_1} \ketbra{\psi_2}{\psi_2} \right) \\
& = & 2 - 2|\ip{\psi_1}{\psi_2}|^2
\end{eqnarray*}
\par Thus, we have the following: 
\begin{eqnarray*}
\prob_{\ket{\psi_1},\ket{\psi_2} \leftarrow \Haar_n}\left[ \|\ketbra{\psi_1}{\psi_1} - \ketbra{\psi_2}{\psi_2} \|_F^2 \leq  \eps \right] & = & \prob_{\ket{\psi_1},\ket{\psi_2} \leftarrow \Haar_n}\left[ |\ip{\psi_1}{\psi_2}|^2 \geq  1 - \frac{\eps}{2} \right] \\
& \leq & \frac{1}{e^{2^n(1 - \frac{\eps}{2})}},
\end{eqnarray*}
where the last inequality was shown in~\cite{BHH16} (Equation 14). 
\end{proof}




\subsection{Quantum Algorithms}
\label{sec:algorithms}

A quantum algorithm $A$ is a family of generalized quantum circuits $\{A_\lambda\}_{\lambda \in \N}$ over a discrete universal gate set (such as $\{ CNOT, H, T \}$). By generalized, we mean that such circuits can have a subset of input qubits that are designated to be initialized in the zero state, and a subset of output qubits that are designated to be traced out at the end of the computation. Thus a generalized quantum circuit $A_\lambda$ corresponds to a \emph{quantum channel}, which is a is a completely positive trace-preserving (CPTP) map. When we write $A_\lambda(\rho)$ for some density matrix $\rho$, we mean the output of the generalized circuit $A_\lambda$ on input $\rho$. If we only take the quantum gates of $A_\lambda$ and ignore the subset of input/output qubits that are initialized to zeroes/traced out, then we get the \emph{unitary part} of $A_\lambda$, which corresponds to a unitary operator which we denote by $\hat{A}_\lambda$. The \emph{size} of a generalized quantum circuit is the number of gates in it, plus the number of input and output qubits.

We say that $A = \{A_\lambda\}_\lambda$ is a \emph{quantum polynomial-time (QPT) algorithm} if there exists a polynomial $p$ such that the size of each circuit $A_\lambda$ is at most $p(\lambda)$. Furthermore we say that $A$ is \emph{uniform} if there exists a deterministic polynomial-time Turing machine $M$ that on input $1^n$ outputs the description of $A_\lambda$. 

We also define the notion of a \emph{non-uniform} QPT algorithm $A$ that consists of a family $\{(A_\lambda,\rho_\lambda) \}_\lambda$ where $\{A_\lambda\}_\lambda$ is a polynomial-size family of circuits (not necessarily uniformly generated), and for each $\lambda$ there is additionally a subset of input qubits of $A_\lambda$ that are designated to be initialized with the density matrix $\rho_\lambda$ of polynomial length. This is intended to model non-uniform quantum adversaries who may receive quantum states as advice.
Nevertheless, the reductions we show in this work are all uniform.

The notation we use to describe the inputs/outputs of quantum algorithms will largely mimic what is used in the classical cryptography literature. For example, for a state generator algorithm $G$, we write $G_\lambda(k)$ to denote running the generalized quantum circuit $G_\lambda$ on input $\ketbra{k}{k}$, which outputs a state $\rho_k$. 

Ultimately, all inputs to a quantum circuit are density matrices. However, we mix-and-match between classical, pure state, and density matrix notation; for example, we may write $A_\lambda(k,\ket{\theta},\rho)$ to denote running the circuit $A_\lambda$ on input $\ketbra{k}{k} \otimes \ketbra{\theta}{\theta} \otimes \rho$. In general, we will not explain all the input and output sizes of every quantum circuit in excruciating detail; we will implicitly assume that a quantum circuit in question has the appropriate number of input and output qubits as required by context.

\subsection{Pseudorandomness Notions}
\label{sec:prs_notions}
\noindent Next, we recall the different notions of pseudorandomness. First, in~\Cref{sec:prf}, we recall (classical) pseudorandom functions (prfs) and consider two notions of security associated with it. Then in~\Cref{sec:prs}, we define pseudorandom quantum state (PRS) generators, which are a quantum analogue of pseudorandom generators (PRGs). Finally in~\Cref{sec:prfs}, we define pseudorandom function-like quantum state (PRFS) generators, which are a quantum analogue of pseudorandom functions. To make it less confusing to the reader, we use the abbreviation ``prfs'' (small letters) for classical pseudorandom functions and ``PRFS'' (all caps) for pseudorandom function-like states.

\subsubsection{Pseudorandom Functions}
\label{sec:prf}
We present two security notions of pseudorandom functions. First, we consider the notion of post-quantum security, defined below. 

\begin{definition}[Post-quantum pseudorandom functions]
\label{def:pqprfs}
\label{def:aprf}
We say that a deterministic polynomial-time algorithm $F:\{0,1\}^{\secparam} \times \{0,1\}^{d(\secparam)} \rightarrow \{0,1\}^{n(\secparam)}$ is a {\em post-quantum secure pseudorandom function (pq-prf)} if for all QPT (non-uniform) distinguishers $A=(A_{\secparam},\rho_{\secparam})$ there exists a negligible function $\eps(\cdot)$ such that the following holds:
 \[
    \left|\Pr_{k \leftarrow \{0,1\}^{\secparam} }\left[A_\lambda^{{\cal O}_{\sf prf}(k,\cdot)}(\rho_\lambda) = 1\right] - \Pr_{{\cal O}_{\sf Rand}}\left[A_\lambda^{{\cal O}_{\sf Rand}(\cdot)}(\rho_\lambda) = 1\right]\right| \le \eps(\lambda),
  \]
  where:
  \begin{itemize}
      \item ${\cal O}_{\sf prf}(k,\cdot)$, modeled as a classical algorithm, on input $x \in \{0,1\}^{d(\secparam)}$,  outputs $F(k,x)$.
      \item ${\cal O}_{\sf Rand}(\cdot)$, modeled as a classical algorithm, on input $x \in \{0,1\}^{d(\secparam)}$, outputs $y_x$, where $y_x \xleftarrow{} \{0,1\}^{n(\secparam)}$. 
  \end{itemize}
Moreover, the adversary $A_{\secparam}$ only has classical access to ${\cal O}_{\sf prf}(k,\cdot)$ and ${\cal O}_{\sf Rand}(\cdot)$. That is, any query made to the oracle is measured in the computational basis. 
\par We also say that $F$ is a $(d(\lambda),n(\lambda))$-pq-prf to succinctly indicate that its input length is $d(\lambda)$ and its output length is $n(\lambda)$.
\end{definition}

\noindent Next, we consider the quantum-query security, as considered by Zhandry~\cite{Zhandry12}. In this security notion, the adversary has superposition access to either ${\cal O}_{\sf prf}$ or ${\cal O}_{\sf Rand}$. By definition, quantum-query security implies post-quantum security. 
\par Unlike all the other pseudorandom notions considered in this section, we are going to use a different convention and allow the key length to be a polynomial in $\secparam$, instead of it being just $\secparam$. We also parameterize the advantage of the adversary. The motivation behind these changes in the definition will become clear in~\Cref{sec:cons:qrfs}. 

\begin{definition}[Quantum-query secure pseudorandom functions]
\label{def:qqprfs}
We say that a deterministic polynomial-time algorithm $F:\{0,1\}^{\ell(\secparam)} \times \{0,1\}^{d(\secparam)} \rightarrow \{0,1\}^{n(\secparam)}$ is a {\em quantum-query $\varepsilon$-secure pseudorandom function (qprf)} if for all QPT (non-uniform) distinguishers $A=(A_{\secparam},\rho_{\secparam})$ there exists a function $\eps(\cdot)$ such that the following holds:
 \[
    \left|\Pr_{k \leftarrow \{0,1\}^{\ell(\secparam}) }\left[A_\lambda^{\ket{{\cal O}_{\sf prf}(k,\cdot)}}(\rho_\lambda) = 1\right] - \Pr_{{\cal O}_{\sf Rand}}\left[A_\lambda^{\ket{{\cal O}_{\sf Rand}(\cdot)}}(\rho_\lambda) = 1\right]\right| \le \eps(\lambda),
  \]
  where:
\begin{itemize}
    \item ${\cal O}_{\sf prf}(k,\cdot)$ on input a $(d+n)$-qubit state on registers ${\bf X}$ (first $d$ qubits) and ${\bf Y}$, applies an $(n+d)$-qubit unitary $U$ described as follows: $U\ket{x}\ket{a} = \ket{x}\ket{a \oplus F(k,x)}$. It sends back the registers ${\bf X}$ and ${\bf Y}$. 
    \item ${\cal O}_{\sf Rand}(\cdot)$ on input a $(d+n)$-qubit state on registers ${\bf X}$ (first $d$ qubits) and ${\bf Y}$, applies an $(n+d)$-qubit unitary $R$ described as follows: $R \ket{x}\ket{a} = \ket{x}\ket{a \oplus y_x}$, where $y_x \leftarrow \{0,1\}^{n(\secparam)}$. It sends back the registers ${\bf X}$ and ${\bf Y}$.   
\end{itemize}
Moreover, $A_{\secparam}$ has superposition access to ${\cal O}_{\sf prf}(k,\cdot)$ and ${\cal O}_{\sf Rand}(\cdot)$. We denote the fact that $A_{\secparam}$ has quantum access to an oracle ${\cal O}$ by $A_{\secparam}^{\ket{{\cal O}}}$. 
\par We also say that $F$ is a $(\ell(\secparam),d(\lambda),n(\lambda),\eps)$-qprf to succinctly indicate that its input length is $d(\lambda)$ and its output length is $n(\lambda)$. When $\ell(\secparam)=\secparam$, we drop $\ell(\secparam)$ from the notation. Similarly, when $\eps(\secparam)$ can be any negligible function, we drop $\eps(\secparam)$ from the notation.    
\end{definition}

\noindent Zhandry~\cite{Zhandry12} presented a construction of quantum-query secure pseudorandom functions from one-way functions. 

\begin{lemma}[Zhandry~\cite{Zhandry12}]
Assuming post-quantum one-way functions, there exists quantum-query secure pseudorandom functions.
\end{lemma}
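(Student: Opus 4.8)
The plan is to chain together three known reductions, each of which survives the passage to quantum adversaries, with the only genuinely quantum ingredient being Zhandry's analysis of the GGM construction. First, from a post-quantum one-way function, build a length-doubling PRG $G\colon\{0,1\}^\secparam\to\{0,1\}^{2\secparam}$ via the H{\aa}stad--Impagliazzo--Levin--Luby construction~\cite{haastad1999pseudorandom}. The key observation is that the HILL security reduction (together with Goldreich--Levin and the usual next-bit/hybrid arguments) is a \emph{classical, black-box} reduction that invokes a given distinguisher as a subroutine; feeding it a QPT distinguisher yields a QPT inverter, so post-quantum one-wayness gives a PRG indistinguishable from uniform against all QPT distinguishers. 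Then apply the GGM tree~\cite{goldreich1986construct} to $G$ to obtain $F(k,x_1\cdots x_d)=G_{x_d}(\cdots G_{x_1}(k)\cdots)$, where $G_0,G_1$ are the two halves of $G$'s output (iterating and truncating to reach output length $n(\secparam)$); everything here is uniform and polynomial-time, so it remains only to prove quantum-query security.

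I would use the standard GGM hybrid. For $0\le i\le d$, let $\hybrid_i$ be the oracle that on input $x_1\cdots x_d$ first evaluates a truly random $h_i\colon\{0,1\}^i\to\{0,1\}^\secparam$ on the prefix $x_1\cdots x_i$ and then applies the residual tree $G_{x_d}\circ\cdots\circ G_{x_{i+1}}$ to $h_i(x_1\cdots x_i)$. Then $\hybrid_0=\mathcal{O}_{\sf prf}(k,\cdot)$ for uniform $k$ and $\hybrid_d=\mathcal{O}_{\sf Rand}$, while $\hybrid_i$ and $\hybrid_{i+1}$ differ only in how level $i{+}1$ is produced: $x'\mapsto G(h_i(x'))$ for a fresh random $h_i$, versus a fresh random function $\{0,1\}^i\to\{0,1\}^{2\secparam}$. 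A QPT distinguisher for $\hybrid_i$ versus $\hybrid_{i+1}$ converts, with no blow-up in query count, into a QPT distinguisher for the two ``level oracles'' $O_G\colon x'\mapsto G(s_{x'})$ (the $s_{x'}$ i.i.d.\ uniform) and $O_R\colon x'\mapsto r_{x'}$ (the $r_{x'}$ i.i.d.\ uniform in $\{0,1\}^{2\secparam}$): the reduction answers each PRF query with one query to its level oracle followed by the fixed, efficiently computable residual tree $G_{x_d}\circ\cdots\circ G_{x_{i+2}}$ applied to the selected half.

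The crux is proving $O_G$ and $O_R$ indistinguishable to any $q$-query QPT algorithm, and this is where the main obstacle lies: unlike the classical proof, a single superposition query touches all $2^i$ inputs at once, so one cannot claim that only polynomially many PRG instances are in play. I would resolve this with Zhandry's small-range distribution lemma~\cite{Zhandry12}: for any distribution $D$ on $Y$, a $q$-query quantum algorithm distinguishes an oracle with i.i.d.\ $D$-values from a ``small-range'' oracle --- draw $y_1,\dots,y_r\sim D$ and a uniform $P\colon X\to[r]$, and output $x\mapsto y_{P(x)}$ --- with advantage only $O(q^3/r)$. Applying this with $D$ the output distribution of $G$ (for $O_G$) and with $D$ uniform on $\{0,1\}^{2\secparam}$ (for $O_R$), both level oracles become $O(q^3/r)$-close to small-range versions depending on only $r$ underlying samples. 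Choosing $r$ to be a large polynomial, a classical-style hybrid over these $r$ slots reduces distinguishing the two small-range oracles to distinguishing one sample $G(U_\secparam)$ from $U_{2\secparam}$: in step $j$ the reduction plants its PRG challenge in slot $j$, fills earlier slots with uniform strings and later slots with evaluations of $G$ on fresh seeds, samples $P$, and simulates the $q$-query adversary on the now fully specified function, at a cost of $r\cdot{\sf negl}(\secparam)$.

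Taking $r=q^4$ makes the total advantage $O(q^3/r)+r\cdot{\sf negl}(\secparam)$ negligible, and summing over the $d(\secparam)+1$ GGM hybrids yields that no QPT (non-uniform) adversary distinguishes $\mathcal{O}_{\sf prf}(k,\cdot)$ from $\mathcal{O}_{\sf Rand}$ under superposition queries with more than negligible advantage --- exactly quantum-query security in the sense of~\Cref{def:qqprfs} (with $\ell(\secparam)=\secparam$ and negligible $\eps$). All the real work is in the level-oracle lemma; the two compression steps, the slot-hybrid, and the GGM hybrid are routine once it is in hand.
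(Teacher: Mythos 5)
The paper offers no proof of this lemma; it is imported verbatim from Zhandry~\cite{Zhandry12}, and your proposal is essentially a reconstruction of Zhandry's own argument: OWF $\Rightarrow$ PRG via a straight-line classical reduction that tolerates quantum distinguishers, GGM on top of the PRG, a hybrid over tree levels, and the small-range-distribution theorem to collapse each level oracle to polynomially many underlying samples. The architecture is right, and the level-oracle lemma you isolate is indeed where all of the genuinely quantum content lives.

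Two steps need repair before this is a complete proof. First, efficiency of the slot hybrid: to plant the PRG challenge in slot $j$ you must simulate, under superposition queries, the small-range oracle $x' \mapsto y_{P(x')}$ for a uniformly random $P\colon\{0,1\}^i\to[r]$ on an exponentially large domain; ``samples $P$'' is not something a QPT reduction can do. The standard fix is to replace $P$ (and any truly random function the reduction must implement itself) by a $2q$-wise independent function, which is perfectly indistinguishable to a $q$-query algorithm (\Cref{lem:qwise}); without this the algorithm breaking the PRG is not polynomial-time. Second, your parameter choice does not deliver the stated conclusion: with $r=q^4$ the small-range loss $O(q^3/r)$ is $O(1/q)$, an inverse polynomial, not negligible. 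The repair is the usual contrapositive: assume a distinguisher with non-negligible advantage $\alpha$, choose the polynomial $r=\Theta(q^3/\alpha)$ so that the two small-range substitutions cost at most $\alpha/2$ in total, and conclude that some level and some slot yield a PRG distinguisher with advantage $\Omega(\alpha/(dr))$, still non-negligible --- exactly the way this paper sets $r$ in its own QAPRFS security proofs. With these two fixes the argument goes through and matches Zhandry's.
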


\paragraph{Useful Lemma.} We will use the following lemma due to Zhandry~\cite{C:Zhandry12}. The lemma states that any $q$-query algorithm cannot distinguish (quantum) oracle access to a random function versus a $2q$-wise independent hash function. We restate the lemma using our notation.  

\begin{lemma}[{\cite[Theorem 3.1]{C:Zhandry12}}]
\label{lem:qwise}
Let $A$ be a $q$-query algorithm. Then, for any $d,n \in \mathbb{N}$, every $2q$-wise independent hash function $H:\{0,1\}^{\ell(q)} \times \{0,1\}^{d} \rightarrow \{0,1\}^{n}$ satisfies the following: 
\[
    \left|\Pr_{k \leftarrow \{0,1\}^{\ell(q)} }\left[A_\lambda^{\ket{{\cal O}_{\sf H}(k,\cdot)}}(\rho_\lambda) = 1\right] - \Pr_{{\cal O}_{\sf Rand}}\left[A_\lambda^{\ket{{\cal O}_{\sf Rand}(\cdot)}}(\rho_\lambda) = 1\right]\right| = 0,
  \]
  where ${\cal O}_{\sf Rand}$ is as defined in~\Cref{def:qqprfs} and ${\cal O}_{\sf H}$ is defined similarly to ${\cal O}_{\sf prf}$ except that the unitary $U$ uses $H$ instead of $F$. 
\end{lemma}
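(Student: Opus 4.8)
The plan is to follow Zhandry's polynomial method: show that the probability $A$ outputs $1$, viewed as a function of the oracle function $f$, is a polynomial of total degree at most $2q$ in a natural family of indicator variables associated with $f$, and then observe that $2q$-wise independence matches exactly the moments of $f$ that such a polynomial can see, forcing the two acceptance probabilities to be literally equal.

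First I would unroll the interaction. Fix the (non-uniform) algorithm $A = (A_\secparam, \rho_\secparam)$ making $q$ queries. Its state just before measurement, when given an oracle for $f : \bit^d \to \bit^n$, is $\hat{A}_q \, O_f \, \hat{A}_{q-1} \, O_f \cdots O_f \, \hat{A}_0$ applied to a fixed initial state $\ket{\phi_0}$ (with $\rho_\secparam$ loaded into the advice register), where each $\hat{A}_i$ is a unitary that does not depend on $f$ and $O_f \ket{x}\ket{a} = \ket{x}\ket{a \oplus f(x)}$ acts on the query registers $\bfX, \bfY$. The crucial elementary observation is that $O_f = \sum_{x \in \bit^d} \ketbra{x}{x}_{\bfX} \otimes \bigl( \sum_{y \in \bit^n} [f(x) = y]\, X^{(y)} \bigr)_{\bfY}$, where $[f(x)=y]$ is a $\{0,1\}$-valued indicator and $X^{(y)}$ is the fixed permutation unitary that XORs the constant $y$ into $\bfY$. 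So each single query is \emph{linear} in the collection of real variables $\{[f(x)=y]\}_{x,y}$.

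Next I would expand the product of $q$ oracle calls. For any computational basis state $\ket{z}$, the amplitude $\bra{z}\bigl(\hat{A}_q O_f \cdots O_f \hat{A}_0\bigr)\ket{\phi_0}$ is then a polynomial of degree at most $q$ in the variables $[f(x)=y]$. Hence the probability $A$ outputs $1$, which is obtained by summing $\lvert\bra{z}(\cdots)\ket{\phi_0}\rvert^2$ over accepting outcomes $z$ (tracing out the remaining registers only composes with a further fixed linear map), is a polynomial of total degree at most $2q$; concretely it can be written as $\sum_{S} c_S \prod_{(x,y)\in S}[f(x)=y]$ where $S$ ranges over multisets of at most $2q$ pairs $(x,y)$ and the coefficients $c_S$ are constants independent of $f$. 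Taking the expectation over the draw of $f$ gives $\E_f[\text{$A$ outputs }1] = \sum_S c_S \, \E_f\bigl[\prod_{(x,y)\in S}[f(x)=y]\bigr]$, and each monomial expectation depends only on the joint distribution of $f$ restricted to the set of distinct first coordinates appearing in $S$, of which there are at most $2q$.

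Finally, I would invoke $2q$-wise independence. For a uniformly random $k$, the tuple $\bigl(H(k,x_1),\dots,H(k,x_m)\bigr)$ on any $m \le 2q$ distinct inputs is uniformly distributed over $(\bit^n)^m$, which is exactly its distribution when $f$ is a uniformly random function. Therefore every monomial expectation above is identical in the two experiments, so the acceptance probabilities are equal and the distinguishing advantage is exactly $0$, as claimed. I expect the main point requiring care to be the bookkeeping in the middle step: making precise that the degree bound genuinely survives the expansion, and handling repeated query inputs (the same $x$ paired with two different $y$'s makes a monomial vanish, and paired with the same $y$ it collapses), so that no surviving monomial ever involves more than $2q$ distinct inputs. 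The indicator variables are not algebraically free — they satisfy $\sum_y [f(x)=y] = 1$ — but since the argument only ever needs an \emph{upper bound} on the number of distinct inputs a monomial touches, this redundancy is harmless.
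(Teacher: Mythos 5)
Your proposal is correct, but note that the paper does not prove \Cref{lem:qwise} at all: it is imported verbatim as \cite[Theorem~3.1]{C:Zhandry12}, so there is no in-paper proof to compare against. What you have written is essentially Zhandry's original argument --- writing $O_f = \sum_x \ketbra{x}{x}\otimes\sum_y [f(x)=y]\,X^{(y)}$ so that each query is linear in the indicators, concluding that the acceptance probability is a polynomial of total degree at most $2q$ in $\{[f(x)=y]\}$, and observing that every surviving monomial's expectation is determined by the joint distribution of $f$ on at most $2q$ distinct inputs, which $2q$-wise independence matches exactly. Your handling of the two side issues (repeated query inputs collapsing or killing monomials, and the non-freeness relation $\sum_y[f(x)=y]=1$ being harmless because only an upper bound on degree is needed) is also sound, as is the reduction of a non-uniform mixed-state advice and intermediate measurements to the unitary-plus-final-measurement picture via purification and deferred measurement.
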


\hnote{The citation of Lemma 2.8 is to Zha15b, but the text above says it's due to Zha12. It should be Zha12, right?}
\hnote{What is $\eps(\secparam)$? It should be a function of $d,q,n$ instead, right?}
\luowen{Fixed}

\subsubsection{Pseudorandom Quantum State Generators}
\label{sec:prs}
We move onto the pseudorandom notions in the quantum world. The notion of pseudorandom states were first introduced by Ji, Liu, and Song in~\cite{ji2018pseudorandom}. We reproduce their definition here:

\begin{definition}[PRS Generator~\cite{ji2018pseudorandom}]
\label{def:vanilla-prs}
We say that a QPT algorithm $G$ is a \emph{pseudorandom state (PRS) generator} if the following holds. 
\begin{enumerate}
    
    \item \textbf{State Generation}. For all $\lambda$ and for all $k \in \{0,1\}^\lambda$, the algorithm $G$ behaves as
    \[
        G_\lambda(k) = \ketbra{\psi_{k}}{\psi_{k}}.
    \]
    for some $n(\lambda)$-qubit pure state $\ket{\psi_k}$. 
    
    
    
    
    \item \textbf{Pseudorandomness}. For all polynomials $t(\cdot)$ and QPT (nonuniform) distinguisher $A$ there exists a negligible function $\eps(\cdot)$ such that for all $\lambda$,  we have
    \[
        \left | \Pr_{k \leftarrow \{0,1\}^\lambda} \left [ A_\lambda (G_\lambda(k)^{\otimes t(\lambda)}) = 1 \right] - \Pr_{\ket{\vartheta} \leftarrow \Haar_{n(\lambda)}} \left [ A_\lambda (\ket{\vartheta}^{\otimes t(\lambda)}) = 1 \right] \right | \leq \eps(\lambda)~.
    \]
\end{enumerate}
We also say that $G$ is a $n(\lambda)$-PRS generator to succinctly indicate that the output length of $G$ is $n(\lambda)$.
\end{definition}

\noindent Ji, Liu, and Song showed that post-quantum one-way functions can be used to construct PRS generators. 

\begin{theorem}[\cite{ji2018pseudorandom,BrakerskiS20}]
  If post-quantum one-way functions exist, then there exist PRS generators for all polynomial output lengths.
\end{theorem}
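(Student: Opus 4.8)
The plan is to instantiate $G$ with the binary-phase construction of \Cref{eq:binary-phase}. By Zhandry's lemma above, post-quantum one-way functions imply a quantum-query secure pseudorandom function; for a target output length $n=n(\secparam)$ that is $\omega(\log\secparam)$ I would take such an $F$ with key space $\{0,1\}^\secparam$, domain $\{0,1\}^{n}$, and one-bit range, and define $G_\secparam(k)=\ketbra{\psi_k}{\psi_k}$ with $\ket{\psi_k}=2^{-n/2}\sum_{x\in\{0,1\}^n}(-1)^{F_k(x)}\ket{x}$. The State Generation property of \Cref{def:vanilla-prs} is immediate, and $G$ is QPT: one prepares $\ket{\psi_k}$ by applying $H^{\otimes n}$ to $\ket{0^n}$ and then the phase oracle $\ket{x}\mapsto(-1)^{F_k(x)}\ket{x}$, which is realized from a single application of the classical-reversible unitary $U_{F_k}:\ket{x}\ket{b}\mapsto\ket{x}\ket{b\oplus F_k(x)}$ together with an ancilla in state $\ket{-}$ (phase kickback).

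For the Pseudorandomness property, fix a polynomial $t$ and a QPT distinguisher $A$, and consider three hybrids: $H_0$ feeds $A$ the state $G_\secparam(k)^{\otimes t}$ for $k\leftarrow\{0,1\}^\secparam$; $H_1$ feeds $A$ the state $\ketbra{\phi_f}{\phi_f}^{\otimes t}$ where $\ket{\phi_f}=2^{-n/2}\sum_x(-1)^{f(x)}\ket{x}$ for a uniformly random function $f:\{0,1\}^n\to\{0,1\}$; and $H_2$ feeds $A$ the state $\ketbra{\vartheta}{\vartheta}^{\otimes t}$ for $\ket{\vartheta}\leftarrow\Haar_{n}$. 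To bound $|\Pr[H_0=1]-\Pr[H_1=1]|$ I would build a reduction $B$ that, given superposition oracle access to either $\mathcal{O}_{\sf prf}(k,\cdot)$ or $\mathcal{O}_{\sf Rand}$ as in \Cref{def:qqprfs}, prepares each of the $t$ copies of the corresponding binary-phase state using one superposition query to the oracle on a uniform-superposition input register with a $\ket{-}$ ancilla, then runs $A$ on the $t$ copies and echoes its output. Since $B$ makes only $t=\poly(\secparam)$ queries and is otherwise efficient, quantum-query security of $F$ gives $|\Pr[H_0=1]-\Pr[H_1=1]|\le\mathsf{negl}(\secparam)$. The step $|\Pr[H_1=1]-\Pr[H_2=1]|\le\mathsf{negl}(\secparam)$ is purely information-theoretic: it asserts that $\E_f\ketbra{\phi_f}{\phi_f}^{\otimes t}$ and $\E_{\ket\vartheta}\ketbra{\vartheta}{\vartheta}^{\otimes t}$ are negligibly close in trace distance, which is exactly the Brakerski--Shmueli theorem (a simpler proof of which is the object of \Cref{sec:simpleranalysis:prs}). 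The triangle inequality then yields the claimed indistinguishability, establishing the theorem for all output lengths $n(\secparam)=\omega(\log\secparam)$.

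It remains to cover the short output lengths $n(\secparam)=O(\log\secparam)$, where the binary-phase ensemble is too coarse to fool a $t$-copy statistical test for large $t$. For this regime I would invoke the separate (statistically secure, hence in particular computationally secure) construction of \cite{BrakerskiS20}, which yields PRS generators for such output lengths unconditionally; combining the two constructions gives PRS generators for every polynomial output length.

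I expect the main obstacle to be the information-theoretic step $H_1\approx H_2$: the two reductions (replacing $F$ by a random function, and importing the short-output case) are routine, but showing that polynomially-many copies of a random binary-phase state lie within negligible trace distance of polynomially-many copies of a Haar-random state is the genuinely nontrivial ingredient, and is precisely the Brakerski--Shmueli result whose streamlined proof appears in \Cref{sec:simpleranalysis:prs}.
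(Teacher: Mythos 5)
Your construction and hybrid argument for output lengths $n(\secparam)=\omega(\log\secparam)$ are correct and coincide with what the paper actually proves: this is exactly \Cref{thm:prfs-to-prs}, whose proof proceeds through the same three steps (PRF $\to$ random function by quantum-query security, then random binary-phase states $\to$ Haar via \Cref{lem:binaryphase} with loss $O(t^2/2^n)$). The statement you were asked to prove is only cited in the paper, but the $\omega(\log\secparam)$ regime of your argument is the paper's own Section \ref{sec:simpleranalysis:prs} argument verbatim.

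The genuine gap is in how you dispose of the remaining output lengths. You propose to cover all $n(\secparam)=O(\log\secparam)$ by the \emph{statistically secure} construction of \cite{BrakerskiS20}, but that construction only exists for $n \le c\log\secparam$ for some fixed $c<1$; the paper's own \Cref{lem:prs-inefficient-dist} shows that \emph{no} PRS generator with $n \ge \log\secparam$ can be statistically secure, so a statistically secure fallback cannot exist in the range you need it. This leaves the entire regime $n(\secparam)=\Theta(\log\secparam)$ (say $n = 5\log\secparam$) uncovered: there the binary-phase bound $O(t^2/2^n)$ is only inverse-polynomial, and this is not merely a weakness of the analysis --- a binary-phase state is a flat superposition over a polynomial-dimensional space, so full tomography on polynomially many copies distinguishes it from Haar with constant advantage. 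Closing the gap requires the \emph{computationally} secure scalable construction of \cite{BrakerskiS20}, which makes the amplitudes (not just the phases) pseudorandom precisely so that polynomially many copies remain close to Haar when $2^n$ is polynomial; with that substitution in the middle regime, your case analysis goes through.
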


\subsubsection{Pseudorandom Function-Like State (PRFS) Generators}
\label{sec:prfs}
In this section, we recall the definition of pseudorandom function-like state (PRFS) generators by Ananth, Qian and Yuen~\cite{AQY21}. PRFS generators generalize PRS generators in two ways: first, in addition to the secret key $k$, the PRFS generator additionally takes a (classical) input $x$. The second way in which this definition generalizes the definition of PRS generators is that the output of the generator need not be a pure state.
\par However, they considered the  weaker selective security definition (stated below) where the adversary needs to choose all the inputs to be queried to the PRFS ahead of time. Later we will introduce the stronger and the more useful definition of adaptive security.

\begin{definition}[Selectively Secure PRFS generator]
\label{def:prfs}
We say that a QPT algorithm $G$ is a (selectively secure) \emph{pseudorandom function-like state (PRFS) generator} if for all polynomials $s(\cdot), t(\cdot)$, QPT (nonuniform) distinguishers $A$ and a family of indices $\left(\{x_1,\ldots, x_{s(\lambda)}\} \subseteq \{0,1\}^{d(\lambda)}\right)_\lambda$, there exists a negligible function $\eps(\cdot)$ such that for all $\lambda$,
    \begin{align*}
        &\Big | \Pr_{k \leftarrow \{0,1\}^\lambda} \left [ A_\lambda( x_1,\ldots,x_{s(\lambda)},G_\lambda(k,x_1)^{\otimes t(\lambda)},\ldots, G_\lambda(k,x_{s(\lambda)})^{\otimes t(\lambda)}) = 1 \right] \\
        & \qquad \qquad - \Pr_{\ket{\vartheta_1}, \ldots,\ket{\vartheta_{s(\lambda)}} \leftarrow \Haar_{n(\lambda)}} \left [ A_\lambda( x_1,\ldots,x_{s(\lambda)}, \ket{\vartheta_1}^{\otimes t(\lambda)},\ldots, \ket{\vartheta_{s(\lambda)}}^{\otimes t(\lambda)}) = 1 \right] \Big | \leq \eps(\lambda)~.
    \end{align*}
We say that $G$ is a $(d(\lambda),n(\lambda))$-PRFS generator to succinctly indicate that its input length is $d(\lambda)$ and its output length is $n(\lambda)$.
\end{definition}

Our notion of security here can be seen as a version of \emph{(classical) selective security}, where the queries to the PRFS generator are fixed before the key is sampled.  

\paragraph{State Generation Guarantees.} Towards capturing a natural class of PRFS generators,~\cite{AQY21} introduced the concept of \emph{recognizable abort}. At a high level, recognizable abort is the property that the output of PRFS can be written as a convex combination of a pure state and a known abort state, denoted by $\ket{\bot}$. In more detail,  the PRFS generator works in two stages. In the first stage it either generates a valid PRFS state $\ket{\psi}$ or it aborts. If it outputs a valid PRFS state then the first qubit is set to $\ket{0}$ and if it aborts, the entire state is set to $\ket{\bot}$. We have the guarantee that $\ket{0}\ket{\psi}$ is orthogonal to $\ket{\bot}$. In the next stage, the PRFS generator traces out the first qubit and outputs the resulting state. Our definition could be useful to capture many generators that don't always succeed in generating the pseudorandom state; for example, Brakerski and Shmueli~\cite{BrakerskiS20} design generators that doesn't always succeed in generating the state. 
\par We formally define the notion of recognizable abort\footnote{We note that~\cite{AQY21} define a slightly weaker definition of recognizable abort. However, the definitions and results considered in~\cite{AQY21} also work with our (stronger) definition of recognizable abort. \luowen{I still don't buy this. If $\ket\bot$ is not a known state, it wouldn't be called ``recognizable''. That is what I meant to write in AQY anyways.} } below. 

\begin{definition}[Recognizable abort]
\label{def:classicalgen:strongrec}
 A $(d(\secparam),n(\secparam))$-PRFS generator $G$ has the \emph{strongly recognizable abort property} if there exists an algorithm $\widehat{G}$ and a special $(n(\secparam)+1)$-qubit state $\ket{\bot}$ such that $G_{\secparam}(k,x)$ has the following form: it takes as input $k \in \bit^\lambda$, $x \in \bit^{d(\secparam)}$ and does the following,
 \begin{itemize}
     \item Compute $\widehat{G}_{\secparam}(k,x)$ to obtain an output of the form $\eta \left(\ketbra{0}{0} \otimes \ketbra{\psi}{\psi}\right) + (1-\eta) \ketbra{\bot}{\bot}$ and moreover, $\ket{\bot}$ is of the form $\ket{1}\ket{\widehat{\bot}}$ for some $n(\secparam)$-qubit state state $\ket{\widehat{\bot}}$. As a consequence, $(\bra{0} \otimes \bra{\psi})(\ket{\bot})=0$. 
     \item Trace out the first bit of $\widehat{G}_{\secparam}(k,x)$ and output the resulting state. 
 \end{itemize}
\end{definition}

\noindent As observed by~\cite{AQY21}, the definition alone does not have any constraint on $\eta$ being close to $1$. The security guarantee of a PRFS generator implies that $\eta$ will be negligibly close to 1 with overwhelming probability over the choice of $k$ \cite[Lemma 3.6]{AQY21}.

\section{Adaptive Security}
\label{sec:adaptive-security}

\noindent The previous work by~\cite{AQY21} only considers PRFS that is selectively secure. That is, the adversary needs to declare the input queries ahead of time. For many applications, selective security is insufficient. For example, in the application of PRFS to secret-key encryption (satisfying multi-message security), the resulting scheme was also only proven to be selectively secure, whereas one could ask for security against adversaries that can make \emph{adaptive} queries to the PRFS generator. Another drawback of the notion considered by~\cite{AQY21} is the assumption that the adversary can make classical queries to the challenger who either returns PRFS states or independent Haar random states, whereas one would ideally prefer security against adversaries that can make \emph{quantum superposition} queries.

\par \hnote{Edited:} In this work, we consider stronger notions of security for PRFS. We strengthen the definitions of~\cite{AQY21} in two ways. First, we allow the the adversary to make adaptive queries to the PRFS oracle, and second, we allow the adversary to make \emph{quantum} queries to the oracle. The oracle model we consider here is slightly different from the usual quantum query model. In the usual model, there is an underlying function $f$ and the oracle is modelled as a unitary acting on two registers, a \emph{query} register $\bfX$ and an \emph{answer} register $\bfY$ mapping basis states $\ket{x}_{\bfX} \otimes \ket{y}_{\bfY}$ to $\ket{x}_{\bfX} \otimes \ket{y \oplus f(x)}_{\bfY}$ (in other words, the function output is XORed with answer register in the standard basis). The query algorithm also acts on the query and answer registers; indeed, it is often useful in quantum algorithms to initialize the answer register to something other than all zeroes. 

In the PRS/PRFS setting, however, there is no underlying classical function: the output of the PRFS generator $G$ could be an entangled pseudorandom state far from any standard basis state; it seems unnatural to XOR the pseudorandom the state with a standard basis state. Instead we consider a model where the query algorithm submits a query register $\bfX$ to the oracle, and the oracle returns the query register $\bfX$ as well as an answer register $\bfY$. If the algorithm submits query $\ket{x}_{\bfX}$, then the joint state register $\bfX \bfY$ after the query is $\ket{x}_{\bfX} \otimes \ket{\psi_x}_{\bfY}$ for some pure state $\ket{\psi_x}$. Each time the algorithm makes a query, the oracle returns a fresh answer register. Thus, the number of qubits that the query algorithm acts on grows with the number of queries.\footnote{Alternatively, one can think of answer registers $\bfY_1,\bfY_2,\ldots$ as being initialized in the zeroes state at the beginning, and the query algorithm is only allowed to act nontrivially on $\bfY_i$ after the $i$'th query.}
\luowen{Should mention the word ``isometry'', which summarizes this whole paragraph} \hnote{Whether it's an isometry depends on whether it's quantum or classical accessible. added a footnote below}

How the oracle behaves when the query algorithm submits a superposition $\sum_x \alpha_x \ket{x}_{\bfX}$ in the query register is a further modeling choice. In the most general setting, the oracle behaves as a unitary on registers $\bfX \bfY$,\footnote{Alternatively, one can think of the oracle as an \emph{isometry} mapping register $\bfX$ to registers $\bfX \bfY$.} and the resulting state of the query and answer registers is $\sum_x \alpha_x \ket{x}_{\bfX} \otimes \ket{\psi_x}_{\bfY}$. That is, queries are answered in superposition. We call such an oracle \emph{quantum-accessible}. 
\luowen{Should also mention at some point the ability to uncompute is not considered in this work and left for future work (but construction wise everything in Appendix A will extend immediately to that setting as well, given that the unitary is properly defined)} \hnote{The way it's described here, it's not clear what it means to invert. Does the algorithm get access to the $\bfY$ register after uncomputing?}

We also consider the case where the queries are forced to be \emph{classical}, which may already be useful for some applications. Here, the oracle is modeled as a channel (instead of a unitary) that first measures the query register in the standard basis before returning the corresponding state $\ket{\psi_x}$. In other words, if the query is $\sum_x \alpha_x \ket{x}_{\bfX}$, then the resulting state becomes the mixed state $\sum_x |\alpha_x|^2 \ketbra{x}{x}_{\bfX} \otimes \ketbra{\psi_x}{\psi_x}_{\bfY}$. This way, the algorithm cannot take advantage of quantum queries -- but it can still make queries adaptively. We call such an oracle \emph{classically-accessible}. 

To distinguish between classical and quantum access to oracles, we write $A^{\cal O}$ to denote a quantum algorithm that has classical access to the oracle $\cal O$, and $A^{\ket{\cal O}}$ to denote a quantum algorithm that has quantum access to the oracle $\cal O$.

\subsection{Classical Access}
\noindent We define adaptively secure PRFS, where the adversary is given \emph{classical access} to the PRFS/Haar-random oracle.

\begin{definition}[Adaptively-Secure PRFS]
\label{def:aprfs}
We say that a QPT algorithm $G$ is an {\em adaptively secure pseudorandom function-like state (APRFS)} generator if for all QPT (non-uniform) distinguishers $A$, there exists a negligible function $\eps$, such that for all $\lambda$, the following holds:
 \[
    \left|\Pr_{k \leftarrow \{0,1\}^{\secparam} }\left[A_\lambda^{{\cal O}_{\sf PRFS}(k,\cdot)}(\rho_\lambda) = 1\right] - \Pr_{{\cal O}_{\sf Haar}}\left[A_\lambda^{{\cal O}_{\sf Haar}(\cdot)}(\rho_\lambda) = 1\right]\right| \le \eps(\lambda),
  \]
  where:
  \begin{itemize}
      \item ${\cal O}_{\sf PRFS}(k,\cdot)$, 
      on input $x \in \{0,1\}^{d(\secparam)}$,  outputs $G_{\secparam}(k,x)$.
      \item ${\cal O}_{\sf Haar}(\cdot)$, 
      on input $x \in \{0,1\}^{d(\secparam)}$, outputs $\ket{\vartheta_x}$, where, for every $y \in \{0,1\}^{d(\secparam)}$,  $\ket{\vartheta_y} \leftarrow \Haar_{n(\lambda)}$. 
  \end{itemize}
Moreover, the adversary $A_{\secparam}$ has classical access to ${\cal O}_{\sf PRFS}(k,\cdot)$ and ${\cal O}_{\sf Haar}(\cdot)$. That is, we can assume without loss of generality that any query made to either oracle is measured in the computational basis. 
\par We say that $G$ is a $(d(\lambda),n(\lambda))$-APRFS generator to succinctly indicate that its input length is $d(\lambda)$ and its output length is $n(\lambda)$.
\end{definition}

\noindent Some remarks are in order. 

\paragraph{Instantiation.} For the case when $d(\secparam)=O(\log(\secparam))$, selectively secure PRFS is equivalent to adaptively secure PRFS. The reason being that we can assume without loss of generality, the selective adversary can query on all possible inputs (there are only polynomially many) and use the outputs to simulate the adaptive adversary. As a consequence of the result that log-input selectively-secure PRFS can be built from PRS~\cite{AQY21}, we obtain the following. 

\begin{lemma}
For $d=O(\log(\secparam))$ and $n=d+\omega(\log \log \secparam)$, assuming the existence of $(d+n)$-PRS, there exists a $(d,n)$-APRFS.   
\end{lemma}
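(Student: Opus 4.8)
The plan is to prove the lemma by reducing an adaptive classically-accessible adversary to the selective adversary guaranteed by the instantiation result of~\cite{AQY21}, using the observation that when $d = O(\log \secparam)$ there are only polynomially many possible inputs $x \in \{0,1\}^{d(\secparam)}$. First I would invoke~\cite{AQY21} to obtain a selectively-secure $(d,n)$-PRFS generator $G$ from the assumed $(d+n)$-PRS, and claim that this same $G$ is already an APRFS generator. The reduction proceeds as follows: given any QPT adaptive distinguisher $A = (A_\secparam, \rho_\secparam)$ that makes (at most polynomially many, say $q(\secparam)$) classical adaptive queries to the oracle, we build a selective distinguisher $B$. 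The selective distinguisher $B$ declares the entire input set $\{x_1,\ldots,x_{s(\secparam)}\} = \{0,1\}^{d(\secparam)}$ ahead of time, where $s(\secparam) = 2^{d(\secparam)} = \poly(\secparam)$. It requests $t(\secparam) = q(\secparam)$ copies of each of the $s(\secparam)$ states. Then $B$ internally simulates $A$: whenever $A$ makes a classical query $x$ (which $B$ can read, since queries are classical and hence measured in the standard basis), $B$ hands back a fresh, previously-unused copy of the state corresponding to index $x$ from its input. Since $A$ makes at most $q(\secparam)$ queries total, $B$ never runs out of copies. Finally $B$ outputs whatever $A$ outputs.

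The correctness of the reduction is then a matter of matching up distributions. In the PRFS world, $B$ receives $G_\secparam(k,x)^{\otimes t}$ for each $x$ with a single shared key $k \leftarrow \{0,1\}^\secparam$, so the view $B$ feeds to $A$ is exactly $A$'s view when interacting with $\mathcal{O}_{\sf PRFS}(k,\cdot)$ (the copies are i.i.d.\ conditioned on $k$, matching the fact that each classical query to $\mathcal{O}_{\sf PRFS}$ returns a fresh evaluation of $G_\secparam(k,x)$). In the Haar world, $B$ receives $\ket{\vartheta_x}^{\otimes t}$ for independently Haar-random $\ket{\vartheta_x}$, one per $x$; this is exactly $A$'s view against $\mathcal{O}_{\sf Haar}(\cdot)$, since that oracle fixes one independent Haar state $\ket{\vartheta_y}$ per input $y$ and returns (a fresh copy of) it on each query. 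Hence $B$'s distinguishing advantage equals $A$'s, and since $B$ is a QPT selective distinguisher, the selective security of $G$ forces this advantage to be negligible, giving the APRFS property. The parameter condition $n = d + \omega(\log\log\secparam)$ is inherited directly from the~\cite{AQY21} instantiation of log-input PRFS from PRS and plays no additional role in the reduction.

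The only subtlety — and the step I would be most careful about — is the handling of multiple queries to the \emph{same} input $x$. A selective PRFS adversary receives a fixed number $t$ of copies per index, and an adaptive adversary could in principle query the same $x$ many times; this is why $B$ must request $t = q(\secparam)$ copies per index rather than one. One must also check that in the classically-accessible model it is legitimate for $B$ to answer a repeated query $x$ with an identical-looking but formally fresh copy: this is fine because $\mathcal{O}_{\sf PRFS}(k,\cdot)$ on input $x$ deterministically (given $k$) produces $G_\secparam(k,x)$ each time, and likewise $\mathcal{O}_{\sf Haar}$ returns the same $\ket{\vartheta_x}$ each time, so supplying $t$ i.i.d.\ copies of the relevant state perfectly simulates both worlds. (If one wanted quantum access, this argument breaks, since a superposition query cannot be simulated from polynomially many fixed copies — which is precisely why the quantum-accessible construction in~\Cref{sec:cons:qrfs} requires a genuinely different approach.) I would close by noting $s(\secparam), t(\secparam), q(\secparam)$ are all polynomial, so $B$ is efficient, completing the reduction.
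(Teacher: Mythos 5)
Your proposal is correct and is essentially the paper's own argument, which is stated in one line: since $d=O(\log\secparam)$ means only polynomially many inputs, a selective adversary can query all of them (with enough copies each) up front and use the answers to simulate any adaptive classical-query adversary, so the lemma follows from the~\cite{AQY21} construction of log-input selectively-secure PRFS from PRS. Your elaboration of the copy-counting ($t=q$ copies per index) and the distribution-matching in both worlds fills in details the paper leaves implicit but does not change the approach.
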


\noindent In the case when $d(\secparam)$ is an arbitrary polynomial in $\secparam$, we present a construction of APRFS from post-quantum one-way functions in~\Cref{sec:classical:aprfs}. 

\paragraph{Test procedure.} It was shown by~\cite{AQY21} that a PRFS admits a Test procedure (See Section 3.3 in~\cite{AQY21}). The goal of a Test procedure is to determine whether the given state is a valid PRFS state or not. Having a Test procedure is useful in applications. For example,~\cite{AQY21} used a Test procedure in the construction of a bit  commitment scheme. We note that the same Test procedure also works for adaptively secure PRFS.

\paragraph{Multiple copies.} In the definition of PRS (\Cref{def:vanilla-prs}) and selectively-secure PRFS (\Cref{def:prfs}), the adversary is allowed to obtain multiple copies of the same pseudorandom (or haar random) quantum state. While we do not explicitly state it, even in~\Cref{def:aprfs}, the adversary can indeed obtain multiple copies of a (pseudorandom or haar random) quantum state. To obtain $t$ copies of the output of $G_{\secparam}(k,x)$ (or $\ket{\vartheta_x}$), the adversary can query the same input $x$, $t$ times, to the oracle ${\cal O}_{\sf PRFS}(k,\cdot)$ (or ${\cal O}_{\sf Haar}(\cdot)$).

\subsection{Quantum Access}
\noindent We further strengthen our notion of adaptively secure PRFS by allowing the adversary to make superposition queries to either ${\cal O}_{\sf PRFS}(k,\cdot)$ or ${\cal O}_{\sf Haar}(\cdot)$. Providing superposition access to the adversary not only makes the definition stronger\footnote{It is stronger in the sense that an algorithm that has quantum query access to the oracle can simulate an algorithm that only has classical query access.} than~\Cref{def:aprfs} but is also arguably more useful for a larger class of applications. To indicate quantum query access, we put the oracle inside the ket notation: $A^{\ket{\cal O}}$ (whereas for classical query access we write $A^{\cal O}$). 

\par We provide the formal definition below. 

\begin{definition}[Quantum-accessible Adaptively-secure PRFS]
\label{def:qaprfs}
We say that a QPT algorithm $G$ is a  quantum-accessible adaptively secure pseudorandom function-like state (QAPRFS) generator if for all QPT (non-uniform) distinguishers $A$ if there exists a negligible function $\eps$, such that for all $\lambda$, the following holds:
 \[
    \left|\Pr_{k \leftarrow \{0,1\}^{\secparam} }\left[A_\lambda^{\ket{{\cal O}_{\sf PRFS}(k,\cdot)}}(\rho_\lambda) = 1\right] - \Pr_{{\cal O}_{\sf Haar}}\left[A_\lambda^{\ket{{\cal O}_{\sf Haar}(\cdot)}}(\rho_\lambda) = 1\right]\right| \le \eps(\lambda),
  \]
  where:
\begin{itemize}
    \item ${\cal O}_{\sf PRFS}(k,\cdot)$, 
    on input a $d$-qubit register ${\bf X}$, does the following: it  applies a \luowen{isometry} channel that controlled on the register ${\bf X}$ containing $x$, it creates and stores  $G_{\secparam}(k,x)$ in a new register ${\bf Y}$.  It outputs the state on the registers ${\bf X}$ and ${\bf Y}$. 
    \item ${\cal O}_{\sf Haar}(\cdot)$, modeled as a  channel, on input a $d$-qubit register ${\bf X}$, does the following: it applies a channel that controlled on the register ${\bf X}$ containing $x$, stores $\ketbra{\vartheta_x}{\vartheta_x}$ in a new register ${\bf Y}$, where $\ket{\vartheta_x}$ is sampled from the Haar distribution. It outputs the state on the registers ${\bf X}$ and ${\bf Y}$. 
\end{itemize}
Moreover, $A_{\secparam}$ has superposition access to ${\cal O}_{\sf PRFS}(k,\cdot)$ and ${\cal O}_{\sf Haar}(\cdot)$. 
\par We say that $G$ is a $(d(\lambda),n(\lambda))$-QAPRFS generator to succinctly indicate that its input length is $d(\lambda)$ and its output length is $n(\lambda)$.
\end{definition}

\noindent We present a construction satisfying the above definition in~\Cref{sec:cons:qrfs}.  
\par Unlike~\Cref{def:aprfs}, it is not without loss of generality that $A_{\secparam}$ can get multiple copies of a quantum state. To illustrate, consider an adversary that submits a state of the form $\sum_{x} \alpha_x \ket{x}$ to the oracle. It then gets back $\sum_{x} \alpha_x \ket{x} \ket{\psi_x}$ (where $\ket{\psi_x}$ is either the output of PRFS\footnote{In this illustration, we are pretending that the PRFS satisfies perfect state generation property. That is, the output of PRFS is always a pure state.} or it is Haar random) instead of $\sum_{x} \alpha_x \ket{x} \ket{\psi_x}^{\otimes t}$, for some polynomial $t$. On the other hand, if the adversary can create multiple copies of $\sum_{x} \alpha_x \ket{x}$, the above definition allows the adversary to obtain $\left( \sum_{x} \alpha_x \ket{x} \ket{\psi_x}\right)^{\otimes t}$ for any  polynomial $t(\cdot)$ of its choice.

\section{Constructions of Pseudorandom Primitives}
We present improved and/or new constructions of pseudorandom primitives. 
\begin{enumerate}
    \item In~\Cref{sec:simpleranalysis:prs}, we present a simpler proof of binary phase PRS. The current known proof of binary phase PRS by Brakerski and Shmueli~\cite{BS19} is arguably more involved.
    \item In~\Cref{sec:classical:aprfs}, we present a construction of APRFS (\Cref{def:aprfs}) from post-quantum one-way functions. Recall that in this definition, the adversary only has classical access to the oracle. 
    \item In~\Cref{sec:cons:qrfs}, we present two constructions of QAPRFS (\Cref{def:qaprfs}) from post-quantum one-way functions. Recall that in this definition, the adversary has quantum access to the oracle. 
\end{enumerate} 

\subsection{Simpler Analysis of Binary Phase PRS}
\label{sec:simpleranalysis:prs}

In this section we give a simpler analysis\footnote{The current analysis presented here making use of binary
type vectors was suggested by Fermi Ma. Our original proof involved a
more direct analysis of the symmetric subspace projector.} of the binary phase PRS construction suggested by~\cite{ji2018pseudorandom} and later analyzed by~\cite{BS19}. 


\begin{theorem}
\label{thm:prfs-to-prs}
Let $F: \{0,1\}^\secparam \times \{0,1\}^n \to \{0,1\}$ is a quantum-query secure PRF (\Cref{def:qqprfs}), then $G: \{0,1\}^\secparam \to \C^{2^n}$ defined by $G(k) = \ket{\psi_k} = 2^{-n/2} \sum_x (-1)^{F(k,x)} \ket{x}$ is a $n$-qubit PRS generator.
\end{theorem}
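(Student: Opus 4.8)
The plan is to proceed by a hybrid argument, moving from the pseudorandom binary-phase states to genuinely random binary-phase states, and then from there to Haar-random states. First I would fix a polynomial $t = t(\secparam)$ and consider the distinguisher $A$ acting on $t$ copies. Define three hybrids: $\hybrid_0$ gives $A$ the state $\ket{\psi_k}^{\otimes t}$ for uniform $k \leftarrow \bit^\secparam$ where $\ket{\psi_k} = 2^{-n/2}\sum_x (-1)^{F(k,x)}\ket{x}$; $\hybrid_1$ gives $A$ the state $\ket{\psi_f}^{\otimes t}$ where $\ket{\psi_f} = 2^{-n/2}\sum_x (-1)^{f(x)}\ket{x}$ for a truly random function $f: \bit^n \to \bit$; and $\hybrid_2$ gives $A$ the state $\ket{\vartheta}^{\otimes t}$ for $\ket{\vartheta} \leftarrow \Haar_n$. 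The theorem follows from $|\Pr[\hybrid_0] - \Pr[\hybrid_2]| \le \eps(\secparam)$, which by the triangle inequality reduces to bounding the two adjacent gaps.

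For $\hybrid_0 \approx \hybrid_1$: the only difference is whether the phases come from $F(k,\cdot)$ for a random key $k$ or from a random function $f$. The subtlety is that $A$ receives $t$ copies of a \emph{state} depending on the whole truth table, not oracle access to the function; but since $n$ is such that $2^n$ is polynomial (this is implicit — wait, actually $n$ here is the output qubit count and the theorem as stated allows general $n$; I should be careful). Assuming $n$ is such that one can efficiently prepare $\ket{\psi_k}$ given oracle access to $F(k,\cdot)$ — indeed one Hadamard layer plus one phase query to $F(k,\cdot)$ suffices, and this query can be made in superposition — a reduction $B$ with quantum oracle access to either $F(k,\cdot)$ or a random function can prepare all $t$ copies (using $t$ phase queries, or one query per copy) and run $A$. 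Thus any distinguishing advantage between $\hybrid_0$ and $\hybrid_1$ translates to a quantum-query distinguishing advantage against $F$, contradicting \Cref{def:qqprfs}. I should note here that only phase queries $\ket{x}\mapsto(-1)^{F(k,x)}\ket{x}$ are needed, which are standard-form queries conjugated by Hadamards, so this is legitimately a $t$-query quantum algorithm against the PRF.

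For $\hybrid_1 \approx \hybrid_2$: this is the information-theoretic heart, and it is precisely the Brakerski--Shmueli statement quoted in the introduction — that $t$ copies of a random binary-phase state are statistically indistinguishable from $t$ copies of a Haar-random state. This is the step I expect to be the main obstacle, and it is exactly what \Cref{sec:simpleranalysis:prs} is devoted to proving with the ``binary type vector'' machinery. The approach is to compute the $t$-copy density matrices $\E_f \ketbra{\psi_f}{\psi_f}^{\otimes t}$ and $\E_\vartheta \ketbra{\vartheta}{\vartheta}^{\otimes t}$ explicitly. The Haar-random one is $\binom{2^n + t - 1}{t}^{-1}$ times the projector onto the symmetric subspace of $(\C^{2^n})^{\otimes t}$. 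For the binary-phase one, expanding $\ketbra{\psi_f}{\psi_f}^{\otimes t} = 2^{-nt}\sum_{x_1,\dots,x_t,y_1,\dots,y_t}(-1)^{\sum_i f(x_i) + f(y_i)}\ket{x_1\cdots x_t}\bra{y_1\cdots y_t}$ and taking $\E_f$, the phase survives only when each function value appears an even number of times among $\{x_1,\dots,x_t,y_1,\dots,y_t\}$; one then classifies such multiset-matchings and shows the resulting operator equals the symmetric-subspace projector up to terms of relative size $O(t^2/2^n)$, which is negligible when $2^n$ grows faster than any polynomial — i.e., when $n = \omega(\log\secparam)$. I would lean on \Cref{fact:haar-avg} and a careful combinatorial count (the ``binary type'' reformulation being the device that makes the bookkeeping clean), and conclude the trace distance between the two $t$-copy states is negligible, which completes the hybrid chain and hence the theorem.
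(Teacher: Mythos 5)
Your proposal is correct and follows essentially the same route as the paper: the computational step (PRF phases to truly random phases) is handled by a reduction that prepares the $t$ copies via superposition phase queries, and the statistical step is exactly the paper's \Cref{lem:binaryphase}, proved by expanding $\E_f \ketbra{\psi_f}{\psi_f}^{\otimes t}$, observing that only even-multiplicity terms survive, and matching the result to the symmetric-subspace projector up to $O(t^2/2^n)$ via the binary-type-vector bookkeeping (the paper's Hybrids 3--5). Your side remark about needing $2^n$ superpolynomial for the final bound to be negligible is accurate and applies equally to the paper's own statement.
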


To prove this, we establish some basic facts about the \emph{symmetric subspace}. 

\paragraph{The Symmetric Subspace and Its Properties.} The symmetric subspace of the tensor product space $(\C^N)^{\otimes t}$ is the subspace of states invariant under all permutations of the $t$ tensor factors. We write $\Pi^{N,t}_\sym$ to denote the projector onto the symmetric subspace of $(\C^N)^{\otimes t}$. The next fact gives an equivalent definition of the projector. For proofs of these facts, please see~\cite{harrow2013church}. 

\begin{fact}
\label{fact:sym}
For a permutation $\sigma \in S_t$, let $P_{N}(\sigma)$ denote the permutation on $(\C^N)^{\otimes t}$ that permutes the $t$ tensor factors according to $\sigma$. Hence, $$P_N(\sigma) = \sum_{x_1,\ldots,x_t\in[N]}\ketbra{x_{\sigma^{-1}(1)},\ldots,x_{\sigma^{-1}(t)}}{x_1,\ldots,x_t}.$$ Then we have
\[
    \Pi^{N,t}_\sym = \frac{1}{t!} \sum_{\sigma \in S_t} P_{N}(\sigma)~.
\]
\end{fact}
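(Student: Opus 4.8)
The plan is to let $\Pi := \frac{1}{t!}\sum_{\sigma \in S_t} P_N(\sigma)$ denote the right-hand side, and prove the identity in two stages: first that $\Pi$ is an orthogonal (Hermitian) projector, and then that its range is exactly the symmetric subspace. Since a subspace determines its orthogonal projector uniquely, these two facts together force $\Pi = \Pi^{N,t}_\sym$. The entire argument rests on a single structural observation, namely that $\sigma \mapsto P_N(\sigma)$ is a \emph{unitary representation} of the symmetric group $S_t$, so I would isolate and prove that first.

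Concretely, I would first record that each $P_N(\sigma)$ acts on the product basis by $P_N(\sigma)\ket{x_1,\ldots,x_t} = \ket{x_{\sigma^{-1}(1)},\ldots,x_{\sigma^{-1}(t)}}$, which is merely a permutation of the orthonormal basis $\{\ket{x_1,\ldots,x_t}\}_{x_i \in [N]}$; hence $P_N(\sigma)$ is unitary with $P_N(\sigma)^\dagger = P_N(\sigma)^{-1} = P_N(\sigma^{-1})$. Then I would verify the homomorphism property $P_N(\sigma)P_N(\tau) = P_N(\sigma\tau)$ by index-chasing: applying $P_N(\tau)$ first sends the $i$-th register content to $x_{\tau^{-1}(i)}$, and applying $P_N(\sigma)$ afterward moves the $i$-th entry to $x_{\tau^{-1}(\sigma^{-1}(i))} = x_{(\sigma\tau)^{-1}(i)}$, which is precisely $P_N(\sigma\tau)\ket{x_1,\ldots,x_t}$. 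This bookkeeping with the $\sigma^{-1}$ convention is the one place where care is genuinely required, and it is the main (though modest) obstacle: getting the order of composition right is exactly what ensures the map is a left action rather than an anti-homomorphism.

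With the representation property in hand the remaining steps are routine group-averaging arguments. Hermiticity follows from $\Pi^\dagger = \frac{1}{t!}\sum_\sigma P_N(\sigma^{-1}) = \frac{1}{t!}\sum_\sigma P_N(\sigma) = \Pi$ by reindexing $\sigma \mapsto \sigma^{-1}$. Idempotence follows since $\Pi^2 = \frac{1}{(t!)^2}\sum_{\sigma,\tau} P_N(\sigma\tau)$, and for each fixed $\sigma$ the inner sum over $\tau$ ranges over all of $S_t$, giving $\sum_\tau P_N(\sigma\tau) = t!\,\Pi$, hence $\Pi^2 = \Pi$. For the range, I would observe that $P_N(\tau)\Pi = \frac{1}{t!}\sum_\sigma P_N(\tau\sigma) = \Pi$ for every $\tau$, so every vector $\Pi\ket{v}$ is invariant under all permutations and therefore lies in the symmetric subspace; conversely, any $\ket{\phi}$ in the symmetric subspace satisfies $P_N(\sigma)\ket{\phi} = \ket{\phi}$ for all $\sigma$ by definition, whence $\Pi\ket{\phi} = \frac{1}{t!}\sum_\sigma \ket{\phi} = \ket{\phi}$. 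Thus the range of $\Pi$ is contained in, and contains, the symmetric subspace, so $\Pi$ is the orthogonal projector onto it, i.e. $\Pi = \Pi^{N,t}_\sym$, completing the proof.
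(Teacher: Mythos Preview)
Your proof is correct and is the standard group-averaging argument for this identity. The paper does not actually prove this fact; it states it and refers the reader to Harrow's survey~\cite{harrow2013church} for a proof, so there is nothing to compare beyond noting that your argument is precisely the one found there.
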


\begin{fact}[Average of copies of Haar-random states]
\label{fact:avg-haar-random}
For all $N,t \in \N$, we have
\[
    \E_{\ket{\vartheta} \leftarrow \Haar(\C^N)} \ketbra{\vartheta}{\vartheta}^{\otimes t} = \frac{\Pi^{N,t}_\sym}{\Tr(\Pi^{N,t}_\sym)}~.
\]
\end{fact}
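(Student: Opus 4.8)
The plan is to pin down the operator $M := \E_{\ket{\vartheta} \leftarrow \Haar(\C^N)} \ketbra{\vartheta}{\vartheta}^{\otimes t}$ by exploiting its symmetries rather than by computing matrix entries. Two structural observations drive the argument. First, every summand $\ketbra{\vartheta}{\vartheta}^{\otimes t}$ lies in the symmetric subspace of $(\C^N)^{\otimes t}$ (a tensor power of a single state is invariant under permuting the factors), hence so is the average, giving $\Pi^{N,t}_\sym M \Pi^{N,t}_\sym = M$; in particular $M$ is supported on the symmetric subspace. Second, I would use the left-invariance of the Haar measure: for any unitary $U$ on $\C^N$, the distribution of $U\ket{\vartheta}$ equals that of $\ket{\vartheta}$, so
\[
    U^{\otimes t}\, M\, (U^\dagger)^{\otimes t} = \E_{\ket{\vartheta}} \ketbra{U\vartheta}{U\vartheta}^{\otimes t} = M,
\]
which shows that $M$ commutes with $U^{\otimes t}$ for every unitary $U$.

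Next I would invoke Schur's lemma. The representation $U \mapsto U^{\otimes t}$ of the unitary group $U(N)$, when restricted to the symmetric subspace, is \emph{irreducible} (this is the standard fact underlying Schur--Weyl duality; see~\cite{harrow2013church}). Since $M$ is supported on the symmetric subspace and commutes with the action of $U^{\otimes t}$ there for all $U$, Schur's lemma forces $M$ to be a scalar multiple of the identity on that subspace, i.e.\ $M = c\,\Pi^{N,t}_\sym$ for some scalar $c \ge 0$.

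Finally I would fix the constant $c$ by taking traces. On one hand, $\Tr(M) = \E_{\ket{\vartheta}} \Tr\!\left(\ketbra{\vartheta}{\vartheta}^{\otimes t}\right) = 1$, since each $\ketbra{\vartheta}{\vartheta}^{\otimes t}$ is a rank-one projector. On the other hand, $\Tr(c\,\Pi^{N,t}_\sym) = c\,\Tr(\Pi^{N,t}_\sym)$. Equating the two yields $c = 1/\Tr(\Pi^{N,t}_\sym)$, which is exactly the claimed identity. Note that this route never requires the explicit value $\Tr(\Pi^{N,t}_\sym) = \binom{N+t-1}{t}$, only that the symmetric-subspace projector is nonzero.

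The main obstacle is the irreducibility claim used to apply Schur's lemma: everything else is bookkeeping, but the conclusion $M \propto \Pi^{N,t}_\sym$ genuinely relies on the symmetric subspace carrying an irreducible representation of $U(N)$. I would either cite this directly from~\cite{harrow2013church} or, to keep the argument self-contained, give an alternative elementary derivation: one can verify $M \propto \Pi^{N,t}_\sym$ by checking equality of matrix elements in an occupation-number (``type'') basis for the symmetric subspace, using that the Haar average of monomials in the amplitudes of $\ket{\vartheta}$ depends only on the multiset of indices involved. The representation-theoretic proof is cleaner, so I would lead with it and relegate the combinatorial verification to a remark.
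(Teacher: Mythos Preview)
Your argument is correct and is precisely the standard proof: support on the symmetric subspace plus unitary invariance, then Schur's lemma on the irreducible $U^{\otimes t}$-action, then normalize by trace. The paper does not give its own proof of this fact but simply cites~\cite{harrow2013church}, where exactly this Schur-lemma argument appears, so your proposal matches the intended reference.
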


\begin{fact}
\label{fact:trace-sum}
    Let $\rho_1,\rho_2$ be density matrices such that $\rho_2 = \alpha\rho_1 + \beta\rho_1^{\perp}$ where $\rho_1\rho_1^{\perp} = 0$, $\rho_1^{\perp}\rho_1 = 0$ and $\alpha,\beta\in[0,1]$, $\alpha + \beta = 1$, then $$\TD\left(\rho_1,\rho_2\right)=\beta.$$
\end{fact}
\begin{proof}
    \begin{align*}
        \TD\left(\rho_1,\rho_2\right) &= \frac{1}{2}\Tr\left(\sqrt{\left(\rho_1-\rho_2\right)^2}\right) \\
        &= \frac{1}{2}\Tr\left(\sqrt{\left(\rho_1-\left(\alpha\rho_1 + \beta\rho_1^{\perp}\right)\right)^2}\right) \\
        &= \frac{1}{2}\Tr\left(\sqrt{\left((1-\alpha)\rho_1- \beta\rho_1^{\perp}\right)^2}\right) \\
        &= \frac{1}{2}\Tr\left(\sqrt{(1-\alpha)^2\rho_1^2+ \beta^2\left(\rho_1^{\perp}\right)^2}\right) \\
    \end{align*}
    Here, the first equality is from definition of trace distance, the second equality is by definition of $\rho_2$, the third equality is by simplification, and the fourth equality is because $\rho_1\rho_1^{\perp} = 0$, $\rho_1^{\perp}\rho_1 = 0$.

    Since, $\rho_1\rho_1^{\perp} = 0$, $\rho_1^{\perp}\rho_1 = 0$, we can add $(1-\alpha)\beta\rho_1\rho_1^{\perp}+ (1-\alpha)\beta\rho_1^{\perp}\rho_1$ without changing the value.
    \begin{align*}
        \TD\left(\rho_1,\rho_2\right) &= \frac{1}{2}\Tr\left(\sqrt{(1-\alpha)^2\rho_1^2+ \beta^2\left(\rho_1^{\perp}\right)^2 + (1-\alpha)\beta\rho_1\rho_1^{\perp}+ (1-\alpha)\beta\rho_1^{\perp}\rho_1}\right) \\
        &= \frac{1}{2}\Tr\left(\sqrt{\left((1-\alpha)\rho_1+ \beta\rho_1^{\perp}\right)^2}\right) \\
        &= \frac{1}{2}\Tr\left((1-\alpha)\rho_1+ \beta\rho_1^{\perp}\right) \\
        &= \frac{1}{2}\left((1-\alpha)\Tr(\rho_1)+ \beta\Tr\left(\rho_1^{\perp}\right)\right) \\
        &= \frac{1}{2}\left((1-\alpha)+ \beta\right) \\
        &= \beta \\
    \end{align*}
    Here, the first line is since $\rho_1\rho_1^{\perp} = 0$, $\rho_1^{\perp}\rho_1 = 0$, the second line is by simplification, the third equality is since $(1-\alpha)\rho_1+ \beta\rho_1^{\perp}$ is positive semidefinite, the fourth equality is by linearity of trace, the fifth equality is by $\Tr(\rho) = 1$ and $\Tr(\rho^{\perp}) = 1$, and the last equality is by $\alpha + \beta = 1$.
\end{proof}

We define some new notation to look at a different charecterisation of the symmetric space.

\begin{definition}
    Let $v\in [N]^t$ for some $N,t\in \N$, then define $type(v)$ to be a vector in $[t+1]^N$ where the $i^{th}$ entry in $type(v)$ denotes the frequency of $i$ in $v$.
\end{definition}

\begin{definition}
\label{def:type_states}
    Let $T\in [t+1]^N$ for some $N,t\in \N$, then define $$\ket{type_T} = \beta\sum_{\substack{v\in[N]^t\\ type(v) = T}}\ket{v},$$ where $\beta\in\R$ is an appropriately chosen constant. Similarly, for $T\in \{0,1\}^N$, define $$\ket{bintype_T} = \beta\sum_{\substack{v\in[N]^t\\ type(v) \pmod{2} = T}}\ket{v},$$ where $\beta\in\R$ is an appropriately chosen constant.
\end{definition}

Note that if $hamming(T) = t$ and $T\in \{0,1\}^N$, then $\ket{type_T} = \ket{bintype_T}$.

\begin{lemma}
\label{lem:haar_type}
For all $N,t \in \N$, we have
    $$\TD\left(\E_{\substack{T\leftarrow \{0,1\}^N\\ hamming(T)=t}} \ketbra{type_T}{type_T}, \E_{\ket{\vartheta} \leftarrow \Haar(\C^N)} \ketbra{\vartheta}{\vartheta}^{\otimes t}\right)\leq O\left(\frac{t^2}{N}\right).$$
\end{lemma}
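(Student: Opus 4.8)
The plan is to compare both distributions to the same intermediate object, namely the uniform mixture over $\ketbra{type_T}{type_T}$ restricted to $T\in[t+1]^N$ with $hamming(T)=t$ (equivalently, the uniform mixture over all "type classes" of vectors in $[N]^t$), which is exactly the maximally mixed state on the symmetric subspace up to the weighting by type-class sizes. Concretely, recall from \Cref{fact:avg-haar-random} that $\E_{\ket\vartheta}\ketbra\vartheta\vartheta^{\otimes t} = \Pi^{N,t}_\sym/\Tr(\Pi^{N,t}_\sym)$, and note that $\{\ket{type_T}\}_{T}$ (over all $T\in[t+1]^N$ with $hamming(T)=t$) forms an orthonormal basis of the symmetric subspace, since two basis vectors $\ket{v},\ket{w}$ of $(\C^N)^{\otimes t}$ lie in the same permutation orbit iff $type(v)=type(w)$. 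Hence $\Pi^{N,t}_\sym/\Tr(\Pi^{N,t}_\sym) = \E_{T}\ketbra{type_T}{type_T}$ where $T$ ranges \emph{uniformly over all types}. So the only thing that needs to be controlled is the gap between sampling $T$ uniformly over all types with $hamming(T)=t$ versus sampling $T\leftarrow\{0,1\}^N$ conditioned on $hamming(T)=t$ (i.e., a uniformly random $t$-subset of $[N]$).

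The key step is therefore a \emph{collision bound}: a uniformly random $t$-subset of $[N]$ is, with probability $1-O(t^2/N)$, a "type with all entries $0$ or $1$", i.e., for a uniformly random $v\in[N]^t$ the event that all coordinates are distinct — which is the set of $v$'s whose type is $0/1$-valued — already has probability $1-O(t^2/N)$ by the birthday bound, $\Pr[\text{not distinct}] \le \binom{t}{2}/N = O(t^2/N)$. First I would make this precise: the uniform distribution over types $T$ with $hamming(T)=t$ is the pushforward under $type(\cdot)$ of the uniform distribution on $[N]^t$ weighted by type-class size; restricting to $0/1$-valued $T$ is exactly restricting to the $\distinct$ event, and conditioned on that event the induced distribution on $T$ is uniform over $t$-subsets. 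Then the two mixtures $\E_{T\leftarrow\{0,1\}^N, hamming(T)=t}\ketbra{type_T}{type_T}$ and $\E_{T \text{ uniform over all types}}\ketbra{type_T}{type_T}$ differ only in that the latter puts mass $O(t^2/N)$ on types with some entry $\ge 2$; since trace distance between two mixtures of the \emph{same} orthonormal family of pure states equals the statistical distance between the mixing distributions (by \Cref{fact:trace-sum} applied iteratively, or directly since they are simultaneously diagonal), we get
\[
\TD\!\left(\E_{\substack{T\leftarrow\{0,1\}^N\\ hamming(T)=t}}\ketbra{type_T}{type_T},\ \E_{\ket\vartheta\leftarrow\Haar(\C^N)}\ketbra\vartheta\vartheta^{\otimes t}\right) = O\!\left(\frac{t^2}{N}\right).
\]

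The main obstacle I anticipate is bookkeeping rather than conceptual: one must be careful that the "uniform over types" mixture really equals $\Pi^{N,t}_\sym/\Tr(\Pi^{N,t}_\sym)$ — this uses that $\ket{type_T}$ is (after normalization) the uniform superposition over one permutation orbit, so $\E_T \ketbra{type_T}{type_T}$ is the average of rank-one projectors onto an orthonormal basis of the symmetric subspace, which is $\Pi^{N,t}_\sym$ divided by its rank. Equivalently one can invoke \Cref{fact:sym} and check that averaging $P_N(\sigma)$ over $\sigma$ and then over a uniformly random basis vector reproduces the same operator. I would also double-check the normalization constant $\beta$ in \Cref{def:type_states} so that $\ket{type_T}$ is genuinely unit norm when $T$ has all entries in $\{0,1\}$ (there $\beta = 1/\sqrt{t!}$ since there are $t!$ vectors $v$ with that type), and confirm the trace-distance-equals-statistical-distance claim by noting both operators are diagonal in the $\{\ket{type_T}\}$ basis so their difference has eigenvalues summing in absolute value to twice the statistical distance of the mixing weights, which is $O(t^2/N)$ by the birthday bound.
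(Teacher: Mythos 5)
Your proposal is correct and follows essentially the same route as the paper: identify the Haar average $\Pi^{N,t}_\sym/\Tr(\Pi^{N,t}_\sym)$ with the uniform mixture over the orthonormal type basis, observe that the target state is the uniform mixture over the $0/1$-valued types, and bound the residual weight on types with a repeated entry by $O(t^2/N)$; the paper packages the last step as $\rho=\alpha\sigma+\beta\sigma^{\perp}$ and invokes \Cref{fact:trace-sum}, which is the same ``simultaneously diagonal, so trace distance equals statistical distance of the weights'' observation you make. One small patch: the quantity you need is the mass that the \emph{uniform-over-types} distribution places outside the $0/1$ types, namely $1-\binom{N}{t}/\binom{N+t-1}{t}$, which is not literally the tuple-collision probability $1-\binom{N}{t}t!/N^t$ controlled by the birthday bound (the pushforward of the uniform distribution on $[N]^t$ weights each type by its class size, not uniformly); the needed bound still holds since $\binom{N}{t}/\binom{N+t-1}{t}=\prod_{j=0}^{t-1}\frac{N-j}{N+j}\ge 1-t^2/N$, and the paper's own proof elides the same distinction. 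Also note that writing ``$hamming(T)=t$'' for a general type $T\in[t+1]^N$ would already force $T\in\{0,1\}^N$; when describing the intermediate mixture you mean $T$ ranging over all types with $\sum_i T_i=t$.
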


\begin{proof}
    From \Cref{fact:avg-haar-random}, 
    \begin{align*}
        \rho = \E_{\ket{\vartheta} \leftarrow \Haar(\C^N)} \ketbra{\vartheta}{\vartheta}^{\otimes t} &= \frac{\Pi^{N,t}_\sym}{\Tr(\Pi^{N,t}_\sym)} \\
        &=\frac{1}{t!\Tr(\Pi^{N,t}_\sym)} \sum_{\sigma \in S_t} P_{N}(\sigma) \\
        &= \frac{1}{t!\Tr(\Pi^{N,t}_\sym)} \sum_{\sigma \in S_t} \sum_{x_1,\ldots,x_t\in[N]}\ketbra{x_{\sigma^{-1}(1)},\ldots,x_{\sigma^{-1}(t)}}{x_1,\ldots,x_t} 
    \end{align*}
    where the second and the third line follow from~\Cref{fact:sym}. From~\Cref{def:type_states}, 
    \begin{align*}
        \sigma = \E_{\substack{T\leftarrow \{0,1\}^N\\ hamming(T)=t}} \ketbra{type_T}{type_T} &= \E_{\substack{T\leftarrow \{0,1\}^N\\ hamming(T)=t}} \left(\frac{1}{\sqrt{t!}}\sum_{\substack{v\in[N]^t\\ type(v) = T}}\ket{v}\right)\left(\frac{1}{\sqrt{t!}}\sum_{\substack{v'\in[N]^t\\ type(v') = T}}\bra{v'}\right) \\
        &= \frac{1}{t!} \E_{\substack{T\leftarrow \{0,1\}^N\\ hamming(T)=t}} \left(\sum_{\substack{v,v'\in[N]^t\\ type(v) = type(v') = T}}\ketbra{v}{v'}\right) \\
        &= \frac{1}{t!} \E_{\substack{T\leftarrow \{0,1\}^N\\ hamming(T)=t}} \left(\sum_{\substack{v\in[N]^t\\ type(v) = T}}\sum_{\substack{\sigma\in S_t\\ v' = \sigma(v)}}\ketbra{v}{v'}\right) \\
        &= \frac{1}{t! {N \choose t}} \sum_{\substack{v\in[N]^t\\ type(v) \in \{0,1\}^N }}\sum_{\sigma\in S_t}\ketbra{v}{\sigma(v)} \\
        &= \frac{1}{t! {N \choose t}} \sum_{\substack{x_1,\ldots,x_t\in[N]\\ x_1,\ldots,x_t \text{ are distinct}}}\sum_{\sigma\in S_t}\ketbra{x_1,\ldots,x_t}{x_{\sigma(1)},\ldots,x_{\sigma(t)}}
    \end{align*}
    where the third line follows by re-interpreting vector of same type as permutation of each other, the fourth line follows by taking expectation (since there are a total of ${N \choose t}$ strings of hamming weight of $t$ in $\{0,1\}^N$). The fifth line follows since $type(v)\in\{0,1\}^N$, hence all elements of $v$ are distinct.

    Define $$\sigma^{\perp} = \frac{1}{t! \left({N + t - 1\choose t} - {N \choose t}\right)} \sum_{\substack{x_1,\ldots,x_t\in[N]\\ x_1,\ldots,x_t \text{ are not distinct}}}\sum_{\sigma\in S_t}\ketbra{x_1,\ldots,x_t}{x_{\sigma(1)},\ldots,x_{\sigma(t)}}$$

    Hence, $\rho = \alpha\sigma + \beta\sigma^{\perp}$, where $\beta$ is probability of picking $x_1,\ldots,x_t\in[N]$ such that there is a colision which is less than $O\left(\frac{t^2}{N}\right)$. 

    Hence, from \Cref{fact:trace-sum}, we see $$\TD\left(\E_{\substack{T\leftarrow \{0,1\}^N\\ hamming(T)=t}} \ketbra{type_T}{type_T}, \E_{\ket{\vartheta} \leftarrow \Haar(\C^N)} \ketbra{\vartheta}{\vartheta}^{\otimes t}\right)\leq O\left(\frac{t^2}{N}\right).$$
\end{proof}

\noindent We prove \Cref{thm:prfs-to-prs} via a hybrid argument. Let $t$ be a polynomial in $\secparam$. 

\paragraph{Hybrid 1.}  Sample a random $k \leftarrow \{0,1\}^\secparam$. Let $\ket{\psi} = \ket{\psi_k}$ as defined in \Cref{eq:binary-phase} and output $\ket{\psi}^{\otimes t}$.

\paragraph{Hybrid 2.} For all $x \in \{0,1\}^n$, sample $\alpha_x \in \{\pm 1\}$ uniformly at random. Let
\[
    \ket{\psi} = 2^{-n/2} \sum_{x \in \{0,1\}^n} \alpha_x \ket{x}~.
\]
Output $\ket{\psi}^{\otimes t}$.

\paragraph{Hybrid 3.} Sample $w \in [N]^t$ uniformly at random. Let $T = type(w) \pmod{2}$.
Output $\ket{bintype_T}^{\otimes t}$.

\paragraph{Hybrid 4.} Sample $T \in \{0, 1\}^N$ with $hamming(T) = t$ uniformly at random. Output $\ket{type_T}^{\otimes t}$.

\paragraph{Hybrid 5.} Sample a Haar-random $n$-qubit state $\ket{\psi}$. Output $\ket{\psi}^{\otimes t}$.

\medskip

The Hybrids 1 and 2 are computationally indistinguishable since the PRF $F$ cannot be distinguished by a quantum adversary (that can make superposition queries to $F_k(\cdot) = F(k,\cdot)$) from a random function, from the quantum query security in~\Cref{def:qqprfs}. 

The rest of the proof can be seen as the following lemma.

\begin{lemma}
\label{lem:binaryphase}
Fix $t,n \in \N$. Let $\{ \alpha_x \}_x$ be independent and uniformly random $\pm 1$ values. 
Define
\[
\ket{\psi} = 2^{-n/2} \sum_{x \in \{0,1\}^n} \alpha_x \, \ket{x}~.
\]
Then
\begin{equation}
\label{eq:td1}
	\TD \Big (\,  \E  \ketbra{\psi}{\psi}^{\otimes t} \, , \, \E  \ketbra{\vartheta}{\vartheta}^{\otimes t} \, \Big ) \leq O(t^2/2^{n})
\end{equation}
where $\ket{\vartheta}$ is a Haar-random $n$-qubit state.
\end{lemma}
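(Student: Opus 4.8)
The plan is to prove \Cref{lem:binaryphase} by chaining the remaining hybrids (Hybrids 2 through 5) together, with the key structural observation being that the random binary-phase state, when restricted to its action on $t$ copies, only "sees" the parities of the multiplicities of each basis label. First I would show that Hybrid~2 and Hybrid~3 produce \emph{identical} mixed states (not merely close). To see this, expand
\[
\E \ketbra{\psi}{\psi}^{\otimes t} = 2^{-nt} \sum_{v, v' \in [N]^t} \left( \E_{\{\alpha_x\}} \prod_{j} \alpha_{v_j} \alpha_{v'_j} \right) \ketbra{v}{v'}~,
\]
where $N = 2^n$. Since the $\alpha_x$ are independent uniform $\pm 1$, the coefficient $\E \prod_j \alpha_{v_j}\alpha_{v'_j}$ is $1$ if every label appears an even number of times across the combined multiset of $v$ and $v'$ — i.e., $type(v) \equiv type(v') \pmod 2$ — and is $0$ otherwise. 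Grouping the surviving terms by the common parity vector $T \in \{0,1\}^N$ exactly reconstructs $\sum_T p_T \ketbra{bintype_T}{bintype_T}$ for appropriate normalization, where $p_T$ is the probability that a uniform $w \in [N]^t$ has $type(w) \equiv T \pmod 2$; this is precisely the state output by Hybrid~3. So Hybrids 2 and 3 are the same state.

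Next I would argue Hybrid~3 and Hybrid~4 are close in trace distance. Conditioned on the event that a uniform $w \in [N]^t$ has all distinct coordinates — which fails with probability $O(t^2/N)$ by a birthday bound — the parity type $T = type(w) \bmod 2$ is a uniformly random weight-$t$ binary vector, and moreover $\ket{bintype_T} = \ket{type_T}$ in that case (by the remark following \Cref{def:type_states}, since $hamming(T) = t$). Therefore the two ensembles of pure states can be coupled to agree except on an event of probability $O(t^2/N)$, and a standard argument (the trace distance between two mixtures is at most the statistical distance of the mixing distributions, here bounded by the coupling failure probability) gives $\TD \leq O(t^2/N) = O(t^2/2^n)$. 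Finally, Hybrid~4 to Hybrid~5 is exactly \Cref{lem:haar_type}, which already bounds this trace distance by $O(t^2/N)$. Assembling the three bounds by the triangle inequality for trace distance yields the claimed $O(t^2/2^n)$.

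The main obstacle I anticipate is bookkeeping the normalization constants $\beta$ in \Cref{def:type_states} consistently across Hybrids 2, 3, and 4, and making sure the "coupling" argument between Hybrid~3 and Hybrid~4 is airtight — in particular verifying that the weighting $p_T$ in Hybrid~3 really does collapse to the uniform distribution on weight-$t$ strings once we condition on no collisions, and that the contribution of the collision event to the trace distance is genuinely $O(t^2/N)$ rather than something larger hidden in the normalization. The combinatorial identity in the Hybrid~2 $=$ Hybrid~3 step is conceptually the heart of the simplification (it replaces the more delicate symmetric-subspace-projector manipulation of \cite{BS19}), but once the parity characterization of the surviving Gaussian-moment terms is written down cleanly, it should be routine. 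Everything else reduces to birthday bounds and the triangle inequality.
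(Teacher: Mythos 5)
Your proposal is correct and follows essentially the same route as the paper's proof: the exact identity between Hybrids 2 and 3 via the parity characterization of the surviving $\E\prod_j \alpha_{v_j}\alpha_{v'_j}$ terms, the birthday-bound coupling for Hybrids 3 and 4, and \Cref{lem:haar_type} for Hybrids 4 and 5. The anticipated obstacles (normalization bookkeeping and the conditioning argument) are handled in the paper exactly as you sketch and present no difficulty.
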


A version of this Lemma is also proven by Brakerski and Shmueli~\cite{BS19}, we give an alternate and a more straightforward proof. 

\begin{lemma}
    Hybrids 2 and 3 (from above) are identical.
\end{lemma}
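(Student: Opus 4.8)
The plan is to show that the mixed state the distinguisher receives is the same in both hybrids; since a distinguisher's acceptance probability is an affine function of its input density matrix, this is exactly what ``Hybrids~2 and~3 are identical'' asks for. Write $N = 2^n$. First I would expand the Hybrid~2 output in the computational basis. With $\ket\psi = 2^{-n/2}\sum_{x\in\{0,1\}^n}\alpha_x\ket x$ and $\ket v := \ket{v_1}\otimes\cdots\otimes\ket{v_t}$ for $v=(v_1,\dots,v_t)\in[N]^t$,
\[
	\ketbra{\psi}{\psi}^{\otimes t} \;=\; 2^{-nt}\sum_{v,v'\in[N]^t}\Big(\prod_{i=1}^t\alpha_{v_i}\Big)\Big(\prod_{j=1}^t\alpha_{v'_j}\Big)\ketbra{v}{v'} \;=\; 2^{-nt}\sum_{v,v'\in[N]^t}\Big(\prod_{x\in[N]}\alpha_x^{\,type(v)_x+type(v')_x}\Big)\ketbra{v}{v'}.
\]
The one probabilistic ingredient is the elementary fact that for independent uniform $\pm1$ variables, $\E\big[\prod_x\alpha_x^{m_x}\big]$ equals $1$ if every $m_x$ is even and $0$ otherwise. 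Hence taking the expectation over $\{\alpha_x\}_x$ annihilates every term except those with $type(v)\equiv type(v')\pmod{2}$, and grouping the survivors by their common residual type $S\in\{0,1\}^N$ yields
\[
	\E_{\{\alpha_x\}}\ketbra{\psi}{\psi}^{\otimes t} \;=\; 2^{-nt}\sum_{S\in\{0,1\}^N}\ \sum_{\substack{v,v'\in[N]^t\\ type(v)\,\equiv\,type(v')\,\equiv\,S}}\ketbra{v}{v'},
\]
where $\equiv$ denotes entrywise congruence modulo $2$.

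Next I would identify the inner sum with a rank-one term. For $S\in\{0,1\}^N$ let $n_S := \big|\{\,v\in[N]^t : type(v)\equiv S \pmod{2}\,\}\big|$; since distinct $\ket v$ are orthonormal, the vector $\sum_{v:\,type(v)\equiv S}\ket v$ has squared norm $n_S$, so the ``appropriately chosen constant'' $\beta$ of \Cref{def:type_states} is $n_S^{-1/2}$ and $\ket{bintype_S}=n_S^{-1/2}\sum_{v:\,type(v)\equiv S}\ket v$. Consequently $\sum_{v,v'\equiv S}\ketbra{v}{v'}=n_S\ketbra{bintype_S}{bintype_S}$, and therefore
\[
	\E_{\{\alpha_x\}}\ketbra{\psi}{\psi}^{\otimes t} \;=\; 2^{-nt}\sum_{S\in\{0,1\}^N} n_S\,\ketbra{bintype_S}{bintype_S}.
\]
On the Hybrid~3 side the output depends on $w$ only through $T=type(w)\bmod 2$, and for $w$ uniform on $[N]^t$ one has $\Pr[T=S]=n_S/N^t=n_S/2^{nt}$; hence the mixed state produced by Hybrid~3 is exactly $\sum_{S}(n_S/2^{nt})\ketbra{bintype_S}{bintype_S}$, which matches the display above. (The $S$ with $n_S=0$ are precisely those for which $\ket{bintype_S}$ is undefined, and they contribute nothing to either sum, so no issue arises.)

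I do not expect a genuine obstacle here: the whole argument is one parity identity for products of independent signs followed by a regrouping of a double sum. The only points that need a moment of care are pinning down the normalization $\beta=n_S^{-1/2}$ inside the definition of $\ket{bintype_S}$, and noting that the degenerate residual types ($n_S=0$) are harmless on both sides.
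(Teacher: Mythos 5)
Your proof is correct and follows essentially the same route as the paper's: expand $\ketbra{\psi}{\psi}^{\otimes t}$ in the computational basis, use the fact that $\E[\prod_x \alpha_x^{m_x}]$ survives only when every exponent is even (equivalently $type(v)\equiv type(v')\pmod 2$), and match the resulting double sum against the mixture over binary types produced in Hybrid~3. You are slightly more explicit than the paper about the normalization $\beta = n_S^{-1/2}$ and the harmlessness of types with $n_S=0$, but the argument is the same.
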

\begin{proof}
    The output of Hybrid 2 is as follows:
    \begin{align*}
        \rho &= \E_{\alpha_x} \left(\frac{1}{N^{t/2}}\sum_{x_1,\ldots,,x_t\in[N]}\alpha_{x_1}\ldots\alpha_{x_t}\ket{x_1,\ldots,x_t}\right)\left(\frac{1}{N^{t/2}}\sum_{y_1,\ldots,y_t\in[N]}\alpha_{y_1}\ldots\alpha_{y_t}\bra{y_1,\ldots,y_t}\right) \\
        &= \frac{1}{N^{t}}\E_{\alpha_x} \left(\sum_{\substack{x_1,\ldots,x_t\in[N] \\ y_1,\ldots,y_t\in[N]}}\alpha_{x_1}\ldots\alpha_{x_t}\alpha_{y_1}\ldots\alpha_{y_t}\ketbra{x_1,\ldots,x_t}{y_1,\ldots,y_t}\right) \\
        &= \frac{1}{N^{t}}\left(\sum_{\substack{x_1,\ldots,x_t\in[N] \\ y_1,\ldots,y_t\in[N]}}\sum_{type(x)\bmod{2} = type(y)\bmod{2}}\ketbra{x_1,\ldots,x_t}{y_1,\ldots,y_t}\right)
    \end{align*}
    where the second line follows by rearranging terms and third line follows from the fact that taking expectation over any unpaired $\alpha_x$ results in zero.

    The output of Hybrid 3 is as follows:
    \begin{align*}
        \rho' &= \E_{w\in [N]^t} \left(\sum_{\substack{v\in[N]^t \\ type(v)\bmod{2} = type(w)\bmod{2}}}\ket{v}\right)\left(\sum_{\substack{v'\in[N]^t \\ type(v')\bmod{2} = type(w)\bmod{2}}}\bra{v'}\right) \\
        &= \frac{1}{N^{t}} \left(\sum_{\substack{v,v'\in[N]^t \\ type(v)\bmod{2} = type(v')\bmod{2}}}\ketbra{v}{v'}\right) 
    \end{align*}
    where the second line follows because there are $N^t$ options for $w$. From above we can see that the output of Hybrid 2 and Hybrid 3 are exactly the same.
\end{proof}

Let $w\in [N]^t$ be sampled uniformly at random and $T = type(w) \pmod{2}$, then $hamming(T) = t$ with probability $1-t^2/N$ as $w$ has no collisions with probabilty $1-t^2/N$. Hence, the trace distance between the outputs of Hybrids 3 and 4 is $O(t^2/N)$.

By \Cref{lem:haar_type}, the trace distance between the outputs of Hybrids 4 and 5 is $O(t^2/N)$.

Combining our bounds on above, we get that
\[
	\TD \Big (    \ketbra{\psi}{\psi}^{\otimes t} \, , \, \E  \ketbra{\vartheta}{\vartheta}^{\otimes t} \Big ) \leq O(t^2/N)
\]
as desired.

\subsection{(Classically-Accessible) Adaptively-Secure PRFS}
\label{sec:classical:aprfs}
We construct an adaptively secure PRFS where the adversary is only allowed classical access in the security experiment. 
\luowen{Wouldn't all of this be an immediate corollary of the next section? What is the point?} \hnote{Added this comment:} While the quantum-accessible APRFS constructed and analyzed in \Cref{sec:cons:qrfs} are also classically-accessible; their analyses are much more involved. Here we present a construction and analysis of a classically-accessible APRFS that is simpler, and may be a helpful starting point for applications that only require classically-accessible PRFS. 

\par To construct a $(d(\secparam),n(\secparam))$-APRFS $G$, we start with the following two primitives:
\begin{itemize}
    \item $(d(\secparam),\secparam)$-post-quantum secure pseudorandom function $F$ (\Cref{def:pqprfs}) and, 
    \item $n(\secparam)$-pseudorandom state generator $g$ (\Cref{def:vanilla-prs}). 
\end{itemize}

\paragraph{Construction.} We describe $G_{\secparam}$ as follows: on input $k \in \{0,1\}^{\secparam}$, $x \in \{0,1\}^{d(\secparam)}$, 
\begin{itemize}
    \item $k_x \leftarrow F(k,x)$,
    \item $\rho \leftarrow g(k_x)$,
    \item Output $\rho$. 
\end{itemize}

\begin{lemma}
Assuming the post-quantum security of the pseudorandom function $F$, the QPT procedure $G$ described above is a secure $(d(\secparam),n(\secparam))$-APRFS (Definition~\ref{def:aprfs}). 
\end{lemma}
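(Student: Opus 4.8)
The plan is a two-step hybrid argument: first peel off the pseudorandom function $F$ using its post-quantum security, then peel off the pseudorandom state generator $g$ using PRS security. Fix a QPT (non-uniform) distinguisher $A = (A_\secparam, \rho_\secparam)$ against $G$ that makes $q = q(\secparam) = \poly(\secparam)$ (classical) queries, and define:
\begin{itemize}
  \item $\hybrid_0$: the real experiment $A_\secparam^{{\cal O}_{\sf PRFS}(k,\cdot)}(\rho_\secparam)$ with uniformly random $k$, where the oracle answers a query $x$ with $g(F(k,x))$;
  \item $\hybrid_1$: identical to $\hybrid_0$ except $F(k,\cdot)$ is replaced by a uniformly random function $R \colon \bit^{d(\secparam)} \to \bit^{\secparam}$, so the oracle answers $x$ with $g(R(x))$;
  \item $\hybrid_2$: the ideal experiment $A_\secparam^{{\cal O}_{\sf Haar}(\cdot)}(\rho_\secparam)$.
\end{itemize}
By the triangle inequality it suffices to show $\hybrid_0 \approx_c \hybrid_1$ and $\hybrid_1 \approx_c \hybrid_2$.

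\textbf{Step 1: $\hybrid_0 \approx_c \hybrid_1$.} This reduces to the post-quantum security of $F$ (\Cref{def:pqprfs}); note only classical-query security is needed because every query $A$ makes to its oracle is measured in the computational basis (per \Cref{def:aprfs}). Given $A$, construct a PRF distinguisher $B = (B_\secparam, \rho_\secparam)$ that simulates $A$'s oracle using its own classical oracle (either ${\cal O}_{\sf prf}(k,\cdot)$ or ${\cal O}_{\sf Rand}$): on each query $x$ from $A$, forward $x$, obtain $k_x$, run $g(k_x)$, hand the resulting state to $A$; finally output $A$'s bit. When $B$'s oracle is ${\cal O}_{\sf prf}(k,\cdot)$ this exactly reproduces $\hybrid_0$, and when it is ${\cal O}_{\sf Rand}$ it exactly reproduces $\hybrid_1$ (a random function from $\bit^{d}$ to $\bit^{\secparam}$, consistent on repeated inputs). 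Since $g$ is QPT and $A$ makes polynomially many queries, $B$ is QPT, so post-quantum PRF security gives $|\Pr[\hybrid_0 \to 1] - \Pr[\hybrid_1 \to 1]| \le \negl(\secparam)$.

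\textbf{Step 2: $\hybrid_1 \approx_c \hybrid_2$.} On distinct queried inputs $R$ returns fresh independent uniformly random keys, so $\hybrid_1$ hands $A$ copies of at most $m \le q$ independently generated PRS states $g(k_1), \dots, g(k_m)$ (re-querying the same input yields additional i.i.d.\ copies of the same state, at most $q$ copies per instance), while $\hybrid_2$ is the same with each PRS state replaced by an independent Haar-random $n(\secparam)$-qubit state (consistent on repeated inputs, matching ${\cal O}_{\sf Haar}$). We interpolate with sub-hybrids $H_0, \dots, H_q$ where $H_j$ answers the first $j$ distinct inputs queried by $A$ with (copies of) independent Haar-random states and all later distinct inputs with (copies of) freshly keyed PRS states; then $H_0 = \hybrid_1$ and $H_q = \hybrid_2$. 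Since $H_{j-1}$ and $H_j$ differ only in the single instance attached to the $j$-th distinct input, a distinguisher between them yields a single-instance PRS distinguisher (\Cref{def:vanilla-prs}) that requests $t = q$ copies of its challenge. Summing the $O(q)$ resulting negligible terms keeps the total advantage negligible, using that $A$ is QPT so $m, t \le q = \poly(\secparam)$.

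\textbf{Main obstacle.} The one non-routine point is that the single-instance PRS reduction in Step 2 must be QPT, yet must supply Haar-random states for the already-switched instances (indices $< j$). This is handled the standard way: replace each such Haar-random state by a state drawn from an efficiently preparable exact (or negligibly close approximate) state $t$-design, which exist unconditionally and are perfectly (resp.\ negligibly) indistinguishable from Haar given at most $t = q$ copies; one then inserts a final hybrid replacing the design states by genuine Haar states at zero (resp.\ negligible) statistical cost, since only $\le q$ copies of each are ever revealed. Equivalently, one may invoke ``multi-instance'' PRS security as a black box. Everything else in the argument is bookkeeping; the adaptivity of $A$'s queries causes no difficulty because the simulators in both steps assign instances lazily as new inputs appear.
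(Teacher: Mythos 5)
Your proposal is correct and follows essentially the same route as the paper's proof: first swap $F(k,\cdot)$ for a (lazily sampled) random function via post-quantum PRF security, then a query-by-query hybrid replacing each fresh PRS instance with a Haar-random state, with the reduction made efficient by substituting state $t$-designs for the already-switched Haar states. Your handling of repeated queries (indexing sub-hybrids by distinct inputs and requesting $t=q$ copies of the PRS challenge) is slightly more explicit than the paper, which simply assumes distinct queries for simplicity and notes the general case follows.
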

\begin{proof}
We prove this by a standard hybrid argument. Let $A_{\secparam}$ be a QPT distinguisher that distinguishes the oracles ${\cal O}_{\sf PRFS}(k,\cdot)$, where $k \leftarrow \{0,1\}^{\secparam}$, and ${\cal O}_{\sf Haar}(\cdot)$ with probability $\eps$. We prove that $\eps$ is negligible. Let $q$ be the number of queries made by $A_{\secparam}$. For simplicity, we assume below that all the queries are distinct. The same proof generalizes to the setting when the queries are not distinct.  

\paragraph{Hybrid $H_{1}$.} Output $A_{\secparam}^{{\cal O}_{\sf PRFS}(k,\cdot)}$, where $k \leftarrow \{0,1\}^{\secparam}$. 

\paragraph{Hybrid $H_{2}$.} Output $A_{\secparam}^{{\cal O}_{H_2}(k,\cdot)}$, where $k \leftarrow \{0,1\}^{\secparam}$, where ${\cal O}_{H_2}(k,\cdot)$ is defined as follows.
\par On input $x \in \{0,1\}^{d(\secparam)}$, compute $\rho \leftarrow g(k_x)$, where $k_x$ is sampled uniformly at random. Output $\rho$. 
\par The computational indistinguishability of $H_1$ and $H_2$ follows from the post-quantum security of $F$.

\paragraph{Hybrid $H_{3.i}$ for $i \in [q]$.} Output $A_{\secparam}^{{\cal O}_{H_{3.i}}(k,\cdot)}$, where $k \leftarrow \{0,1\}^{\secparam}$, where ${\cal O}_{H_{3.i}}(k,\cdot)$ is defined as follows.
\par On input $x \in \{0,1\}^{d(\secparam)}$, do the following: if this is the $j^{th}$ query, where $j \leq i$, then output $\ket{\vartheta_x}$, where $\ket{\vartheta_x}$ is sampled from the Haar distribution and if $j > i$, output $\rho \leftarrow g(k_x)$, where $k_x$ is sampled uniformly at random.  
\begin{claim}
Assuming the security of PRS generator $g$, for every $i \in [q]$, $A_{\secparam}$ can distinguish hybrids $H_{3.i-1}$ (we set $H_{3.0}=H_2$) and $H_{3.i}$ only with negligible probability.
\end{claim}
\begin{proof}
Suppose $A_{\secparam}$ can distinguish the hybrids $H_{3.i-1}$  and $H_{3.i}$ with probability $\eps_3(\secparam)$. We construct an adversary ${\cal B}$ that violates the security of $g$ with probability $\eps_3(\secparam)$. Then we invoke the security of $g$ to conclude that $\eps_3(\secparam)$ has to be negligible in $\secparam$. First, we let ${\cal B}$ be inefficient and later, we remark how to make ${\cal B}$ efficient. 
\par ${\cal B}$ gets as input a state $\rho$. ${\cal B}$ then runs $A_{\secparam}$ while  simulating the oracle $A_{\secparam}$ has access to. For the $j^{th}$ query, say $x \in \{0,1\}^{d(\secparam)}$, where $j < i$, it outputs $\ketbra{\vartheta_x}{\vartheta_x}$, where  $\ket{\vartheta_x} \leftarrow \Haar_{n(\secparam)}$. For the $i^{th}$ query, it outputs $\rho$. For the $j^{th}$ query, say $x \in \{0,1\}^{d(\secparam)}$, where $j > i$, it outputs $\rho_{k_x}$, where  $\rho_{k_x}\leftarrow g(k_x)$ and $k_x \leftarrow \{0,1\}^{\secparam}$. The output of ${\cal B}$ is set to be the output of $A_{\secparam}$.
\par If $\rho \leftarrow g(k)$, for some $k \leftarrow \{0,1\}^{\secparam}$ then the output distribution of ${\cal B}$ is precisely the output distribution of $H_{3.i-1}$ and if $ \rho=\ketbra{\psi}{\psi}$, where $\ket{\psi} \leftarrow \Haar_{n(\secparam)}$, then the output distribution of ${\cal B}$ is precisely the output distribution of $H_{3.i}$. Thus, the probability that $A_{\secparam}$ distinguishes the hybrids $H_{3.i-1}$ and $H_{3.i}$ is precisely the distinguishing probability of ${\cal B}$. 
\par Note that ${\cal B}$ is not efficient since it needs to  sample Haar random states. Instead of sampling Haar random states, ${\cal B}$ instead uses $q(\secparam)$-state designs, where $q(\secparam)$ is the maximum number of queries $A_{\secparam}$ can make. $q(\secparam)$-state designs can be efficiently generated (in time polynomial in $q(\secparam)$)~\cite{ambainis2007quantum,dankert2009exact}. 
\par Since ${\cal B}$ is QPT, $\eps_3(\secparam)$, which is the probability that ${\cal B}$ distinguishes PRS from Haar random, has to be negligible in $\secparam$. 
\end{proof}

\paragraph{Hybrid $H_{4}$.} Output $A_{\secparam}^{{\cal O}_{\sf Haar}(\cdot)}$.
\par The hybrids $H_{3.q}$ and $H_4$ are identical. \\

\noindent Thus, it follows that $A_{\secparam}$ can distinguish $H_1$ and $H_4$ at most with negligible probability. This completes the proof.

\end{proof}

\subsection{(Quantum-Accessible) Adaptively-Secure PRFS}
\label{sec:cons:qrfs}
\noindent We now focus on achieving adaptive quantum query security (\Cref{def:qaprfs}). 

We present two constructions and two proofs. The first construction requires the existence of subexponentially-secure post-quantum one-way functions, and we use a technique called \emph{complexity leveraging} \hnote{add citation?} to prove security of the construction. The second construction only requires the existence of polynomially-secure post-quantum one-way functions, but the analysis requires more sophisticated tools, namely an extension of Zhandry's small-range distribution technique to handle random unitary oracles. We believe that presenting two different ways to analyze PRFS may be helpful for future works.

\subsubsection{First Construction}

Our first construction is similar to the construction in Section~\ref{sec:classical:aprfs} except that we use the binary phase PRS construction of~\cite{ji2018pseudorandom,BS19} (and which we analyze in \Cref{sec:simpleranalysis:prs}). 
\par To construct a $(d(\secparam),n(\secparam))$-APRFS $G$, where $d(\secparam)=\secparam^{\gamma}$ with $1 > \gamma > 0$ and $\frac{n(\secparam)}{2}-d(\secparam) = \omega(\log(\secparam))$, we start with two pseudorandom functions:
\begin{itemize}
    \item $(d(\secparam),\ell(\secparam))$-post-quantum secure pseudorandom function $F_1$ (\Cref{def:pqprfs}), 
    \item $(\ell(\secparam),n(\secparam),1,\frac{1}{2^{\secparam^{\delta}}})$-quantum-query secure pseudorandom function $F_2$ (\Cref{def:qqprfs}), where $1 > \delta > \gamma > 0$.
\end{itemize} 

\noindent We note that the same construction also works for any $d(\secparam) <  \secparam^{\gamma}$.

\paragraph{Construction.} We present the construction of $G_{\secparam}$ in Figure~\ref{fig:aprfs:cons}.

\begin{figure}[!htb]
 \begin{tabular}{|p{16cm}|}
    \hline \\
    {\bf Input}: $k \in \{0,1\}^{\secparam}$, $x \in \{0,1\}^{d(\secparam)}$.
    \begin{itemize}
    \item Initialize $\ket{x}_{\bf X}\ket{0}_{{\bf K}}$.
    \item Apply a unitary $V_{F_1}$ on $\ket{x}_{\bf X}\ket{0}_{{\bf K}}$, where $V_{F_1}$ is defined as follows: $V_{F_1}\ket{a}\ket{b} = \ket{a}\ket{F_1(k,a) \oplus b}$. Let $k_x=F_1(k,x)$.
    \item Apply a unitary $U_{F_2}$ on  $\ket{k_x}_{\bf K}\ket{0}_{\bf Y}\ket{-}_{\bf Anc}$ to obtain $\ket{k_x}_{\bf K} \ket{\psi_{k_x}}_{\bf Y} \ket{-}_{\bf Anc}$, where $\ket{\psi_{k_x}}_{\bf Y}=\sum_{y \in \{0,1\}^{n(\secparam)}} \frac{(-1)^{F_2(k_x,y)}}{\sqrt{2^{n(\secparam)}}} \ket{y}$, and  $U_{F_2}$ is described as follows:
    \begin{itemize}[noitemsep]
        \item Apply $H^{\otimes n(\secparam)}$ on ${\bf Y}$. 
        \item Apply a unitary that maps $\ket{a}_{\bf K}\ket{b}_{\bf Y}\ket{c}_{\bf Anc}$ to $\ket{a}_{\bf K}\ket{b}_{\bf Y}\ket{c \oplus F_2(a,b)}_{\bf Anc}$.  
    \end{itemize}
    \item The resulting state is $\ket{x}_{\bf X}\ket{k_x}_{{\bf K}}\ket{\psi_{k_x}}_{\bf Y} \ket{-}_{\bf Anc}$. Trace out the register ${\bf Anc}$.

    \item Apply the unitary $V_{F_1}$ on the ${\bf X}$ and ${\bf K}$ registers (again) to obtain $\ket{x}_{\bf X}\ket{0}_{{\bf K}}\ket{\psi_{k_x}}_{\bf Y} \ket{0}_{\bf Anc}$. 
    \item Output $\ket{\psi_{k_x}}$. 
\end{itemize}
\\ \hline
    \end{tabular}
    \caption{Construction of $G_{\secparam}$.}
    \label{fig:aprfs:cons}
\end{figure}

\noindent We prove the security of the above construction in the lemma below. 

\begin{lemma}
Assuming the quantum-query $\nu(\secparam)$-security of the pseudorandom function $F_1$, where $\nu(\secparam)$ is a negligible function, the quantum-query $\frac{1}{2^{\secparam^{\delta}}}$-security of $F_2$, the QPT procedure $G$ described above is a secure $(d(\secparam),n(\secparam))$-APRFS (Definition~\ref{def:aprfs}). 
\end{lemma}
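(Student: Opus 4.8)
The plan is a hybrid argument that first de-randomizes the outer PRF $F_1$ and then replaces, one input at a time, the binary-phase state produced by $F_2$ with a genuine Haar-random state; the $2^{d(\secparam)}$-fold loss incurred by this input-by-input hybrid is precisely what the subexponential advantage bound $2^{-\secparam^\delta}$ of $F_2$ is there to absorb (``complexity leveraging'', using $\delta > \gamma$). I prove the quantum-accessible version of adaptive security, which subsumes the classically-accessible one.

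First I would pass from the real experiment $H_1$ to $H_2$, in which the two coherent applications of $V_{F_1}$ inside $G$ use a truly random function $R_1 : \bit^{d(\secparam)}\to\bit^{\ell(\secparam)}$ in place of $F_1(k,\cdot)$; since one call to the QAPRFS oracle costs only $O(1)$ evaluations of $F_1$ — with the key register carried in superposition, never enumerated — a $q$-query QAPRFS distinguisher yields an $O(q)$-query distinguisher against $F_1$, so $|H_1-H_2| \le \nu(\secparam)$ by the quantum-query security of $F_1$; for a uniform reduction one may instead take $R_1$ to be a $2q$-wise independent hash, perfectly indistinguishable to a $q$-query algorithm by \Cref{lem:qwise}. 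Conditioning on $R_1$ being injective changes the experiment by at most $2^{2d(\secparam)-\ell(\secparam)}$, negligible for the chosen key length; call the result $H_3$, in which the oracle implements the isometry $\ket{x}\mapsto\ket{x}\otimes 2^{-n/2}\sum_y(-1)^{F_2(R_1(x),y)}\ket{y}$, with distinct inputs mapped to distinct (essentially uniform and independent) $F_2$-keys.

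Next I run a hybrid over all $2^{d(\secparam)}$ inputs $x^{(1)},\dots,x^{(2^{d(\secparam)})}$: in $H_{4,j}$ the inputs $x^{(1)},\dots,x^{(j)}$ are answered with fixed independent Haar-random states and the rest as in $H_3$, so $H_{4,0}=H_3$ and $H_{4,2^{d(\secparam)}}$ is the ideal (Haar-oracle) experiment. For a single step, a ``padding'' argument — using that the answer states for inputs $\ne x^{(j)}$ do not depend on whatever is attached to $x^{(j)}$ — shows that the distinguisher's view in both $H_{4,j-1}$ and $H_{4,j}$ is one fixed channel applied to $q$ i.i.d.\ copies of the state attached to $x^{(j)}$ (that channel only needs $R_1$, the Haar states for $x^{(1)},\dots,x^{(j-1)}$, and the $F_2$-keys for $x^{(j+1)},\dots$, all of which a $\poly(\secparam)$-time reduction can sample). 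Interposing $q$ copies of the binary-phase state of a fresh truly random $g:\bit^n\to\bit$, the step from $F_2(R_1(x^{(j)}),\cdot)$ to $g$ is a $\poly(\secparam)$-time, $O(q)$-query distinguisher against the quantum-query $2^{-\secparam^\delta}$-security of $F_2$ and hence costs at most $2^{-\secparam^\delta}$, while the step from $q$ copies of the $g$-binary-phase state to $q$ copies of a Haar state costs $O(q^2/2^n)$ by \Cref{lem:binaryphase}. Summing, the total advantage is $\nu(\secparam) + 2^{2d(\secparam)-\ell(\secparam)} + 2^{d(\secparam)}\big(2^{-\secparam^\delta} + O(q^2/2^n)\big)$, which is negligible once $d(\secparam)=\secparam^\gamma$ with $\delta > \gamma$ (so $2^{\secparam^\gamma-\secparam^\delta}\le 2^{-\secparam^\delta/2}$ eventually) and $\tfrac{n}{2}-d(\secparam)=\omega(\log\secparam)$ (so $2^{d(\secparam)}q^2 2^{-n}$ is negligible), taking $\ell(\secparam)$ to be, say, $\secparam$.

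The main obstacle — and the place where complexity leveraging does the real work — is running a hybrid with $2^{\secparam^\gamma}$ many steps while keeping each $F_2$-reduction polynomial-time: this works only because a single QAPRFS query costs $O(1)$ coherent evaluations of $F_1,F_2$ rather than an enumeration over the $2^{d(\secparam)}$ inputs, so the $2^{d(\secparam)}$ factor appears only additively in the advantage accounting, where $2^{-\secparam^\delta}$ dominates it. The remaining points are routine bookkeeping: the ``padding'' reduction to \Cref{lem:binaryphase} in the quantum-access oracle model (fresh answer registers per query), and the small distributional corrections from conditioning $R_1$ on injectivity.
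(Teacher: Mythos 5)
Your proof follows essentially the same route as the paper's: replace $F_1(k,\cdot)$ by a random function at cost $\nu(\secparam)$, then run a $2^{d(\secparam)}$-step hybrid over inputs charging $2^{-\secparam^{\delta}}$ per step to the subexponential security of $F_2$, and finish with \Cref{lem:binaryphase} to pass from random binary-phase states to Haar states, with the same parameter accounting (your explicit injectivity conditioning on $R_1$ is a small addition the paper leaves implicit). The one wrinkle is that by merging the $F_2$-replacement and Haar-replacement chains into a single per-input chain, your $F_2$-reduction at step $j$ must coherently attach exact Haar states to the already-switched inputs $x^{(1)},\dots,x^{(j-1)}$, which a $\poly(\secparam)$-time reduction cannot do as stated; this is repaired either by substituting $q$-designs or by doing what the paper does, namely running all $F_2$-replacements first (already-switched inputs then carry random-function binary-phase states, simulable via $2q$-wise independent hashes per \Cref{lem:qwise}) and deferring the purely statistical binary-phase-to-Haar switch to a final chain where reduction efficiency is irrelevant.
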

\begin{proof}
We prove this by a standard hybrid argument. Let $A_{\secparam}$ be a QPT distinguisher that distinguishes the oracles ${\cal O}_{\sf PRFS}(k,\cdot)$, where $k \leftarrow \{0,1\}^{\secparam}$, and ${\cal O}_{\sf Haar}$ with probability $\eps$. We prove that $\eps$ is negligible. Let $q$ be the number of queries made by $A_{\secparam}$. 

\paragraph{Hybrid $H_{1}$.} Output $A_{\secparam}^{{\cal O}_{\sf PRFS}(k,\cdot)}$, where $k \leftarrow \{0,1\}^{\secparam}$. Observe that ${\cal O}_{\sf PRFS}(k,\cdot)$ can be implemented as follows. 
\par On input a $d$-qubit register ${\bf X}$, do the following:
\begin{enumerate}
    \item Initialize a register ${\bf K}$ with $\ket{0}$.
    \item Apply the unitary $V_{F_1}$ on the registers ${\bf X}$ and ${\bf K}$, where $V_{F_1}$ is defined as follows: $V_{F_1}\ket{x}\ket{a} = \ket{x}\ket{F_1(k,x) \oplus a}$. 
    \item Initialize two registers ${\bf Y}$ and ${\bf Anc}$ with $\ket{0}$ and $\ket{-}$ respectively.
    \item Apply the unitary $U_{F_2}$ (described in the construction) on the registers ${\bf K}$, ${\bf Y}$ and ${\bf Anc}$. 
    \item Apply the unitary $V_{F_1}$ on the registers ${\bf X}$ and ${\bf K}$. 
    \item Output the state in the registers ${\bf X}$ and ${\bf Y}$. 
\end{enumerate}

\paragraph{Hybrid $H_{2}$.} Output $A_{\secparam}^{{\cal O}_{H_2}(k,\cdot)}$, where $k \leftarrow \{0,1\}^{\secparam}$, where $O_{H_2}$ is defined below.   \par Sample $\widehat{F_1}:\{0,1\}^{d(\secparam)} \rightarrow \{0,1\}^{\ell(\secparam)}$ uniformly at random. On input a $d$-qubit register ${\bf X}$, do the following:
\begin{enumerate}
    \item Initialize a register ${\bf K}$ with $\ket{0}$.
    \item Apply the unitary $V_{\widehat{F_1}}$ on the registers ${\bf X}$ and ${\bf K}$, where $V_{\widehat{F_1}}$ is defined as follows: $V_{\widehat{F_1}}\ket{x}\ket{a} = \ket{x}\ket{\widehat{F_1}(x) \oplus a}$. 
    \item Initialize two registers ${\bf Y}$ and ${\bf Anc}$ with $\ket{0}$ and $\ket{-}$ respectively..
    \item Apply the unitary $U_{F_2}$ (described in the construction) on the registers ${\bf K}$, ${\bf Y}$ and ${\bf Anc}$. 
    \item Apply the unitary $V_{\widehat{F_1}}$ on the registers ${\bf X}$ and ${\bf K}$. 
    \item Output the state in the registers ${\bf X}$ and ${\bf Y}$. 
\end{enumerate}

\begin{claim}
\label{clm:qprfs:1:2}
Assuming the quantum-query security of $F_1$, $A_{\secparam}$ can distinguish the hybrids $H_1$ and $H_2$ with probability at most $\nu(\secparam)$. 
\end{claim}
\begin{proof}
Suppose $A_{\secparam}$ can distinguish $H_1$ and $H_2$ with probability $> \nu(\secparam)$. We show that there exists a QPT adversary ${\cal B}$ that violates the quantum-query security of $F_1$ with probability $> \nu(\secparam)$. 
\par ${\cal B}$ runs $A_{\secparam}$ by simulating the oracle that $A_{\secparam}$ has access to. For every query made by $A_{\secparam}$, ${\cal B}$ with oracle access to ${\cal O}$ (which is either ${\cal O}_{\sf prf}(k,\cdot)$ or ${\cal O}_{\sf Rand}$;~\Cref{def:qqprfs}), does the following: it performs 6 steps, where steps 1,3,4 and 6 are the same as in $H_1$ (also identical to the steps 1,3,4 and 6 of $H_2$). We describe the steps 2 and 5 below. \begin{itemize}
    \item [2.] Send the registers ${\bf X}$ and ${\bf K}$ to ${\cal O}$.
    \item [5.] Send the registers ${\bf X}$ and ${\bf K}$ to ${\cal O}$. 
\end{itemize}
If ${\cal B}$ has oracle access to ${\cal O}_{\sf prf}(k,\cdot)$ then the output distribution of $A_{\secparam}$ is identical to the output distribution of $A_{\secparam}$ in $H_1$. If ${\cal B}$ has oracle access to ${\cal O}_{\sf Rand}$ then the output distribution of $A_{\secparam}$ is identical to the output of $A_{\secparam}$ in $H_2$. Thus, the distinguishing probability of ${\cal B}$ is $> \nu(\secparam)$, which is a contradiction to the quantum-query security of $F_1$.
\end{proof}

\noindent We fix some lexicographic ordering on $\{0,1\}^{d(\secparam)}$ and we use this ordering implicitly in the next few hybrids.

\paragraph{Hybrid $H_{3.i}$ for $i \in \{0,1\}^{d(\secparam)}$.} Output $A_{\secparam}^{{\cal O}_{H_{3.i}}(k,\cdot)}$, where $k \leftarrow \{0,1\}^{\secparam}$, where $O_{H_{3.i}}$ is defined below.   
\par Sample $\widehat{F_1}:\{0,1\}^{d(\secparam)} \rightarrow \{0,1\}^{\ell(\secparam)}$ uniformly at random. Also, sample $\widehat{F_2}:\{0,1\}^{\ell(\secparam)} \times \{0,1\}^{n(\secparam)} \rightarrow \{0,1\}$ uniformly at random. On input a $d$-qubit register ${\bf X}$, do the following:
\begin{enumerate}
    \item Initialize a register ${\bf K}$ with $\ket{0}$.
    \item Apply the unitary $V_{\widehat{F_1}}$ on the registers ${\bf X}$ and ${\bf K}$, where $V_{\widehat{F_1}}$ is defined as follows: $U_{\widehat{F_1}}\ket{x}\ket{a} = \ket{x}\ket{\widehat{F_1}(x) \oplus a}$. 
    \item Initialize two registers ${\bf Y}$ and ${\bf Anc}$ with $\ket{0}$ and $\ket{-}$ respectively.
    \item Apply the unitary $U^{{\sf Hyb}}$  on the registers ${\bf X},{\bf K}$, ${\bf Y}$ and ${\bf Anc}$, where $U^{{\sf Hyb}}$ maps $\ket{x}_{\bf X}\ket{k_x}_{\bf K}\ket{0}_{\bf Y}\ket{-}_{\bf Anc}$ to $\ket{x}_{\bf X}\ket{k_x}_{\bf K} \ket{\psi_{k_x}}_{\bf Y} \ket{-}_{\bf Anc}$ and $\ket{\psi_{k_x}}$ is defined below:
    \begin{itemize}
        \item If $x \leq i$, then  $\ket{\psi_{k_x}}_{\bf Y}=\sum_{y \in \{0,1\}^{n(\secparam)}} \frac{(-1)^{\widehat{F_2}(x,y)}}{\sqrt{2^{n(\secparam)}}} \ket{y}$,
        \item If $x >  i$, then  $\ket{\psi_{k_x}}_{\bf Y}=\sum_{y \in \{0,1\}^{n(\secparam)}} \frac{(-1)^{F_2(k_x,y)}}{\sqrt{2^{n(\secparam)}}} \ket{y}$
    \end{itemize}
   
    \item Apply the unitary $V_{\widehat{F_1}}$ on the registers ${\bf X}$ and ${\bf K}$. 
    \item Output the state in the registers ${\bf X}$ and ${\bf Y}$. 
\end{enumerate}

\begin{claim}
\label{clm:qprfs:3}
For every $i \in \{0,1\}^{d(\secparam)}$, assuming the quantum-query $\frac{1}{2^{\secparam^{\delta}}}$-security of $F_2$, $A_{\secparam}$ can distinguish the hybrids $H_{3.i}$ and $H_{3.i-1}$ only with  probability at most $\frac{1}{2^{\secparam^{\delta}}}$.
\end{claim}
\begin{proof}
Suppose $A_{\secparam}$ can distinguish $H_{3.i}$ and $H_{3.i-1}$ with probability $> \frac{1}{2^{\secparam^{\delta}}}$. We show that there exists a QPT adversary ${\cal B}$ that violates the quantum-query security of $F_2$ with probability $> \frac{1}{2^{\secparam^{\delta}}}$. We will first see how to construct an inefficient ${\cal B}$ and later, we will see how to make ${\cal B}$ run in quantum poly-time. 
\par ${\cal B}$ runs $A_{\secparam}$ by simulating the oracle $A_{\secparam}$ has access to. It first samples  $\widehat{F_1}:\{0,1\}^{d(\secparam)} \rightarrow \{0,1\}^{\ell(\secparam)}$ uniformly at random. For every query made by $A_{\secparam}$, ${\cal B}$ with oracle access to ${\cal O}$ (which is either ${\cal O}_{\sf prf}(k_i^*,\cdot)$, where $k_i^* \xleftarrow{\$} \{0,1\}^{\ell(\secparam)}$ or ${\cal O}_{\sf Rand}$;~\Cref{def:qqprfs}), does the following: it performs 6 steps, where steps 1,2,3,5 and 6 are the same as in $H_1$ (in turn identical to the steps 1,2,3,5 and 6 of $H_2$). We describe the step 4 below. \begin{itemize}
    \item [4.] Apply the following unitary on the registers ${\bf X},{\bf K}$, ${\bf Y}$ and ${\bf Anc}$, where the unitary maps $\ket{x}_{\bf X}\ket{k_x}_{\bf K}\ket{0}_{\bf Y}\ket{-}_{\bf Anc}$ to  $\ket{x}_{\bf X}\ket{k_x}_{\bf K} \ket{\psi_{k_x}}_{\bf Y} \ket{-}_{\bf Anc}$ and $\ket{\psi_{k_x}}$ is defined below\footnote{To see how to implement this unitary, let us take an example where $A_{\secparam}$ queries a pure state $\sum_{x \in \{0,1\}^{d(\secparam)}} \alpha_x \ket{x}_{\bf X}$. This argument can be naturally generalized to the case when $A_{\secparam}$ queries a mixed state. Using an appropriately defined controlled unitary, first create the state $\sum_{x \neq i} \alpha_x \ket{x}_{\bf X}\ket{k_x}_{\bf K}\ket{\psi_{k_x}}_{\bf Y} + \alpha_i \ket{i}_{\bf X}\ket{k_i}_{\bf K}\ket{0}_{\bf Y}$, where $\ket{\psi_{k_x}}$ for $x \neq i$ is computed as mentioned in the bullets in the proof of Claim~\ref{clm:qprfs:3}. Then, using the oracle ${\cal O}$, create the state $\ket{\psi_i}$. Using $\ket{\psi_i}$ and the controlled SWAP operation, create the state $\sum_{x \neq i} \alpha_x \ket{x}_{\bf X} \ket{k_x}_{\bf K} \ket{\psi_{k_x}}_{\bf Y} \ket{\psi_{i}}_{\bf Z} + \alpha_i \ket{i}_{\bf X} \ket{k_i}_{\bf K} \ket{\psi_i}_{\bf Y}\ket{0}_{\bf Z}$, where ${\bf Z}$ is some new register. Using the oracle ${\cal O}$ again, we can create the state $\sum_{x \neq i} \alpha_x \ket{x}_{\bf X} \ket{k_x}_{\bf K} \ket{\psi_{k_x}}_{\bf Y} \ket{0}_{\bf Z}\ket{0}_{\bf Anc} + \alpha_i \ket{i}_{\bf X}\ket{k_i}_{\bf K}\ket{\psi_i}_{\bf Y}\ket{0}_{\bf Z}\ket{{\cal O}(0)}_{\bf Anc}$. Finally, after  querying ${\cal O}$ on 0, we can suitably modify the previous state to obtain  $\sum_{x \neq i} \alpha_x \ket{x}_{\bf X}\ket{k_x}_{\bf K}\ket{\psi_{k_x}}_{\bf Y} \ket{0}_{\bf Z}\ket{-}_{\bf Anc} + \alpha_i \ket{i}_{\bf X}\ket{k_i}_{\bf K}\ket{\psi_i}_{\bf Y}\ket{0}_{\bf Z}\ket{-}_{\bf Anc}$. We can trace out the ${\bf Z}$ register to obtain the desired outcome. }: 
    \begin{itemize}
        \item If $x < i$, then  $\ket{\psi_{k_x}}_{\bf Y}=\sum_{y \in \{0,1\}^{n(\secparam)}} \frac{(-1)^{\widehat{F_2}(x,y)}}{\sqrt{2^{n(\secparam)}}} \ket{y}$
        \item If $x=i$, then $\ket{\psi_{k_x}}_{{\bf Y}}$ is obtained by querying ${\cal O}$ on $\sum_{y \in \{0,1\}^{n(\secparam)}} \frac{1}{\sqrt{2^{n(\secparam)}}} \ket{y} \ket{-}$ and then trace out the last qubit.
        \item If $x >  i$, then  $\ket{\psi_{k_x}}_{\bf Y}=\sum_{y \in \{0,1\}^{n(\secparam)}} \frac{(-1)^{F_2(k_x,y)}}{\sqrt{2^{n(\secparam)}} }\ket{y}$
    \end{itemize}
\end{itemize}
If ${\cal B}$ has oracle access to ${\cal O}_{\sf prf}(k_i^*,\cdot)$ then the output distribution of $A_{\secparam}$ is identical to the output distribution of $A_{\secparam}$ in $H_{3.i-1}$. If ${\cal B}$ has oracle access to ${\cal O}_{\sf Rand}$ then the output distribution of $A_{\secparam}$ is identical to the output of $A_{\secparam}$ in $H_{3.i}$. Thus, the distinguishing probability of ${\cal B}$ is at least $\frac{1}{2^{\secparam^{\delta}}}$.
\par Since ${\cal B}$ samples $\widehat{F_1}$ and $\widehat{F_2}$ uniformly at random, ${\cal B}$ does not run in polynomial time. However,  using~\Cref{lem:qwise}, we can replace both $\widehat{F_1}$ and $\widehat{F_2}$ with $2q(\secparam)$-wise independent hash functions, where $q(\secparam)$ is the number of queries made by $A_{\secparam}$, without changing the distinguishing probability. Once this change is made, ${\cal B}$ is now a QPT algorithm. This  contradicts the quantum-query security of $F_2$.
\end{proof}

\noindent Similarly, the following proof also holds. Let $i_{\rm min}$ be the minimum element in $\{0,1\}^{d(\secparam)}$ according to the lexicographic ordering described earlier. 

\begin{claim}
\label{clm:qprfs:2:3}
For every $i \in \{0,1\}^{d(\secparam)}$, assuming the quantum-query $\frac{1}{2^{\secparam^{\delta}}}$-security of $F_2$, $A_{\secparam}$ can distinguish the hybrids $H_{2}$ and $H_{3.i_{\rm min}}$ only with  probability at most $\frac{1}{2^{\secparam^{\delta}}}$.
\end{claim}

\paragraph{Hybrid $H_{4}$.} Output $A_{\secparam}^{{\cal O}_{H_4}(k,\cdot)}$, where $k \leftarrow \{0,1\}^{\secparam}$, where $O_{H_4}$ is defined below.   
\par Sample $\widehat{F_1}:\{0,1\}^{d(\secparam)} \rightarrow \{0,1\}^{\ell(\secparam)}$ uniformly at random. Also, sample $\widehat{F_2}:\{0,1\}^{\ell(\secparam)} \times \{0,1\}^{n(\secparam)} \rightarrow \{0,1\}$ uniformly at random. On input a $d$-qubit register ${\bf X}$, do the following:
\begin{enumerate}
    \item Initialize a register ${\bf K}$ with $\ket{0}$.
    \item Apply the unitary $V_{\widehat{F_1}}$ on the registers ${\bf X}$ and ${\bf K}$, where $V_{\widehat{F_1}}$ is defined as follows: $U_{\widehat{F_1}}\ket{x}\ket{a} = \ket{x}\ket{\widehat{F_1}(x) \oplus a}$. 
    \item Initialize two registers ${\bf Y}$ and ${\bf Anc}$ with $\ket{0}$ and $\ket{-}$ respectively.
    \item Apply the unitary $U_{\widehat{F_2}}$  on the registers ${\bf X},{\bf K}$, ${\bf Y}$ and ${\bf Anc}$, where $U_{\widehat{F_2}}$ is defined the same way as $U_{F_2}$ except that $\widehat{F_2}$ is used instead of $F_2$. 
   
    \item Apply the unitary $V_{\widehat{F_1}}$ on the registers ${\bf X}$ and ${\bf K}$. 
    \item Output the state in the registers ${\bf X}$ and ${\bf Y}$. 
\end{enumerate}

\noindent The following claim follows from the descriptions of $H_{3.i_{\rm max}}$, where $i_{\rm max}$ is the maximum element in $\{0,1\}^{d(\secparam)}$ according to the lexicographic ordering considered earlier, and $H_4$.  

\begin{claim}
\label{clm:qprfs:3:4}
The hybrids $H_{3.i_{\rm max}}$ and $H_4$ are identically distributed. 
\end{claim}

\paragraph{Hybrid $H_{5}$.} Output $A_{\secparam}^{{\cal O}_{{\sf Haar}}(\cdot)}$.   

\begin{claim}
\label{clm:4:5}
Suppose $q$ be the number of queries made by $A_{\secparam}$. Then,  $A_{\secparam}$ can distinguish the  hybrids $H_4$ and $H_5$ with probability at most $O\left(\frac{q \cdot 2^{d(\secparam)}}{2^{\frac{n(\secparam)}{2}}} \right)$. 
\end{claim}
\begin{proof}
We can define a sequence of $2^{d(\secparam)}$ intermediate hybrids. Hybrid $H_{4.i}$, for $i \in \{0,1\}^{d(\secparam)}$, is defined as follows: it behaves like hybrid $H_4$ except in Step 4. 
\begin{itemize}
    \item [4.] Apply the following unitary on the registers ${\bf X},{\bf K}$, ${\bf Y}$ and ${\bf Anc}$, where the unitary maps $\ket{x}_{\bf X}\ket{k_x}_{\bf K}\ket{0}_{\bf Y}\ket{-}_{\bf Anc}$ to  $\ket{x}_{\bf X}\ket{k_x}_{\bf K} \ket{\psi_{k_x}}_{\bf Y} \ket{-}_{\bf Anc}$ and $\ket{\psi_{k_x}}$ is defined below: 
    \begin{itemize}
        \item If $x \leq  i$, then  $\ket{\psi_{k_x}}_{\bf Y}=\sum_{y \in \{0,1\}^{n(\secparam)}} \frac{(-1)^{\widehat{F_2}(x,y)}}{\sqrt{2^{n(\secparam)}}} \ket{y}$
        \item If $x > i$, then  $\ket{\psi_{k_x}}_{\bf Y} \leftarrow \Haar_{n(\lambda)}$. 
    \end{itemize}
\end{itemize}
\noindent Hybrids $H_{4.i-1}$ and $H_{4.i}$ are $O(\frac{q}{2^{n(\secparam)\2}})$-close from~\Cref{lem:binaryphase}. Moreover, for the same reason, Hybrids $H_{4}$ and $H_{4.1}$ are also $O(\frac{q}{2^{n(\secparam)/2}})$-close. Finally, hybrids $H_{4.i_{\rm max}}$, where $i_{\rm max}$ is the maximum element, and $H_{5}$ are identically distributed. 
\par From this, it follows that the hybrids $H_4$ and $H_5$ are $O\left(\frac{q \cdot 2^{d(\secparam)}}{2^{n(\secparam)/2}}\right)$-statistically close. 
\end{proof}

\noindent By applying triangle inequality to~\Cref{clm:qprfs:1:2},~\Cref{clm:qprfs:3},~\Cref{clm:qprfs:2:3},~\Cref{clm:qprfs:3:4} and~\Cref{clm:4:5}, the following holds: 
\begin{eqnarray*}
\eps = \nu(\secparam) + \frac{2^{d(\secparam)}}{2^{\secparam^{\delta}}} + O\left(\frac{q \cdot 2^{d(\secparam)}}{2^{\frac{n(\secparam)}{2}}} \right)
\end{eqnarray*}
Using the facts that $d(\secparam)=\secparam^{\gamma}$, $\gamma < \delta$ and $\frac{n(\secparam)}{2} - d(\secparam)=\omega(\log(\secparam))$, it follows that $\eps(\secparam)$ is a negligible function in $\secparam$. 
\end{proof}


\subsubsection{Second Construction}

We now present our second construction of QAPRFS. The advantage of this construction and analysis is that the security of~\Cref{fig:aprfs:cons} can be based only on the existence of polynomially-secure post-quantum one-way functions (as opposed to the sub-exponentially secure ones needed by the first construction). \luowen{and PRU?}
The security proof is a bit more involved and uses a quantum unitary version of Zhandry's small range distribution theorem, proven in \Cref{sec:quantum-small-range}, which could be of independent interest.

\par The construction is actually nearly identical, except we set the pseudorandom functions $F_1, F_2$ to the \emph{same} pseudorandom function $f$. In the previous subsection, we had to treat $F_1$ and $F_2$ separately since we required different levels of security from both. 
\luowen{Personally I feel like the construction here is much simpler than before (due to complexity leveraging and carefully choosing the parameters) although the security proof is more involved.} \hnote{I've added a paragraph at the beginning of Section 4.3 to explain why we have two different constructions}

\hnote{Need to be a bit more precise about what $f$ is... is it satisfying Def 2.5 or 2.6?}


\begin{theorem}
Assuming the existence of quantum query secure one-way functions,~\Cref{fig:aprfs:cons} with $F_1 = F_2 = f$ is a $(d(\secparam),n(\secparam))$-QAPRFS. 
\end{theorem}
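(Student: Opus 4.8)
The plan is a hybrid argument: a single computational step---replacing the pseudorandom function $f$ by a truly random function, using the quantum-query security of $f$ (which the assumed one-way functions provide via Zhandry's construction, \Cref{def:qqprfs})---followed by purely statistical steps organized around small-range distributions, and in particular the quantum-unitary small-range distribution theorem of \Cref{sec:quantum-small-range}. Fix a QPT distinguisher $A=(A_\lambda,\rho_\lambda)$ making at most $q=q(\lambda)$ superposition queries and suppose for contradiction that it breaks \Cref{def:qaprfs} with non-negligible advantage $\eps(\lambda)$. In \Cref{fig:aprfs:cons} with $F_1=F_2=f$ (where $f$ is tagged so that its two ``modes'' act on disjoint parts of its domain), $f$ is accessed only through the XOR-type unitaries $V_{F_1}$ and $U_{F_2}$, a constant number of times per query; hence a QPT reduction simulates $A$'s oracle with $O(q)$ superposition queries to its own oracle, which is either $f$ or a random function. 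This moves from the real experiment $H_0$ to a hybrid $H_1$ in which a query $x$ is answered by the binary-phase state $\ket{\psi_{R_2(R_1(x),\cdot)}}$ for independent random functions $R_1\colon\{0,1\}^d\to\{0,1\}^\ell$ and $R_2\colon\{0,1\}^\ell\times\{0,1\}^n\to\{0,1\}$, at cost at most $\mathsf{negl}(\lambda)$ in distinguishing advantage.

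The key observation is that $H_1$ is a \emph{small-range oracle}: $R_1$ assigns each input to one of $2^\ell$ buckets, and bucket $z$ carries the binary-phase state determined by the independent random function $R_2(z,\cdot)$, a fresh copy of which is returned on each query landing in $z$. First I would collapse the number of buckets from $2^\ell$ to a polynomial $r=r(\lambda)$ (fixed later), keeping the binary-phase ensemble, obtaining $H_2$ at cost $O(q^3/2^\ell)+O(q^3/r)$ by small-range bounds (applied to the classical function $R_1$, which an inefficient reduction may do since it can sample $R_2$ by brute force). Now that there are only polynomially many buckets, I would replace their states one at a time: conditioned on the classical bucket-assignment function and on the other buckets, $A$'s final view is a channel applied to at most $q$ copies of the bucket's state, so by \Cref{lem:binaryphase} with $t=q$, each swap of a binary-phase state for a Haar-random state changes $A$'s acceptance probability by $O(q^2/2^n)$; summing over the $r$ buckets gives $H_3$, the $r$-bucket small-range oracle over Haar-random states, at cost $O(rq^2/2^n)$. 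Finally, one application of the quantum-unitary small-range theorem of \Cref{sec:quantum-small-range} identifies $H_3$ with the ``full-range'' Haar oracle $\mathcal{O}_{\mathsf{Haar}}$ of \Cref{def:qaprfs} up to $O(q^3/r)$. Collecting the bounds,
\[
\eps(\lambda)\ \le\ \mathsf{negl}(\lambda)+O\!\left(\frac{q^3}{2^\ell}\right)+O\!\left(\frac{q^3}{r}\right)+O\!\left(\frac{r q^2}{2^n}\right).
\]
Taking $r=\Theta(q^3/\eps)$, which is polynomial in $\lambda$ on the infinite set of $\lambda$ where $\eps(\lambda)\ge 1/\poly(\lambda)$, makes each middle term at most $\eps/4$, and in the parameter regime of \Cref{fig:aprfs:cons} both $2^\ell$ and $2^n$ are super-polynomial in $\lambda$, so the remaining terms are $o(\eps)$---a contradiction. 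Note that this route avoids the factor $2^d$ appearing in the first construction's bound (\Cref{clm:4:5}), which is precisely why only polynomial security of $f$ is needed.

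The main obstacle is the quantum-unitary small-range theorem itself, together with the care required to run the bucket-wise step against superposition queries: one must verify that Zhandry's small-range machinery extends to oracles whose ``outputs'' are (copies of) quantum states rather than classical strings, and that conditioning on the classical bucket-assignment function genuinely reduces each replacement to the single-ensemble bound of \Cref{lem:binaryphase}. The reduction from exponentially many to polynomially many buckets ($H_1\to H_2$) before the binary-phase-to-Haar swap is what makes the length-$r$ hybrid feasible---a naive hybrid over all $2^d$ inputs would be exponentially long---so getting its quantitative error right, and then balancing $r$ against the output length $n$, is the delicate part of the bookkeeping.
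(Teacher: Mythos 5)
Your proposal is correct and follows essentially the same route as the paper's proof: one computational step using quantum-query PRF security, a collapse to an $r$-bucket small-range oracle via \Cref{thm:smallrange-unitary}, a bucket-by-bucket replacement of binary-phase states with Haar-random states via \Cref{lem:binaryphase}, and a final small-range step back to the full-range Haar oracle, with the same choice $r=\Theta(q^3/\eps)$. The only differences are cosmetic (your intermediate $2^\ell$-bucket hybrid and the slightly sharper per-bucket bound $O(q^2/2^n)$), and you correctly flag the one subtlety the paper also relies on, namely that conditioning on the bucket-assignment reduces each swap to a channel acting on at most $q$ copies of a single state.
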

\begin{proof}

  Assume for contradiction that there is an adversary that distinguishes the real world oracle from the ideal world oracle with noticeable advantage $\alpha(\lambda)$ with $q(\lambda)$ queries.
  We carry out the following hybrids, with $r = 1200q^3/\alpha$.
  \begin{description}
    \item[Hybrid 0] This is the real world oracle.
    \item[Hybrid 1] We change $f_k$ in the construction to a random function $f$, i.e. the oracle is now
      \[ \ket x \mapsto \ket{x} \otimes \sum_{y \in \{0, 1\}^d} (-1)^{f(x, y)} \ket{y}. \]
      This is computationally indistinguishable to the last hybrid by the quantum query security of PRF.
    \item[Hybrid 2] We interpret the random function as $\bit^d \to (\bit^n \to \bit^n)$ instead of $\bit^{d + n} \to \bit^n$, and change the random function $f$ to instead be sampled by a small-range distribution $\SR_r^{\bit^n \to \bit^n}(\bit^d)$, i.e. we sample $r$ random functions $\bit^n \to \bit^n$ and use a random one for every prefix $f(x, \cdot)$, where $x \in \{0,1\}^d$.
      By \Cref{thm:smallrange-unitary}, this change is statistically indistinguishable except with $300q^3/r$ advantage.
      \luowen{If we move this to the main body, then the small range thing becomes undefined... (They are only defined in \Cref{sec:quantum-small-range}.}
    \item[Hybrid 3.0 -- 3.$r$] Hybrid 3.0 is the same as Hybrid 2; or rather, we will design it in a way so that they are identically distributed.
      In particular, we are going to consider Hybrid 3.0 to be sampled equivalently (as Hybrid 2) as follows: for each $i \in [r]$, sample a function $g_i$ from $\bit^n \to \bit^n$; and upon invoking at $x \in \bit^d$, we consider invoking the isometry $U_{i_x}$, where isometry $U_i$ outputs $\sum_{y \in \bit^d} (-1)^{g_i(y)} \ket y$ for $i \in [r]$.
    
      For $i = 1, ..., r$, Hybrid 3.$i$ is the same as Hybrid 3.$(i - 1)$ except that the $i$-th entry of the small range distribution $U_i$ is changed from above to an isometry that outputs a Haar random state.
      (More formally, we are going to sample a Haar random state $\ket{\psi_i}$ and when $U_i$ is invoked, it simply outputs $\ket{\psi_i}$.)
      By \Cref{lem:binaryphase}, this change is statistically indistinguishable except with a negligible $O(q/2^{n/2})$ advantage.
    \item[Hybrid 4] We change the unitary distribution from $\SR_r^{U_n}(\bit^d)$ to $U_n^{\bit^d}$, where $U_n$ denote the isometry for outputting $n$-qubit Haar random states as specified before.
      This is equivalent to the ideal world oracle, and it is statistically indistinguishable from Hybrid 3.$r$ except with $300q^3/r$ advantage again by \Cref{thm:smallrange-unitary}.
  \end{description}
  By triangle inequality, we conclude that no efficient adversary is able to distinguish Hybrid 0 from Hybrid 4 with advantage more than $\frac{600q^3}r + \varepsilon = \alpha/2 + \varepsilon$ for some negligible quantity $\varepsilon$.
  This is a contradiction as $\alpha$ is noticeable.
\end{proof}

\paragraph{QAPRFS from PRUs.} We present another construction of QAPRFS from pseudorandom unitaries.
We first recall the definition of PRUs \cite{ji2018pseudorandom}, with added parameters about the size of the unitary.

\begin{definition}
  Let $\mathcal H(\lambda)$ be the Hilbert space over $n(\lambda)$ qubits.
  A family of unitary operators $\{U_k \in U(\mathcal H)\}_{k \in \mathcal K}$ is pseudorandom, if two conditions hold:
  \begin{enumerate}
    \item (Efficient generation) There is an efficient quantum algorithm $Q$, such that for all $k \in \{0, 1\}^\lambda$ and any $\ket\psi \in S(\mathcal H)$, $Q(k, \ket\psi) = U_k\ket\psi$.
    \item (Pseudorandomness) $U_k$ for a random $k$, given as an oracle, is computationally indistinguishable from a Haar random unitary operator.
      More precisely, for any efficient quantum algorithm $\mathcal A$, there exists a negligible function $\varepsilon(\cdot)$ such that
      \[
        \left|\Pr_{k \gets \{0, 1\}^\lambda}[A^{U_k}(1^\lambda) = 1] - \Pr_{U \gets \mu}[A^{U}(1^\lambda) = 1]\right| \le \varepsilon(\lambda).
      \]
  \end{enumerate}
\end{definition}

\begin{theorem}
  Assuming $n(\secparam)$-qubit pseudorandom unitaries (PRUs) with $n = \omega(\log\lambda)$, $(d(\secparam),n(\secparam))$-QAPRFS exist for any $d \le n$.
\end{theorem}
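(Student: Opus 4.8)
The plan is to use the PRU itself as the QAPRFS generator and prove security in two moves: first replace the pseudorandom unitary by a genuine Haar-random unitary, then argue that a Haar-random unitary, seen through the QAPRFS oracle, is statistically indistinguishable from a family of independent Haar states for any polynomial-query adversary, using only that $2^{n(\secparam)}$ is superpolynomial.

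\textbf{Construction and setup.} Given an $n(\secparam)$-qubit PRU $\{U_k\}_{k\in\bit^\secparam}$ with efficient generator $Q$, and $d\le n$, define $G_\secparam(k,x):=U_k\bigl(\ket{x}\otimes\ket{0^{n-d}}\bigr)$ for $x\in\bit^{d}$, computed by running $Q(k,\ket{x}\ket{0^{n-d}})$; this is QPT. The induced oracle of \Cref{def:qaprfs} is the isometry ${\cal O}_{\sf PRFS}(k,\cdot)\colon \ket{x}_{\bf X}\mapsto\ket{x}_{\bf X}\otimes U_k(\ket{x}\ket{0^{n-d}})_{\bf Y}$. Let $A$ be a QPT distinguisher making $q=\poly(\secparam)$ queries.

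\textbf{Step 1: from PRU to Haar unitary.} I would first replace $U_k$ (random $k$) by a Haar-random $R$ on $\C^{2^n}$, giving the oracle ${\cal O}'_R\colon\ket{x}_{\bf X}\mapsto\ket{x}_{\bf X}\otimes R(\ket{x}\ket{0^{n-d}})_{\bf Y}$. The key point is that ${\cal O}_{\sf PRFS}(k,\cdot)$ can be implemented with a single superposition call to a $U_k$-oracle: append a fresh $\ket{0^{n}}_{\bf Y}$, copy ${\bf X}$ into the first $d$ qubits of ${\bf Y}$ by CNOTs (mapping $\ket{x}_{\bf X}\ket{0^n}_{\bf Y}\mapsto\ket{x}_{\bf X}\ket{x}\ket{0^{n-d}}_{\bf Y}$ and extending linearly), and then apply the $U_k$-oracle to ${\bf Y}$. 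Hence any distinguisher between ${\cal O}_{\sf PRFS}$ (random $k$) and ${\cal O}'_R$ (Haar $R$) yields, with the same advantage and query count, a distinguisher between $U_k$ and a Haar-random unitary, so by PRU security $\bigl|\Pr[A^{\ket{{\cal O}_{\sf PRFS}(k,\cdot)}}=1]-\E_R\Pr[A^{\ket{{\cal O}'_R}}=1]\bigr|\le\mathsf{negl}(\secparam)$.

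\textbf{Step 2: from Haar unitary to independent Haar states.} It remains to bound $\bigl|\E_R\Pr[A^{\ket{{\cal O}'_R}}=1]-\Pr[A^{\ket{{\cal O}_{\sf Haar}}}=1]\bigr|$. Note that under ${\cal O}'_R$ the states returned on the $2^d$ inputs are $\{R(\ket{x}\ket{0^{n-d}})\}_x$, which by unitary invariance of the Haar measure (the $\ket{x}\ket{0^{n-d}}$ being orthonormal) is a uniformly random orthonormal $2^d$-frame in $\C^{2^n}$, whereas ${\cal O}_{\sf Haar}$ returns an i.i.d.\ Haar tuple. Expanding $A$'s final pre-measurement state multilinearly in the oracle outputs, after $q$ queries it has the form $\ket{\Psi}=\sum_{x_1,\dots,x_q}\ket{\xi_{x_1,\dots,x_q}}_{{\bf X}{\bf W}}\otimes\ket{\phi_{x_1}}_{{\bf Y}_1}\otimes\cdots\otimes\ket{\phi_{x_q}}_{{\bf Y}_q}$, where $\ket{\phi_x}$ is the state assigned to input $x$ and the $\ket{\xi_\cdot}$ are fixed subnormalized vectors depending only on $A$. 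Consequently $A$'s distinguishing probability is at most the difference between the $(q,q)$-th mixed moments of the two state-assignment distributions, i.e.\ at most $\|\hat M_{\mathrm{frame}}-\hat M_{\mathrm{iid}}\|_1$ for the corresponding $2q$-fold moment operators, which in turn reduces to comparing $\E\bigl[\bigotimes_{i}\ketbra{\phi_{x_i}}{\phi_{x_i}}\bigr]$ over at most $q$ distinct inputs for the two distributions. For the i.i.d.\ case this $q$-fold moment factorizes over distinct inputs into normalized symmetric-subspace projectors (\Cref{fact:avg-haar-random}); for the random orthonormal frame, writing $w_x=R\ket{\tilde x}$ and applying a Weingarten-calculus estimate — or, more elementarily, coupling an i.i.d.\ Haar tuple with its Gram--Schmidt orthonormalization and bounding the coupling error by the probability that the Gram matrix of the $\le q$ relevant i.i.d.\ Haar vectors deviates from the identity, which is $\poly(q)/2^{\Omega(n)}$ by a calculation as in \Cref{fact:frobenius:haar} — one gets the same expression up to a $\poly(q)/2^{\Omega(n)}$ correction. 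Since $n=\omega(\log\secparam)$ this is negligible, and triangle inequality with Step 1 shows $A$'s total advantage is negligible, so $G$ is a $(d,n)$-QAPRFS.

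I expect the main obstacle to be making Step 2 fully rigorous against superposition queries: namely, justifying (via the multilinear structure of $\ket{\Psi}$ together with $\|\ket{\Psi(\Lambda)}\|\le1$ for all unit-vector assignments $\Lambda$) that the advantage is genuinely controlled by a trace-norm distance between fixed $(q,q)$-moment operators, and then proving the frame-versus-i.i.d.\ moment estimate in a way that does \emph{not} degrade with $2^d$. The latter is what lets the theorem cover the entire range $d\le n$: only the $\le q$ inputs actually appearing in a given moment contribute, so by exchangeability of both distributions one may Gram--Schmidt (or apply Weingarten to) just a $\le q\times q$ block, keeping the error at $\poly(q)/2^{\Omega(n)}$ regardless of how large $2^d$ is.
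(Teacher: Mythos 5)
Your construction and Step 1 are exactly the paper's: use the PRU directly, pad $d<n$ with zeros, and pass to a Haar-random unitary $R$ by simulating each QAPRFS query with one call to the $U_k$-oracle. The divergence, and the problem, is in Step 2. You correctly identify that under ${\cal O}'_R$ the returned states form a random orthonormal $2^d$-frame while ${\cal O}_{\sf Haar}$ returns i.i.d.\ Haar states, and that any single collection of at most $2q$ of these states has mixed moments within $\poly(q)/2^{\Omega(n)}$ of the i.i.d.\ ones (Gram--Schmidt coupling or Weingarten both work for that local statement). But the step from ``each local moment is close'' to ``the adversary's advantage is small'' is exactly where the argument is missing, and you flag it yourself without closing it. The final state is indeed multilinear of degree $q$ in the assignment $x\mapsto\ket{\phi_x}$ (though not of the tensor-product form $\ket{\xi_{x_1,\dots,x_q}}\otimes\ket{\phi_{x_1}}\otimes\cdots\otimes\ket{\phi_{x_q}}$ you write, since the adversary's interleaved unitaries act on previously returned $\bf Y$ registers, so the coefficient operators entangle them with everything else; only multilinearity survives). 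The acceptance probability is then a sum over $2^{2dq}$ monomials indexed by tuples in $(\bit^d)^{2q}$, and bounding it term-by-term multiplies the per-moment error $\poly(q)/2^{\Omega(n)}$ by $2^{2dq}$, which is vacuous. What you need is a single norm inequality showing the advantage is controlled by a distance between the two $(q,q)$-moment superoperators that is itself independent of $2^d$; ``exchangeability plus restricting to a $q\times q$ block'' does not by itself produce such an inequality, because the blocks overlap across monomials and the coefficient tensors are not positive.

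This is precisely the difficulty the paper's proof is engineered to avoid. It first applies the small-range distribution theorem for unitary oracles (\Cref{thm:smallrange-unitary}, after inserting a random permutation and replacing it by a random function using quantum collision resistance), whose loss $300q^3/r$ is independent of $|\mathcal X| = 2^d$; this collapses the exponentially many correlated states to $r=\poly$ buckets. Only then does it swap the $r$ orthonormal bucket states for i.i.d.\ Haar states one at a time, using L\'evy's lemma to bound the overlap of a fresh Haar state with the span of the previous ones by roughly $2^{-n/2}$, and \Cref{lem:diamond-concat} to accumulate the error over $q$ queries. If you want to pursue your direct moment-comparison route, you would in effect be reproving a statement of comparable strength to \Cref{thm:smallrange-unitary} from scratch; as written, the proposal does not establish it, so the proof is incomplete.
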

\begin{proof}
  We are going to consider the input length to be $n$ instead as $d < n$ can be handled by padding zeroes.
  \par Let $Q_{\secparam}$ be a pseudorandom unitary. The construction of $(d(\secparam),n(\secparam))$-QAPRFS $G_{\secparam}$ is defined as follows: $G(k,x) = Q_{\secparam}(k,x)$. 
  \par More formally, this is implemented as follows. We will abuse the notation and denote the circuit computing $G(k,\cdot)$ as $G_k(\cdot)$. Similarly, let $U_k$ be the unitary associated with $Q_{\secparam}(k,\cdot)$. We implement $G_k$ as follows: $G_k$ is an isometry that maps $\ket{x}$ to $\ket{x} \otimes U_k\ket{x}$. \\
  
  \noindent We argue that $G_{\secparam}$ is a quantum accessible-secure PRFS.
  Assume for contradiction that there is an adversary that distinguishes the real world oracle from the ideal world oracle with noticeable advantage $\alpha(\lambda)$ with $q(\lambda)$ queries.
  We carry out the following hybrids, with $r = 1200q^3/\alpha$.
  \begin{description}
    \item[Hybrid 0] This is the real world oracle.
    \item[Hybrid 1] We change $U_k$ in the construction to a Haar-random unitary $U$, i.e. the oracle is now
      \[ \ket x \mapsto \ket{x} \otimes U\ket x. \]
      This is computationally indistinguishable to the last hybrid by security of PRU.
    \item[Hybrid 2]  \pnote{Edited:} Instead of applying $U$, we instead first apply a random permutation $\Pi:\{0,1\}^n \rightarrow \{0,1\}^n$ followed by a Haar random unitary $U'$. 
      More formally, the oracle now computes the following:
      \[ \ket x \mapsto \ket{x} \otimes U'\Pi\ket x = \ket x \otimes U'\ket{\Pi(x)}. \]
      By unitary invariance of Haar, Hybrid 2 is identically distributed as Hybrid 1.
    \item[Hybrid 3] We replace $\Pi$ with a random function $f: \bit^n \to \bit^n$.
      In other words, the oracle is now
      \[ \ket x \mapsto \ket x \otimes U'\ket{f(x)}. \]
      Using (quantum) collision resistance of random functions \cite{Zhandry15}, we know that Hybrids 2 and 3 are statistically indistinguishable except with a negligible $O(q^3/2^n)$ advantage.
    \item[Hybrid 4] We change the random function $f$ to instead be sampled by a $\SR_r^{\bit^n \to \bit^n}(\bit^d)$.
      By \Cref{thm:smallrange-unitary}, this change is statistically indistinguishable except with $300q^3/r$ advantage.
    \item[Hybrid 5.1 -- 5.$r$] Hybrid 5.1 is the same as Hybrid 4.
      However, we are going to reinterpret the oracle as a unitary oracle.
      Formally, we are going to consider Hybrid 5.1 to be sampled equivalently (as Hybrid 4) as follows: for each $i \in [r]$, sample a Haar random state $\ket{\psi_i}$ that is orthogonal to the span of $\{\ket{\psi_j}\}_{j = 1, ..., i - 1}$; and upon invoking at $x \in \bit^d$, we consider invoking the isometry $U_{i_x}$, where isometry $U_i$ outputs $\ket{\psi_i}$ for $i \in [r]$.
    
      For $i = 2, ..., r$, Hybrid 5.$i$ is the same as Hybrid 5.$(i - 1)$ except that the $\ket{\psi_i}$ is sampled from Haar random instead of Haar random in a subspace.
      To compare the difference between these two hybrids, we can consider $\ket{\psi_i}$ to be first sampled from Haar random for Hybrid 5.$i$, and then project it to the orthogonal subspace to produce $\ket{\psi_i'}$ for Hybrid 5.$(i - 1)$.
      Note that any projector is $2$-Lipschitz, and thus by L\'evy's lemma \cite[Fact 2.2]{AQY21}, the probability that the overlap of a Haar random state with a subspace of polynomial dimension is more than $\frac{\log \lambda}{2^{n/2}}$ (which is negligible in $\lambda$) is negligible.
      Therefore, we establish that these hybrids are statistically indistinguishable except with a negligible probability by \Cref{lem:diamond-concat}.
    \item[Hybrid 6] We change the unitary distribution from $\SR_r^{U_n}(\bit^d)$ to $U_n^{\bit^d}$, where $U_n$ denote the unitary for $n$-qubit Haar random states.
      This is equivalent to the ideal world oracle, and it is statistically indistinguishable except with $300q^3/r$ advantage again by \Cref{thm:smallrange-unitary}.
  \end{description}
  By triangle inequality, we conclude that no efficient adversary is able to distinguish Hybrid 0 from Hybrid 6 with advantage more than $\frac{600q^3}r + \varepsilon$ for some negligible quantity $\varepsilon$, which is negligibly close to $\alpha/2$ by our choice of parameters.
  This is a contradiction as $\alpha$ is noticeable.
\end{proof}

\noindent We conclude by noting that both constructions above (a) use PRF/PRU as a black-box (and their proofs relativize to the setting of either random oracles or families of Haar unitary oracles as considered by Kretschmer~\cite{Kretschmer21}); and (b) only get super-logarithmic output length.
We leave as future work to construct these for logarithmic output length.

\section{On the Necessity of Computational Assumptions} 
\label{sec:computational-assumptions}

The following lemma shows that the security guarantee of a PRS generator (and thus of PRFS generators) can only hold with respect to computationally bounded distinguishers, provided that the output length is at least $\log \lambda$. 

\begin{lemma}
  \label{lem:prs-inefficient-dist}
  Let $G$ be a PRS generator with output length $n(\lambda) \geq \log \lambda$. Then there exists a polynomial $t(\lambda)$ and a quantum algorithm $A$ (not efficient in general) such that
    \[
        \left | \Pr_{k \leftarrow \{0,1\}^\lambda} \left [ A_\lambda\left(G_\lambda(k)^{\otimes t(\lambda)}\right) = 1 \right] - \Pr_{\ket{\vartheta} \leftarrow \Haar_{n(\lambda)}} \left [ A_\lambda\left(\ketbra\vartheta\vartheta^{\otimes t(\lambda)}\right) = 1 \right] \right | \ge \frac{1}{3}
    \]
    for all sufficiently large $\lambda$.
\end{lemma}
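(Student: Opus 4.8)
The plan is to follow the "simple case" sketch from the Technical Overview and make it rigorous, handling the complication that $G_\lambda(k)$ may be a mixed state. First I would reduce to the case where the outputs are (negligibly close to) pure states. To do this, observe that if for a non-negligible fraction of keys $k$ the output $G_\lambda(k)$ has purity $\Tr(G_\lambda(k)^2) \le 1 - 1/p(\lambda)$ for some polynomial $p$, then the SWAP test distinguishes $G$ from Haar: on input two copies $\rho \otimes \rho$ the SWAP test accepts with probability $\tfrac12(1 + \Tr(\rho^2))$, which is $1$ for a pure (Haar) state but bounded away from $1$ for such mixed $\rho$; repeating the SWAP test polynomially many times amplifies this gap to a constant, contradicting security of $G$ (here the adversary \emph{is} efficient, so this really does contradict the PRS definition). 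Hence we may assume that for all but a negligible fraction of keys $k$, $G_\lambda(k)$ is within trace distance $\mathrm{negl}(\lambda)$ of some pure state $\ket{\psi_k}$.

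Next I would define the inefficient distinguisher $A$. Let $N = 2^{n(\lambda)} \ge \lambda$. Consider the subspace $W_t \subseteq (\C^N)^{\otimes t}$ spanned by $\{\ket{\psi_k}^{\otimes t} : k \in \{0,1\}^\lambda\}$; its dimension is at most $2^\lambda$, independent of $t$. Let $P^{(t)}$ be the projector onto $W_t$. The distinguisher $A$ measures $\{P^{(t)}, I - P^{(t)}\}$ and outputs $1$ iff the projection succeeds. On the PRS side: the state is (up to negligible trace distance, after discarding the negligible-fraction "bad" keys) $\ket{\psi_k}^{\otimes t} \in W_t$, so $A$ outputs $1$ with probability $1 - \mathrm{negl}(\lambda)$. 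On the Haar side: by \Cref{fact:avg-haar-random}, $\E_{\ket\vartheta} \ketbra\vartheta\vartheta^{\otimes t} = \Pi^{N,t}_{\sym}/\Tr(\Pi^{N,t}_{\sym})$, which is the maximally mixed state on the symmetric subspace of dimension $\binom{N+t-1}{t}$. Therefore the acceptance probability on the Haar side equals $\Tr(P^{(t)} \Pi^{N,t}_{\sym})/\binom{N+t-1}{t} \le \dim W_t / \binom{N+t-1}{t} \le 2^\lambda / \binom{N+t-1}{t}$. It remains to choose $t = t(\lambda)$ a polynomial so that $\binom{N+t-1}{t} \ge 2 \cdot 2^\lambda$, say; since $N \ge \lambda \ge 2$, we have $\binom{N+t-1}{t} \ge \binom{t+1}{t} = t+1$ crudely, but more usefully $\binom{N+t-1}{t} \ge \big(\tfrac{N+t-1}{t}\big)^{\min(t,N-1)}$-type bounds let us take, e.g., $t = \lambda + 1$ so that $\binom{N+t-1}{t} \ge \binom{\lambda + t - 1}{t} \ge 2^{\lambda+1}$ for $t = \lambda+1$ (this is the routine binomial estimate I would fill in). Then $A$ accepts the Haar state with probability at most $1/2$, while accepting the PRS state with probability $1 - \mathrm{negl}(\lambda)$, giving a distinguishing advantage $\ge 1/2 - \mathrm{negl}(\lambda) \ge 1/3$ for all large $\lambda$.

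For the "bad key" bookkeeping in the first paragraph's reduction, I would be careful to split the keyspace into the set $S$ of keys with output purity $\ge 1 - 1/p(\lambda)$ and its complement; the SWAP-test argument only needs the complement to be non-negligible, and the projector argument only needs $S$ to be overwhelming. If neither holds — i.e. the complement is non-negligible — we are in the SWAP-test case; otherwise $S$ is overwhelming and we are in the projector case. The projector $P^{(t)}$ should be taken with respect to the pure states $\ket{\psi_k}$ for $k \in S$ only; the negligibly many other keys contribute at most $\mathrm{negl}(\lambda)$ to the PRS acceptance probability by a union bound and the fact that trace distance bounds the difference in measurement outcomes.

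The main obstacle I anticipate is making the "majority of outputs close to pure, but not all" case fully rigorous: one must argue that when we replace each (slightly mixed) $G_\lambda(k)$ by its closest pure state $\ket{\psi_k}$, the accumulated error over $t$ copies stays negligible. Since trace distance is subadditive under tensor products, $\TD(G_\lambda(k)^{\otimes t}, \ketbra{\psi_k}{\psi_k}^{\otimes t}) \le t \cdot \TD(G_\lambda(k), \ketbra{\psi_k}{\psi_k}) \le t(\lambda) \cdot \mathrm{negl}(\lambda) = \mathrm{negl}(\lambda)$, so this is fine — the key point is just that we first pass to the pure-state regime via the SWAP test, so that the per-copy error is genuinely negligible (not merely $1/\poly$) before paying the factor of $t$. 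Everything else is a routine combination of \Cref{fact:avg-haar-random}, the dimension count, and a binomial coefficient lower bound.
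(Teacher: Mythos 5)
Your proposal is correct and follows essentially the same route as the paper's proof: a dichotomy between outputs that are far from pure (handled by a repeated SWAP/purity test, which never rejects Haar states) and outputs that are mostly near-pure (handled by projecting onto the at-most-$2^\lambda$-dimensional span of $\{\ket{\psi_k}^{\otimes t}\}$ and comparing against $\dim \Pi^{2^n,t}_\sym = \binom{2^n+t-1}{t}$ with $t = \lambda+1$, via \Cref{fact:avg-haar-random}). The only cosmetic difference is that you close the mixed case by invoking the pseudorandomness of $G$ to derive a contradiction, whereas the paper presents the purity test itself as the (even efficient) distinguisher so that the lemma holds without assuming $G$ is secure; both are valid for the statement as written.
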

\begin{proof}
  For notational convenience we abbreviate $n = n(\lambda)$ and $t = t(\lambda)$.
  We split the proof into two cases.
  \pnote{I'm not sure why we have two cases here. From the pseudorandomness property, with overwhelming probability over the choice of $k$, the PRS state is pure. We can directly consider Case 2 then. Am I missing something here?}
  \luowen{Yes, case 1 is basically mirroring what you are saying... In other words, if case 1 is true then this attack is even efficient (and case 2 is the real inefficient attack). However, this lemma is true if $G$ is insecure PRS... (In other words, this way I do not need to assume it is secure to give an attack.)}
  
  \emph{Case 1:} if there does not exist a negligible function $\nu(\cdot)$ such that
  \begin{equation}
    \label{eq:purity-test}
    \Pr_k\left[\min_{\ket\theta}\TD(G_\lambda(k), \ketbra\theta\theta) \le \nu(\lambda)\right] \ge \frac12.
  \end{equation}
  Then there exists some non-negligible function $\kappa(\cdot)$ such that with probability at least $\frac12$ over the choice of $k$, $\min_{\ket\theta}\TD(G_\lambda(k), \ketbra\theta\theta) \ge \kappa(\lambda)$. \pnote{I think the next 3 lines seem a bit unnecessary and there might be a simpler argument. If we do a purity test on two copies then the probability of succeeding is at most $\frac{1}{2} (1 - \kappa)$. This follows from the fact that getting the same pure state once more happens with probability at most $1 - \kappa$, which in turn follows from $\min_{\ket\theta}\TD(G_\lambda(k), \ketbra\theta\theta) \ge \kappa(\lambda)$. Now, if we repeat this $t = O(1/\kappa(\lambda))$ copies, we are done.} \luowen{Not sure if I understand... Could you make it more formal?}
  Let $\nu_{k, 1} \ge ... \ge \nu_{k, 2^n}$ and $\ket{\alpha_{k, 1}}, ..., \ket{\alpha_{k, 2^n}}$ be eigenvalues and eigenvectors for $G_\lambda(k)$.
  Then $\kappa \le \TD(G_\lambda(k), \ketbra{\alpha_{k, 1}}{\alpha_{k, 1}}) = \frac12(1 - \nu_{k, 1} + \nu_{k, 2} + \cdots + \nu_{k, 2^n}) = 1 - \nu_{k, 1}$.
  Thus by H\"older's inequality, $\Tr(G_\lambda(k)^2) \le 1 - \kappa$.
  Therefore, a purity test using $t = O(1/\kappa(\lambda))$ copies will correctly reject PRS states with probability at least $\frac13$ but never incorrectly reject any Haar random state.

  \emph{Case 2:} if there exists a negligible function $\nu(\cdot)$ such that \eqref{eq:purity-test} holds.
There exists a polynomial $t(\lambda)$ such that 
\[
    2^{\lambda} \leq \frac{1}{6} \cdot \dim \Pi^{2^{n},t}_\sym = \frac{1}{6} \cdot \binom{2^{n} + t - 1}{t}
\] 
for all sufficiently large $\lambda$. This is because by setting $t = \lambda + 1$, we can lower bound the dimension of $\Pi^{2^n,t}_\sym$ by $\binom{2\lambda}{\lambda+1}$ and
\[
  \binom{2\lambda}{\lambda} \geq \frac{\lambda}{\lambda+1} \frac{4^\lambda}{\sqrt{\pi\lambda}}\left(1 - \frac1{8\lambda}\right)
\]
which is much larger than $6 \cdot 2^\lambda$ for all sufficiently large $\lambda$.

Let $g \subseteq \bit^\lambda$ be the set of $k$'s such that $\min_{\ket\theta}\TD(G_\lambda(k), \ketbra\theta\theta) \le \nu(\lambda)$.
Note that $2^{\lambda}$ is an upper bound on the rank of the density matrix
\begin{equation}
    \label{eq:mixture-of-prs}
    \E_{k \leftarrow g} \ketbra{\psi_k}{\psi_k}^{\otimes t},
\end{equation}
where $\ket{\psi_k} = \arg\min_{\ket\theta}\TD(G_\lambda(k), \ketbra\theta\theta)$. Note that by \Cref{fact:avg-haar-random} the rank of the density matrix 
\begin{equation}
    \label{eq:haar-random-mixture}
    \E_{\ket{\vartheta} \leftarrow \Haar_{n(\lambda)}} \ketbra{\vartheta}{\vartheta}^{\otimes t} = \frac{\Pi^{2^n,t}_\sym}{\dim \Pi^{2^n,t}_\sym}
\end{equation}
is equal to $\dim \Pi^{2^n,t}_\sym$. 

For all $\lambda$, define the quantum circuit $A_\lambda$ that, given a state on $tn$ qubits, performs the two-outcome measurement $\{ P, I - P \}$ where $P$ is the projector onto the support of $  \E_{k \leftarrow g} \ketbra{\psi_k}{\psi_k}^{\otimes t}$, and accepts if the $P$ outcome occurs. 

By assumption of case 2, given the density matrix~\eqref{eq:mixture-of-prs} the circuit $A_\lambda$ will accept with probability at least $\frac12$. On the other hand, given the density matrix~\eqref{eq:haar-random-mixture} the circuit $A_\lambda$ will accept with probability
\[
    \Tr \left ( P \cdot \frac{\Pi^{2^n,t}_\sym}{\dim \Pi^{2^n,t}_\sym} \right) \leq \Tr \left (\frac{P}{\dim \Pi^{2^n,t}_\sym} \right) = \frac{\mathrm{rank}(P)}{\dim \Pi^{2^n,t}_\sym} \leq \frac{1}{6}~.
\]
Letting $A = \{A_\lambda\}_\lambda$ we obtained the desired Lemma statement.
\end{proof}

We remark that the attack given in \Cref{lem:prs-inefficient-dist} cannot be used on smaller output length, up to additive factors of superpolynomially smaller order in the output length.
Suppose $n = \log \lambda - \omega(\log\log\lambda)$ and for any $t = \lambda^{O(1)}$,
  \begin{align*}
    \log \binom{2^{n} + t - 1}{t}
      &\le 2^n \cdot \log\frac{e(2^n + t - 1)}{2^n - 1} \\
      &= \frac\lambda{\omega(\log\lambda)} \cdot O(\log \lambda).
  \end{align*}
  This means that $\binom{2^{n} + t - 1}{t} = 2^{\lambda/\omega(\log \lambda)} \ll 2^\lambda$ and therefore the attack above does not necessarily apply.
  Indeed, Brakerski and Shmueli~\cite{BrakerskiS20} have shown that PRS generators with output length $n(\secparam) \leq c \log \lambda$ for some $c > 0$ can be achieved with statistical security. 

We conclude the section by remarking that the result of Kretschmer~\cite{Kretschmer21} can be easily generalized so that PRS generators with output length at least $\log \lambda + c$ (for some small constant $0 < c < 2$) imply ${\sf BQP} \neq {\sf PP}$ as well\footnote{For readers familiar with \cite{Kretschmer21}, it can be verified that a sufficient condition for that proof to go through is if $2^\lambda \cdot e^{-2^n/3}$ is negligible, which is satisfied if $n \ge \log\lambda + 2$.}. 

\medskip
\newcommand{\goodevent}{{\sf GoodEvent}}
\newcommand{\bfx}{{\bf x}}
\newcommand{\distr}{{\cal D}}
\section{Tomography with Verification}
\label{sec:tomography}
\emph{Quantum state tomography} (or just \emph{tomography} for short) is a process that takes as input multiple copies of a quantum state $\rho$ and outputs a string $u$ that is a classical description of the state $\rho$; for example, $u$ can describe an approximation of the density matrix $\rho$, or it could be a a more succinct description such as a \emph{classical shadow} in the sense of~\cite{Huang2020}. 
In this paper, we use tomography as a tool to construct protocols based on pseudorandom states with only \emph{classical} communication. 

\par For our applications, we require tomography procedures satisfying a useful property called verification. 
Suppose we execute a tomography algorithm on multiple copies of a state to obtain a classical string $u$. The verification algorithm, given $u$ and the algorithm to create this state, checks if $u$ is consistent with this state or not. Verification comes in handy when tomography is used in cryptographic settings, where we would like to make sure that the adversary has generated the classical description associated with a quantum state according to some prescribed condition (this will be implictly incorporated in the verification algorithm).

\paragraph{Verifiable Tomography.} Let ${\cal C}=\{\Phi_{\secparam}\ :\ \secparam \in \mathbb{N}\}$ be a family of channels where each channel $\Phi_\secparam$ takes as input $\ell(\secparam)$ qubits for some polynomial $\ell(\cdot)$. A \emph{verifiable tomography scheme} associated with $\cal C$ is a pair $(\tomography,\verify)$ of QPT algorithms, which have the following input/output behavior:
\begin{itemize} 
\item $\tomography$: given as input a quantum state $\rho^{\otimes L}$ for some density matrix $\rho$ and some number $L$, output a classical string $u$ (called a \emph{tomograph} of $\rho$). 

\item $\verify$: given as input a pair of classical strings $(\bfx,u)$ where $\bfx$ has length $\ell(\secparam)$, output $\valid$ or $\invalid$. 
\end{itemize}

\noindent We would like $(\tomography,\verify)$ to satisfy correctness which we describe next. 

\subsection{Correctness Notions for Verifiable Tomography}
We can consider two types of correctness. The first type of correctness, referred to as \emph{same-input correctness}, states that $\verify(\bfx,u)$ outputs $\valid$ if $u$ is obtained by running the $\tomography$ procedure on copies of the output of $\Phi_{\secparam}(\bfx)$. The second type of correctness, referred to as \emph{different-input correctness}, states that $\verify(\bfx',u)$ outputs $\invalid$ if $u$ is obtained by applying tomography to $\Phi_{\secparam}(\bfx)$, where $(\bfx',\bfx)$ do not satisfy a predicate $\Pi$.

\paragraph{Same-Input Correctness.} Consider the following definition. 

\begin{definition}[Same-Input Correctness]
\label{def:sameinput}
We say that $(\tomography,\verify)$ satisfies {\em $L$-same-input correctness}, for some polynomial $L(\cdot)$, such that for every $\bfx \in \{0,1\}^{\ell(\secparam)}$, if the following holds: 
    $$\prob\left[ \valid \leftarrow \verify\left(\bfx, \tomography\left( (\Phi_\secparam(\bfx))^{\otimes L(\secparam)} \right) \right)  \right] \geq 1 - {\sf negl}(\secparam),$$
\end{definition}

\noindent For some applications, it suffices to consider a weaker definition. Instead of requiring the correctness guarantee to hold for every input, we instead require that it holds over some input distribution. 

\begin{definition}[Distributional Same-Input Correctness]
\label{def:distr:sameinput}
 We say that $(\tomography,\verify)$ satisfies {\em $(L,\distr)$-distributional same-input correctness}, for some polynomial $L(\cdot)$ and distribution $\distr$ on $\ell(\secparam)$-length strings, if the following holds: 
    $$\prob\left[ \valid \leftarrow \verify\left(\bfx, \tomography\left( (\Phi_\secparam(\bfx))^{\otimes L(\secparam)} \right) \right)\ :\ \bfx \leftarrow  \distr \right] \geq 1 - {\sf negl}(\secparam)$$
\end{definition}

\paragraph{Different-Input Correctness.} Ideally, we would require that $\verify(\bfx,u)$ outputs $\invalid$ if $u$ is produced by tomographing  $\Phi_{\secparam}(\bfx')$, and $\bfx'$ is any string such that $\bfx' \neq \bfx$. However, for applications, we only require that this be the case when the pair $(\bfx,\bfx')$ satisfy a relation defined by a predicate $\Pi$. In other words, 
we require $\verify(\bfx,u)$ outputs $\invalid$ only when $u$ is a tomograph of $\Phi_{\secparam}(\bfx')$ and $\Pi(\bfx',\bfx)=0$.
\par We define this formally below. 

\begin{definition}[Different-Input Correctness]
\label{def:diffinputcorrecntess}
We say that $(\tomography,\verify)$ satisfies {\em $(L,\Pi)$-different-input correctness}, for some polynomial $L(\cdot)$ and predicate $\Pi:\{0,1\}^{\ell(\secparam)} \times \{0,1\}^{\ell(\secparam)} \rightarrow \{0,1\}$, such that for every $\bfx,\bfx' \in \{0,1\}^{\ell(\secparam)}$ satisfying $\Pi(\bfx,\bfx')=0$, if the following holds: 
    $$\prob\left[ \invalid \leftarrow \verify\left(\bfx', \tomography\left( (\Phi_\secparam(\bfx))^{\otimes L(\secparam)} \right) \right)  \right] \geq 1 - {\sf negl}(\secparam)$$
\end{definition}

\noindent Analogous to~\Cref{def:distr:sameinput}, we  correspondingly define below the notion of $(L,\distr,\Pi)$-different-input correctness. 

\begin{definition}[Distributional Different-Input Correctness]
\label{def:distr:diffinput}
We say that $(\tomography,\verify)$ satisfies {\em $(L,\Pi,\distr)$-distributional different-input correctness}, for some polynomial $L(\cdot)$, predicate $\Pi:\{0,1\}^{\secparam} \times \{0,1\}^{\secparam} \rightarrow \{0,1\}$ and distribution $\distr$ supported on  $(\bfx,\bfx') \in \{0,1\}^{\ell(\secparam)}\times \{0,1\}^{\ell(\secparam)}$ satisfying $\Pi(\bfx,\bfx')=0$, if the following holds: 
     $$\prob\left[ \invalid \leftarrow \verify\left(\bfx', \tomography\left( (\Phi_\secparam(\bfx))^{\otimes L(\secparam)} \right) \right)\ :\ (\bfx,\bfx') \leftarrow \distr  \right] \geq 1 - {\sf negl}(\secparam)$$
\end{definition}

Sometimes we will use the more general {\em $(\varepsilon,L,\Pi,\distr)$-distributional different-input correctness} definition. In this case, the probability of $\verify$ outputting $\invalid$ is bounded below by $1-\varepsilon$ instead of $1-{\sf negl}(\secparam)$.

\subsection{Verifiable Tomography Procedures} 
\label{sec:tomography:procedures}
We will consider two different instantiations of $(\tomography,\verify)$ where the first instantiation will be useful for bit commitments and the second instantiation will be useful for pseudo one-time pad schemes. 
\par In both the instantiations, we use an existing tomography procedure stated in the lemma below. 

\begin{lemma}[Section 1.5.3, \cite{Lowe21}]
\label{lem:lowe}
There exists a tomography procedure ${\cal T}$ that given $s N^2$ copies of an  $N$-dimensional density matrix $\rho$, outputs a matrix $M$ such that $\mathbb{E} \| M - \rho\|_F^2 \leq \frac{N}{s}$ where the expectation is over the randomness of the tomography procedure. Moreover, the running time of ${\cal T}$ is polynomial in $s$ and $N$. 
\end{lemma}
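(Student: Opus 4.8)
The plan is to instantiate $\cal{T}$ by the \emph{classical shadows} procedure~\cite{Huang2020} specialized to random global Clifford measurements, which uses only single-copy measurements, runs in $\poly(\log N)$ time per copy (since $\rho$ is a state on $\log N$ qubits), and admits a clean Frobenius-norm analysis. First I would fix the single-copy measurement: sample a uniformly random $\log N$-qubit Clifford unitary $U$, apply $U$ to a copy of $\rho$, measure in the computational basis to obtain $b \in \{0,1\}^{\log N}$, and record the rank-one projector $\Pi = U^\dagger \ketbra{b}{b} U$. Since the $\log N$-qubit Cliffords form a unitary $2$-design, the associated measurement channel is $X \mapsto \frac{X + \Tr(X)\,I}{N+1}$, so the single-shot estimator $\widehat\rho := (N+1)\,\Pi - I$ satisfies $\E[\widehat\rho] = \rho$, where the expectation is over $U$ and $b$. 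The procedure $\cal{T}$ then takes $L = m := s N^2$ copies, produces i.i.d.\ estimators $\widehat\rho^{(1)},\dots,\widehat\rho^{(m)}$ this way, and outputs their average $M = \frac1m\sum_{j} \widehat\rho^{(j)}$.

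The key step is bounding $\E\|M - \rho\|_F^2$. By independence and unbiasedness this equals $\frac1m \E\|\widehat\rho - \rho\|_F^2$, and by \Cref{fact:frab:square} (applied with $A = \widehat\rho$, $B = \rho$) together with $\E[\widehat\rho] = \rho$, this is $\frac1m\left(\E\|\widehat\rho\|_F^2 - \|\rho\|_F^2\right)$. The convenient feature is that $\|\widehat\rho\|_F^2$ is \emph{deterministic}: since $\Pi$ is a rank-one projector, $\Tr(\widehat\rho^2) = (N+1)^2 - 2(N+1) + N = N^2 + N - 1$, with no dependence on the random measurement outcome. Hence $\E\|M-\rho\|_F^2 \le \frac{N^2+N-1}{m} \le \frac{N}{s}$ for $m = sN^2$ (the $N=1$ case being trivial since then $\rho$ is the scalar $1$). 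Finally I would verify efficiency: sampling a random Clifford on $\log N$ qubits and measuring costs $\poly(\log N)$ time per copy, and assembling $M$ from the $m$ outcomes is a sum of $m$ matrices of dimension $N$, so $\cal{T}$ runs in time $\poly(s,N)$ as claimed.

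I do not expect a real obstacle, since this is a known statement; the point needing the most care is the moment computation for the Clifford ensemble — namely checking that the measurement channel is exactly the depolarizing-type map above (so that the linear-inversion estimator is unbiased) and that $\|\widehat\rho\|_F^2$ is outcome-independent, so that no genuine variance estimate is needed beyond the first moment. An alternative, should one prefer to avoid Clifford-specific calculations, is to run the procedure with a Haar-random basis and invoke that a $2$-design reproduces the relevant second moments exactly; the efficiency claim is then recovered by noting that an (approximate) $2$-design on $\log N$ qubits can be sampled in $\poly(\log N)$ time.
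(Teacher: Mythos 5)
Your proposal is correct, and it is worth noting that the paper does not actually prove this lemma at all: it is imported as a black box from \cite{Lowe21} (the citation ``Section 1.5.3'' is the proof). So you are supplying a self-contained derivation where the paper supplies none. The derivation itself checks out: for a unitary $2$-design the measurement channel is indeed $X \mapsto \frac{X + \Tr(X) I}{N+1}$, so the linear-inversion estimator $\widehat\rho = (N+1)\Pi - I$ is unbiased; the identity $\E\|M-\rho\|_F^2 = \frac{1}{m}\bigl(\E\|\widehat\rho\|_F^2 - \|\rho\|_F^2\bigr)$ follows from independence, unbiasedness, and \Cref{fact:frab:square}; and $\|\widehat\rho\|_F^2 = (N+1)^2 - 2(N+1) + N = N^2+N-1$ is outcome-independent as you say. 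In fact your bound is stronger than what the lemma asserts: with $m = sN^2$ you get $\E\|M-\rho\|_F^2 \le \frac{N^2+N-1}{sN^2} \le \frac{2}{s}$, and the required inequality $\frac{N^2+N-1}{sN^2} \le \frac{N}{s}$ holds for \emph{every} $N \ge 1$ since $N^3 - N^2 - N + 1 = (N-1)^2(N+1) \ge 0$, so even the $N=1$ caveat is unnecessary. Two small points to tidy up if this were to be written out in full: (i) when $N$ is not a power of two you should pad $\rho$ to the next power of two (this at most doubles $N$ and only affects constants); (ii) the per-copy cost is $\poly(\log N)$ for sampling and applying the Clifford, but reconstructing each $\Pi_j = U_j^\dagger\ketbra{b_j}{b_j}U_j$ as an explicit $N\times N$ matrix costs $\poly(N)$, which is still within the claimed $\poly(s,N)$ running time. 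Since the downstream use in \Cref{cor:tomography} only needs the first-moment guarantee as stated, your argument is a valid drop-in replacement for the citation.
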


\noindent We state and prove a useful corollary of the above lemma. 

\newcommand{\negl}{\mathsf{negl}}
\begin{corollary}
\label{cor:tomography}
There exists a tomography procedure ${\cal T}_{\sf imp}$ that given $4s N^2 \secparam$ copies of an $N$-dimensional density matrix $\rho$, outputs a matrix $M$ such that the following holds: 
$$\prob\left[ \|M - \rho\|_F^2 \leq \frac{9N}{s} \right] \geq 1 -  \negl(\secparam)$$
Moreover, the running time of ${\cal T}_{\sf imp}$ is polynomial in $s,N$ and $\secparam$. 
\end{corollary}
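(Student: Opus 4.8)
The plan is to boost the expected-error guarantee of \Cref{lem:lowe} into a high-probability guarantee by the standard ``majority of independent runs'' amplification, combined with a geometric-median-style selection step. Concretely, set $r := \sqrt{N/s}$ and run the base tomography procedure $\mathcal T$ of \Cref{lem:lowe} independently $\secparam$ times, each run using $4sN^2$ copies of $\rho$, so that the total copy count is $4sN^2\secparam$ as in the statement. Call the outputs $M_1,\dots,M_\secparam$.

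First I would show that each individual run is ``good'' with decent probability. Applying \Cref{lem:lowe} with parameter $4s$ in place of $s$ gives $\mathbb E\,\|M_i-\rho\|_F^2 \le N/(4s) = r^2/4$, so Markov's inequality yields $\prob[\|M_i-\rho\|_F > r] = \prob[\|M_i-\rho\|_F^2 > r^2] \le 1/4$. Call run $i$ \emph{good} if $\|M_i-\rho\|_F \le r$; then each run is good independently with probability at least $3/4$. By a Hoeffding (Chernoff) bound, the number of good runs exceeds $\secparam/2$ except with probability $e^{-\Omega(\secparam)} = \negl(\secparam)$, and I condition on this event for the rest of the argument.

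Next comes the selection step, which is where the only real subtlety lies. Having computed all pairwise Frobenius distances $\|M_i-M_j\|_F$ (costing time polynomial in $N$ and $\secparam$), $\mathcal T_{\sf imp}$ outputs any $M_j$ for which more than $\secparam/2$ of the indices $i$ satisfy $\|M_j-M_i\|_F \le 2r$. Such a $j$ exists, since any good $M_j$ works: for two good runs $i,j$ the triangle inequality gives $\|M_i-M_j\|_F \le \|M_i-\rho\|_F + \|\rho-M_j\|_F \le 2r$, and there are more than $\secparam/2$ good runs. Conversely, for the chosen $M_j$, the set $\{i : \|M_j-M_i\|_F \le 2r\}$ and the set of good runs each have size more than $\secparam/2$, so they intersect; picking $i$ in the intersection gives $\|M_j-\rho\|_F \le \|M_j-M_i\|_F + \|M_i-\rho\|_F \le 2r + r = 3r$, hence $\|M_j-\rho\|_F^2 \le 9r^2 = 9N/s$. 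Thus $\mathcal T_{\sf imp}$ succeeds except with probability $\negl(\secparam)$.

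Finally, the running time: $\mathcal T_{\sf imp}$ invokes $\mathcal T$ a total of $\secparam$ times, each running in time polynomial in $s$ and $N$ by \Cref{lem:lowe}, and then performs $O(\secparam^2)$ Frobenius-norm computations on $N \times N$ matrices; altogether this is polynomial in $s$, $N$, and $\secparam$, as required. The main obstacle is essentially bookkeeping: ensuring the copy budget ($4s$ copies-per-dimension-squared per run, $\secparam$ runs) lines up with the constants in the target error $9N/s$, which is precisely why the Markov step is applied only after inflating the base parameter from $s$ to $4s$; one could trade off the constants $4$ and $9$ against each other, but this choice makes all inequalities clean.
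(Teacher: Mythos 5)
Your proposal is correct and follows essentially the same route as the paper's proof: amplify \Cref{lem:lowe} via Markov (after inflating $s$ to $4s$) to get each run good with probability $\ge 3/4$, apply a Chernoff bound so that more than $\secparam/2$ runs are good, and then select any output that is $2\sqrt{N/s}$-close (in Frobenius norm) to a majority, concluding via two triangle inequalities that the selected matrix is within $3\sqrt{N/s}$ of $\rho$. The only cosmetic difference is that you phrase the closeness threshold in terms of the norm ($\le 2r$) while the paper uses the squared norm ($\le 4\eps$), which is the same condition.
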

\begin{proof}
Set $\eps=\frac{N}{s}$. We define ${\cal T}_{\sf imp}$ as follows: \\

\noindent ${\cal T}_{\sf imp}\left( \rho^{\otimes 4sN^2 \secparam} \right)$: on input $4s N^2 \secparam$ copies of an $N$-dimensional density matrix $\rho$, do the following:
\begin{itemize}
    \item For every $i \in [\secparam]$, compute $M_i \leftarrow {\cal T}(\rho^{\otimes 4s N^2})$, 
    \item Output $M_{i^*}$, where $|\{j:\|M_j - M_{i^*}\|_F^2 \leq 4\eps\}| > \frac{\secparam}{2}$. If no such $i^* \in [\secparam]$ exists, output $\bot$. 
\end{itemize}

\noindent To prove that ${\cal T}_{\sf imp}$ satisfies the condition mentioned in the statement of the corollary, we first consider the following event. \\

\noindent {\sf GoodEvent}: For every $i \in [\secparam]$, compute $M_i \leftarrow {\cal T}(\rho^{\otimes 4s N^2})$. There exists a set $S \subseteq \{M_1,\ldots,M_{\secparam}\}$ such that for every $M \in S$, $\|M - \rho\|_F^2 \leq  \eps$ and moreover, $|S| > \frac{\secparam}{2}$. \\

\noindent Consider the following two claims.

\begin{claim}
\label{clm:goodevent}
$\prob\left[ {\sf GoodEvent} \right] \geq 1 - \negl(\secparam)$. 
\end{claim}
\begin{proof}
Applying Markov's inequality to~\Cref{lem:lowe}, we have $\prob\left[  M \leftarrow {\cal T}\left( \rho^{\otimes 4s N^2} \right)\ \text{ and }\ \|M - \rho\|_F^2 \geq \eps \right] \leq \frac{1}{4}$. Let ${\bf X}_i$ be a random variable that is set to 1 if the $i^{th}$ execution of ${\cal T}(\rho^{\otimes 4s N^2})$ outputs $M_i$ such that $\|M_i - \rho \|_F^2 \geq \eps$. Observe that $\mathbb{E}[{\bf X}_i] \leq \frac{1}{4}$. Let $\mu=\mathbb{E}[\sum_{i=1}^\secparam {\bf X}_i] \leq \frac{\secparam}{4}$. By Chernoff bound\footnote{For any set of iid Bernoulli random variables ${\bf X}_1,\ldots,{\bf X}_N$, for any $\delta > 0$, the following holds: $\prob\left[ \sum_{i=1}^\secparam {\bf X}_i \geq (1+\delta)\mu \right] \leq \frac{1}{e^{\delta \mu \cdot \frac{\delta}{2+\delta}}}$}, there exists $\delta \geq 1$ such that $\prob[\sum_{i=1}^\secparam {\bf X}_i \geq \frac{\secparam}{2}] = \prob[\sum_{i=1}^\secparam {\bf X}_i \geq (1+\delta)\mu] \leq \frac{1}{e^{\frac{\secparam}{4} \cdot \frac{1}{2+1} }} = \negl(\secparam)$. 
\par This proves that with overwhelming probability, there exists a subset $S$ of $\{M_1,\ldots,M_{\secparam}\}$ of size $> \frac{\secparam}{2}$ such that for every $M \in S$, $\|M - \rho\|_F^2 \leq \eps$.  
\end{proof}

\begin{claim}
\label{clm:condition:goodevent}
$\prob\left[ \|M - \rho\|_F^2 \leq 9\eps \ \big|\ {\sf GoodEvent} \right] = 1$. 
\end{claim}
\begin{proof}
Since we are conditioning on ${\sf GoodEvent}$, there exists a subset $S$ of $\{M_1,\ldots,M_{\secparam}\}$ such that for every $M \in S$, $\|M - \rho\|_F^2 \leq \eps$. Moreover, $|S| > \frac{\secparam}{2}$. Suppose ${\cal T}_{{\sf imp}}$ outputs $M \in S$ then we are done. This is because, for every $M,M' \in S$, it holds that $\|M - M'\|_F^2 \leq 4\eps$.
\par So we might as well assume that $M \notin S$. By the description of ${\cal T}_{\sf imp}$, it follows that there exists a subset $S'$ of $\{M_1,\ldots,M_{\secparam}\}$ such that $|S'| > \frac{\secparam}{2}$ and $\|M - M'\|_F^2 \leq 4\eps$ for every $M' \in S'$. Since $S \cap S' \neq \emptyset$, it follows that there exists an $M' \in S$ such that $\|M - M'\|_F^2 \leq 4\eps$. By definition of $S$ and by the fact that Frobenius norm is a matrix norm, it follows that $\|M - \rho\|_F^2 \leq 9\eps$.  
\end{proof}

\noindent Thus, we have the following: 

\begin{align*}
\prob\left[ \|M - \rho\|_F^2 \leq 9\eps \right] & = \prob\left[ \|M - \rho\|_F^2 \leq 9\eps\ |\ \goodevent \right] \prob[\goodevent]\\
&  \qquad +  \prob\left[ \|M - \rho\|_F^2 \leq 9\eps\ |\ \neg \goodevent \right] \prob[\neg \goodevent] \\ 
& \geq   \prob\left[ \|M - \rho\|_F^2 \leq 9\eps\ |\ \goodevent \right] \cdot (1 - \negl(\secparam))\ \  & \text{(\Cref{clm:goodevent})} \\
& = (1 - \negl(\secparam))\ \  & \text{(\Cref{clm:condition:goodevent})}
\end{align*}

\end{proof}

\subsubsection{First Instantiation}
\noindent We will work with a verifiable tomography procedure that will be closely associated with a PRFS. In particular, we will use a $(d(\secparam),n(\secparam))$-PRFS  $\{G_{\secparam}\left(\cdot,\cdot \right)\}$ satisfying recognizable abort property (\Cref{def:classicalgen:strongrec}). Let $\widehat{G}$ be the QPT algorithm associated with $G$ according to~\Cref{def:classicalgen:strongrec}. Note that the output length of $\widehat{G}$ is $n+1$. We set $d(\secparam) = \lceil \frac{\log(\secparam)}{\log(\log(\secparam))} \rceil$ and $n(\secparam)=\lceil 3\log(\secparam) \rceil$.
\par We will describe the algorithms $(\tomography,\verify)$ in Figure~\ref{fig:firstinstantiation}. The set of channels ${\cal C}=\{\Phi_{\secparam}: \secparam \in \mathbb{N}\}$ is associated with $(\tomography,\verify)$, where $\Phi_{\secparam}$ is defined as follows: 
\begin{itemize}
    \item Let the input be initialized on register ${\bf A}$.
    \item Controlled on the first register containing the value $(P_x,k,x,b)$, where  $P_x$ is an $n$-qubit Pauli, $k \in \{0,1\}^{\secparam}, b \in \{0,1\}$, do the following: compute $\left(I\otimes P_x^b\right) \widehat{G}_{\secparam}(k,x) \left(I\otimes P_x^b\right)$ and store it in the register ${\bf B}$. 
    \item Trace out ${\bf A}$ and output ${\bf B}$. 
\end{itemize}


\begin{figure}[!htb]
\begin{tabular}{|p{16cm}|}
\hline 
\ \\
\noindent \underline{$\tomography(\rho^{\otimes L})$}: On input $L$ copies of an  $2^{(n+1)}$-dimensional density matrix $\rho$, compute ${\cal T}_{\sf imp}(\rho^{\otimes L})$ to obtain $M$, where ${\cal T}_{\sf imp}$ is given in Corollary~\ref{cor:tomography}. Output $M$.  \\
\ \\

\noindent \underline{$\verify({\bf x},M)$}:
\begin{enumerate}
    \item Run $\rho^{\otimes L} \leftarrow \left( \Phi_\secparam\left({\bf x}\right) \right)^{\otimes L}$, where $L = 3^{8}2^{3(n+1)+2}\secparam$. 
    \item Compute $\widehat{M} \leftarrow \tomography\left( \rho^{\otimes L} \right)$.
    \item If $\bra{\bot}M\ket{\bot}>\frac{1}{9}$ for any $x\in\{0,1\}^{d}$, output $\invalid$.
    \item If $\|M - \widehat{M}\|_{F}^2 \leq \frac{4}{729}$ output $\valid$. Output $\invalid$ otherwise.
\end{enumerate}
\\ 
\hline
\end{tabular}
\caption{First instantiation of $\tomography$}
\label{fig:firstinstantiation}
\end{figure}

\noindent The channel $\Phi_\secparam$ can be represented as a quantum circuit of size polynomial in $\secparam$ as the PRFS generator $\hat{G}$ runs in time polynomial in $\secparam$. 

\paragraph{Distributional Same-Input Correctness.} We prove below that $(\tomography,\verify)$ satisfies distributional same-input correctness. For every $x \in \{0,1\}^{d(\secparam)}$, for every $n$-qubit Pauli $P_x$ and $b \in\{0,1\}$, define the distribution $\distr_{P_x,x,b}$ as follows: sample $k \xleftarrow{\$} \{0,1\}^{\secparam}$ and output $\bfx=\left(P_x,k,x,b \right)$. 

\begin{lemma}
\label{lem:samekey:first} Let $L= O(2^{3n} \secparam)$. The verifiable tomography scheme $(\tomography,\verify)$ described in~\Cref{fig:firstinstantiation} satisfies $(L,\distr_{P_x,x,b})$-distributional same-input correctness for all $P_x, x, b$. 
\end{lemma}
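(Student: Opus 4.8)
The plan is to carry out a three-part union bound over the randomness of the key and of the two tomography calls. Fix $P_x, x, b$ and write $\bfx = (P_x,k,x,b)$ and $\rho = \Phi_\secparam(\bfx) = (I\otimes P_x^b)\,\widehat{G}_\secparam(k,x)\,(I\otimes P_x^b)$, where $\widehat G$ has output length $n+1$, the first qubit is the abort flag, and $P_x^b$ acts on the remaining $n$ qubits. First I would use the structure of recognizable abort (\Cref{def:classicalgen:strongrec}) together with \cite[Lemma 3.6]{AQY21}: with overwhelming probability over $k \leftarrow \{0,1\}^\secparam$, the abort weight $1-\eta$ is negligible, so $\widehat G_\secparam(k,x)$ — and hence $\rho$ — is within negligible trace (and therefore Frobenius) distance of a pure state; moreover, since $P_x^b$ only touches the last $n$ qubits, conjugation preserves the decomposition into an orthogonal pure branch (flag $\ket0$) and an abort branch supported on the flag-$\ket1$ subspace. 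Call the event $\{1-\eta \le \negl(\secparam)\}$ the event $E_0$.

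Next I would apply \Cref{cor:tomography} twice. With $N = 2^{n+1}$ and $L = 3^{8}\,2^{3(n+1)+2}\secparam = 4sN^2\secparam$ for $s = 3^{8}\,2^{n+1}$, one checks $9N/s = 1/729$, so \Cref{cor:tomography} gives that the tomograph $M \leftarrow {\cal T}_{\sf imp}(\rho^{\otimes L})$ appearing on the left-hand side of \Cref{def:distr:sameinput} satisfies $\norm{M-\rho}_F^2 \le 1/729$ except with negligible probability; call this $E_1$. Inside $\verify(\bfx, M)$, Step~2 runs ${\cal T}_{\sf imp}$ on $L$ fresh copies of the \emph{same} channel output $\rho = \Phi_\secparam(\bfx)$ to produce $\widehat M$, so the identical bound $\norm{\widehat M - \rho}_F^2 \le 1/729$ holds except with negligible probability; call this $E_2$. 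A union bound gives $\Pr[E_0 \cap E_1 \cap E_2] \ge 1 - \negl(\secparam)$.

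Then I would verify that on $E_0 \cap E_1 \cap E_2$ the procedure $\verify(\bfx,M)$ outputs $\valid$. For the abort check in Step~3: the pure branch of $\rho$ is orthogonal to $\ket\bot$ and the abort branch has weight at most $1-\eta$, so $\bra\bot \rho \ket\bot \le \negl$; hence $\bra\bot M\ket\bot \le \bra\bot\rho\ket\bot + \norm{M-\rho} \le \negl(\secparam) + \norm{M-\rho}_F \le \negl(\secparam) + 1/27 < 1/9$, so Step~3 does not reject. For Step~4: by the triangle inequality for the Frobenius norm, $\norm{M-\widehat M}_F \le \norm{M-\rho}_F + \norm{\rho - \widehat M}_F \le 1/27 + 1/27 = 2/27$, so $\norm{M-\widehat M}_F^2 \le 4/729$ and Step~4 outputs $\valid$. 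Since this happens with probability $1-\negl(\secparam)$ over $k \leftarrow \distr_{P_x,x,b}$ and the internal randomness of $\tomography$ and $\verify$, this is exactly $(L,\distr_{P_x,x,b})$-distributional same-input correctness.

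The main obstacle is the first step: correctly reducing the possibly-mixed, possibly-aborting output of $\widehat G$ to a near-pure state via \cite[Lemma 3.6]{AQY21}, and checking that conjugating by the $n$-qubit Pauli $P_x^b$ leaves both the pure/abort orthogonality intact and the abort weight $\bra\bot\rho\ket\bot$ negligible, so that Step~3's $1/9$ threshold is comfortably cleared. Once that is in place, the remaining inequalities are routine once the parameters $s = 3^8 2^{n+1}$, $N = 2^{n+1}$, and $L$ are matched so that $9N/s = 1/729$ and $4/729 = (2/27)^2$.
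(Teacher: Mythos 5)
Your proposal is correct and follows essentially the same route as the paper: restrict to the overwhelming fraction of keys with negligible abort weight, apply \Cref{cor:tomography} twice with the matched parameters $N=2^{n+1}$, $s=3^8 2^{n+1}$ to get both tomographs within Frobenius distance squared $1/729$ of $\rho$, use the flag-qubit orthogonality (preserved under $I\otimes P_x^b$) to clear the $1/9$ abort threshold, and finish with the triangle inequality for Step~4. The only cosmetic difference is that you bound $\bra{\bot}M\ket{\bot}$ via $|\bra{\bot}(M-\rho)\ket{\bot}|\le\norm{M-\rho}_F$ directly, whereas the paper routes this through \Cref{fact:frob_inequality}; both comfortably give a bound below $1/9$.
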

\begin{proof}
Define ${\cal K}_{\sf good} \subseteq \{0,1\}^{\secparam}$ such that for every $k \in {\cal K}_{\sf good}$ and $x \in \{0,1\}^d$, $\widehat{G}_{\secparam}(k,x)$ can be written as $\eta_{k,x} \left(\ketbra{0}{0} \otimes \ketbra{\psi_{k,x}}{\psi_{k,x}}\right) + (1 - \eta_{k,x}) \ketbra{\bot}{\bot}$, where $\eta_{k,x} \geq 1 - \negl(\secparam)$ for all $x\in\{0,1\}^{d}$. From the fact that $\{G_{\secparam}(\cdot,\cdot)\}$ is a PRFS, it follows that $|{\cal K}_{\sf good}| \geq (1 - \negl(\secparam))2^{\secparam}$. 
\par Fix $k \in {\cal K}_{\sf good}$. Let $x \in \{0,1\}^{d(\secparam)}$, $P_x$ be an $n$-qubit Pauli and $b \in\{0,1\}$. Set $\bfx=(P_x,k,x,b)$. Let $M \leftarrow \tomography\left( \rho^{\otimes L} \right)$, where $\rho = \Phi_{\secparam}(\bfx)$. We now argue that the probability that $\verify\left(\bfx,M\right)$ outputs $\valid$ is negligibly close to 1. 
\par Let $\widehat{M} \leftarrow \tomography\left( \rho^{\otimes L} \right)$, where $\rho=\Phi_{\secparam}(\bfx)$, be generated during the execution of $\verify\left(\bfx,M\right)$. Conditioned on the event that both $\|M - \rho\|_F^2 \leq \frac{1}{729}$ and $\|\widehat{M} - \rho\|_F^2 \leq \frac{1}{729}$ holds, we argue that $\bra{\bot} M \ket{\bot} \leq \frac{1}{2}$. Consider the following cases. 

\begin{itemize}
    \item Case $b=0$: In this case, $\rho= \left(\eta_{k,x} \left(\ketbra{0}{0} \otimes \ketbra{\psi_{k,x}}{\psi_{k,x}}\right) + (1 - \eta_{k,x}) \ketbra{\bot}{\bot}\right)$. Since $(\bra{0}\bra{\psi_{k,x}})\ket{\bot} = 0$ (from~\Cref{def:classicalgen:strongrec}), it follows that $\bra{\bot} \rho \ket{\bot} = (1-\eta_{k,x}) \leq \negl(\secparam)$. Since, we have that $\|M - \rho\|_F^2 \leq \frac{1}{729}$, by~\Cref{fact:frob_inequality}, we get $\bra{\bot}M\ket{\bot} \leq \negl(\secparam) + \frac{1}{27} + \sqrt{\frac{2-2\negl(\secparam)}{729}}\leq \frac{1}{9}$.

    \item Case $b=1$: In this case, $\rho_x= (I\otimes P_x)\left(\eta_{k,x} \left(\ketbra{0}{0} \otimes \ketbra{\psi_{k,x}}{\psi_{k,x}}\right) + (1 - \eta_{k,x}) \ketbra{\bot}{\bot}\right)(I\otimes P_x)$. Since $(\bra{0}\bra{\psi_{k,x}})\ket{\bot} = 0$ (from~\Cref{def:classicalgen:strongrec}), it follows that $\bra{\bot} \rho_x \ket{\bot} = (1-\eta_{k,x}) \bra{\bot}(I\otimes P_x)\ket{\bot}\bra{\bot}(I\otimes P_x)\ket{\bot}$. Since, for any unitary $A$ and any state $\ket{\phi}$, we have $\bra{\phi}A\ket{\phi} \leq \braket{\phi|\phi}$, we get $\bra{\bot} \rho_x \ket{\bot} \leq (1-\eta_{k,x}) \leq \negl(\secparam)$. Similar to the above case, we get $\bra{\bot}M\ket{\bot} \leq \negl(\secparam) + \frac{1}{27} + \sqrt{\frac{2-2\negl(\secparam)}{729}}\leq \frac{1}{9}$.
\end{itemize}

From~\Cref{cor:tomography}, it follows that (a) $\prob[\|M - \rho\|_F^2 \leq \frac{92^{(n+1)}}{3^{8} 2^{(n+1)}} \leq \frac{1}{729}] \geq 1-\negl(\secparam)$ and similarly, (b) $\prob[\|\widehat{M} - \rho\|_F^2 \leq \frac{1}{729}]\geq 1 - \negl(\secparam)$, where the probability is over the randomness of $\tomography$. Thus, it follows that $\prob[\|M - \widehat{M}\|_{F}^2 \leq \frac{4}{729}] \geq 1-\negl(\secparam)$.

\end{proof}



\paragraph{Distributional Different-Input Correctness.} We prove below that $\left(\tomography,\verify \right)$ satisfies $\left(\varepsilon,L,\Pi,\distr_{x}\right)$-different-input correctness, where $\Pi$ and $\distr_x$ are defined as follows:
$$
  \Pi\left(\left(P_0,k_0,x_0,b_0\right),\left(P_1,k_1,x_1,b_1\right)\right) =
  \begin{cases}
                                   0 & P_0=P_1, x_0=x_1\text{ and } b_0\neq b_1, \\
                                   1 & \text{otherwise.}
  \end{cases}
$$
The sampler for $\distr_{x}$ is defined as follows: sample $P_x \xleftarrow{\$} \mathcal{P}_n$, $k_0,k_1 \xleftarrow{\$} \{0,1\}^{\secparam}$ and output $(\left(P_x,k_0,x,0 \right),\allowbreak (\left(P_x,k_1,x,1 \right))$. 
\noindent We first prove an intermediate lemma that will be useful for proving distributional different-input correctness. Later on, this lemma will also be useful in the application of bit commitments.


\begin{lemma}
\label{lem:diff_key_binding}
Let $P_x\in\mathcal{P}_n$ and there exists a density matrix $M$ such that  $\verify(P_x\|k_0\|x\|0,M) = \valid$ and $\verify(P_x\|k_1\|x\|1,M) = \valid$, for some $k_0,k_1\in\{0,1\}^{\secparam}$. Then $$\Tr\left(P_x\ketbra{\psi_{k_1,x}}{\psi_{k_1,x}}P_x\ketbra{\psi_{k_0,x}}{\psi_{k_0,x}}\right)\geq \frac{542}{729}.$$
\end{lemma}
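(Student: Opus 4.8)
The plan is to peel apart the two \valid\ verdicts so as to sandwich the claimed tomograph $M$, in Frobenius distance, between the two ``intended'' density matrices $\rho_0 := \widehat{G}_\secparam(k_0,x)$ and $\rho_1 := (I\otimes P_x)\,\widehat{G}_\secparam(k_1,x)\,(I\otimes P_x)$, and then to turn this into a lower bound on $|\bra{\psi_{k_0,x}} P_x \ket{\psi_{k_1,x}}|^2$, which by a one-line computation is exactly the target quantity $\Tr\!\big(P_x\ketbra{\psi_{k_1,x}}{\psi_{k_1,x}}P_x\ketbra{\psi_{k_0,x}}{\psi_{k_0,x}}\big)$.

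First I would extract, from an accepting run of $\verify(P_x\|k_0\|x\|0,M)$: (i) $\bra{\bot}M\ket{\bot}\le\tfrac19$, from Step~3 of \Cref{fig:firstinstantiation}; and (ii) a tomograph $\widehat{M}$ of $\rho_0^{\otimes L}$ with $\|M-\widehat{M}\|_F^2\le\tfrac4{729}$, from Step~4. With $N=2^{n+1}$ and $L=3^{8}\,2^{3(n+1)+2}\secparam$, plugging into \Cref{cor:tomography} corresponds to $s=3^{8}\,2^{n+1}$ and hence $9N/s=\tfrac1{729}$, so $\|\widehat{M}-\rho_0\|_F^2\le\tfrac1{729}$ holds except with negligible probability; picking a run realizing both gives, by the triangle inequality for $\|\cdot\|_F$, $\|M-\rho_0\|_F\le\tfrac2{27}+\tfrac1{27}=\tfrac19$. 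The same reasoning applied to $\verify(P_x\|k_1\|x\|1,M)=\valid$ yields $\|M-\rho_1\|_F\le\tfrac19$, hence $\|\rho_0-\rho_1\|_F^2\le(\tfrac29)^2=\tfrac4{81}$.

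Next I would pin down the abort weights and extract the overlap. Writing $\ket{a}=\ket{0}\ket{\psi_{k_0,x}}$, $\ket{b}=(I\otimes P_x)\ket{0}\ket{\psi_{k_1,x}}$ and $\ket{\bot'}=(I\otimes P_x)\ket{\bot}$, and using $P_x^2=I$ together with \Cref{def:classicalgen:strongrec}, we have $\rho_0=\eta_0\ketbra{a}{a}+(1-\eta_0)\ketbra{\bot}{\bot}$ and $\rho_1=\eta_1\ketbra{b}{b}+(1-\eta_1)\ketbra{\bot'}{\bot'}$, where the flag qubit being $\ket{0}$ versus $\ket{1}$ makes $\ket{a},\ket{b}$ orthogonal to both $\ket{\bot},\ket{\bot'}$. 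Since $\|\cdot\|_F$ dominates the operator norm, $|\bra{\bot}(M-\rho_0)\ket{\bot}|\le\tfrac19$, so $1-\eta_0=\bra{\bot}\rho_0\ket{\bot}\le\bra{\bot}M\ket{\bot}+\tfrac19\le\tfrac29$; then $\bra{a}M\ket{a}\ge\eta_0-\tfrac19\ge\tfrac23$, and as $\ket{a}\perp\ket{\bot'}$ this forces $\bra{\bot'}M\ket{\bot'}\le\tfrac13$, whence $1-\eta_1=\bra{\bot'}\rho_1\ket{\bot'}\le\tfrac13+\tfrac19=\tfrac49$. Finally, since $\ket{a},\ket{b}\perp\ket{\bot},\ket{\bot'}$, the Hermitian matrix $\rho_0-\rho_1$ is block-diagonal with the block $\eta_0\ketbra{a}{a}-\eta_1\ketbra{b}{b}$ supported on $\mathrm{span}(\ket{a},\ket{b})$, so $\|\eta_0\ketbra{a}{a}-\eta_1\ketbra{b}{b}\|_F^2\le\|\rho_0-\rho_1\|_F^2\le\tfrac4{81}$; expanding by \Cref{fact:frab:square} (with $\Tr(\ketbra{a}{a}\ketbra{b}{b})=|\braket{a|b}|^2=\mu$, the target quantity) gives $\eta_0^2+\eta_1^2-2\eta_0\eta_1\mu\le\tfrac4{81}$, and therefore
\[
  \mu\ \ge\ \frac{\eta_0^2+\eta_1^2-4/81}{2\eta_0\eta_1}\ \ge\ 1-\frac{2/81}{\eta_0\eta_1}\ \ge\ 1-\frac{2/81}{(7/9)(5/9)}\ =\ \frac{33}{35}\ \ge\ \frac{542}{729},
\]
using $\eta_0^2+\eta_1^2\ge2\eta_0\eta_1$ and $\eta_0\ge\tfrac79,\ \eta_1\ge\tfrac59$.

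The main obstacle is the middle step: \verify\ only screens the ``untwisted'' abort component $\ket{\bot}$ of $M$, never the twisted component $\ket{\bot'}=(I\otimes P_x)\ket{\bot}$, so the abort weight $1-\eta_1$ of $\rho_1$ cannot be bounded directly and has to be recovered indirectly via the trace constraint $\Tr(M)=1$ and the closeness $\|M-\rho_0\|_F\le\tfrac19$ from the first step. One must also be careful about how ``$\verify(\cdots)=\valid$'' is quantified over the internal tomography randomness: reading the hypothesis as ``accepts with positive probability'' is exactly what lets one union-bound against the failure event of \Cref{cor:tomography} to obtain an accepting run that is also tomographically accurate.
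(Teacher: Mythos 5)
Your proof is correct and follows the same skeleton as the paper's: both $\valid$ verdicts are converted into Frobenius closeness of $M$ to $\rho_0=\widehat G_\secparam(k_0,x)$ and to $\rho_1=(I\otimes P_x)\widehat G_\secparam(k_1,x)(I\otimes P_x)$, the abort weights $1-\eta_0,1-\eta_1$ are bounded away from $1$, the difference $\rho_0-\rho_1$ is split into the first-qubit-$\ket{0}$ block and the abort block, and \Cref{fact:frab:square} turns $\|\eta_0\ketbra{a}{a}-\eta_1\ketbra{b}{b}\|_F^2\le 4/81$ into the overlap bound. The one place you genuinely diverge is the bound on $\eta_1$: the paper reads $\bra{\bot}(I\otimes P_x)M(I\otimes P_x)\ket{\bot}\le 1/9$ directly off the second $\valid$ verdict, which presumes that Step~3 of $\verify$ in \Cref{fig:firstinstantiation} conjugates the abort test by $P_x^b$ (the figure as written only tests $\bra{\bot}M\ket{\bot}$, independently of $b$); you instead take the figure literally and recover $\bra{\bot'}M\ket{\bot'}\le 1/3$ (with $\ket{\bot'}=(I\otimes P_x)\ket{\bot}$) from $\Tr(M)=1$ together with $\bra{a}M\ket{a}\ge 2/3$. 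That workaround is sound and arguably more robust, at the price of a weaker $\eta_1\ge 5/9$ versus the paper's $17/27$ --- still comfortably enough for $542/729$. Your use of $|\bra{\psi}A\ket{\psi}|\le\|A\|_F$ in place of \Cref{fact:frob_inequality} also simplifies the constants (and yields the better $\eta_0\ge 7/9$). Finally, your closing caveat about the randomness of $\verify$ is well taken: the lemma's hypothesis is phrased as if $\verify$ were deterministic, and the paper's proof silently conditions on the overwhelming-probability event that the internal tomography is $1/729$-accurate, exactly as you do.
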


\begin{proof}
Since $\verify(P_x\|k_0\|x\|0,M) = \valid$, therfore $$\bra{\bot}M\ket{\bot}\leq \frac{1}{9}$$ and $$\|M-M_0\|_F^2\leq \frac{4}{729},$$ where $M_0 = \tomography\left( \left( \Phi_\secparam\left(P_x\|k_0\|x\|0\right) \right)^{\otimes L} \right)$. 

Similarly, since $\verify(P_x\|k_1\|x\|1,M) = \valid$, $$\bra{\bot}(I\otimes P_x)M(I\otimes P_x)\ket{\bot}\leq \frac{1}{9}$$ and $$\|M-M_1\|_F^2\leq \frac{4}{729},$$ where $M_1 = \tomography\left( \left( \Phi_\secparam\left(P_x\|k_1\|x\|1\right) \right)^{\otimes L} \right)$. 

Since, $M_0 = \tomography\left( \left( \Phi_\secparam\left(P_x\|k_0\|x\|0\right) \right)^{\otimes L} \right)$ and $M_1 = \tomography\left( \left( \Phi_\secparam\left(P_x\|k_1\|x\|1\right) \right)^{\otimes L} \right)$, $$\|M_0 - \left(\eta_0\left(\ketbra{0}{0}\otimes\ketbra{\psi_{k_0,x}}{\psi_{k_0,x}}\right) + (1-\eta_0)\ketbra{\bot}{\bot}\right)\|_F^2\leq \frac{1}{729},$$ and $$\|M_1 - (I\otimes P_x)\left(\eta_1\left(\ketbra{0}{0}\otimes\ketbra{\psi_{k_1,x}}{\psi_{k_1,x}}\right) + (1-\eta_1)\ketbra{\bot}{\bot}\right)(I\otimes P_x)\|_F^2\leq \frac{1}{729}.$$
By triangle inequality, $$\|M - \left(\eta_0\left(\ketbra{0}{0}\otimes\ketbra{\psi_{k_0,x}}{\psi_{k_0,x}}\right) + (1-\eta_0)\ketbra{\bot}{\bot}\right)\|_F^2\leq \frac{1}{81},$$ and $$\|M - (I\otimes P_x)\left(\eta_1\left(\ketbra{0}{0}\otimes\ketbra{\psi_{k_1,x}}{\psi_{k_1,x}}\right) + (1-\eta_1)\ketbra{\bot}{\bot}\right)(I\otimes P_x)\|_F^2\leq \frac{1}{81}.$$
By Fact~\ref{fact:frob_inequality}, $$\bra{\bot}\left(\eta_0\ketbra{0}{0}\otimes\ketbra{\psi_{k_0}}{\psi_{k_0}} + (1-\eta_0)\ketbra{\bot}{\bot}\right)\ket{\bot}\leq 10/27,$$ and $$ \bra{\bot}(I\otimes P_x)\left(\eta_1\ketbra{0}{0}\otimes\ketbra{\psi_{k_1,x}}{\psi_{k_1,x}} + (1-\eta_1)\ketbra{\bot}{\bot}\right)(I\otimes P_x)\ket{\bot}\leq 10/27.$$ 
Simplifying, we get $$\eta_0\geq \frac{17}{27},$$ and $$\eta_1\geq \frac{17}{27}.$$ 
Also, by triangle inequality, 
\begin{multline*}
    \|(I\otimes P_x)\left(\eta_1\ketbra{0}{0}\otimes\ketbra{\psi_{k_1,x}}{\psi_{k_1,x}} + (1-\eta_1)\ketbra{\bot}{\bot} \right)(I\otimes P_x) \\
    - \left(\eta_0\ketbra{0}{0}\otimes\ketbra{\psi_{k_0,x}}{\psi_{k_0,x}} + (1-\eta_0)\ketbra{\bot}{\bot}\right)\|_F^2\leq 4/81.
\end{multline*}
Or, 
\begin{multline*}
    \|\eta_1(I\otimes P_x)\ketbra{0}{0}\otimes\ketbra{\psi_{k_1,x}}{\psi_{k_1,x}}(I\otimes P_x) - \eta_0\ketbra{0}{0}\otimes\ketbra{\psi_{k_0,x}}{\psi_{k_0,x}}  \\
    - \left((1-\eta_0)\ketbra{\bot}{\bot} - (1-\eta_1)(I\otimes P_x) x\ketbra{\bot}{\bot}(I\otimes P_x)\right)\|_F^2\leq 4/81.
\end{multline*}
By~\Cref{fact:frab:square}, since $\bra{\psi_{0}}(\ket{0}\ket{\psi_k}) = 0$,
\begin{multline*}
    \|\eta_1(I\otimes P_x)\ketbra{0}{0}\otimes\ketbra{\psi_{k_1,x}}{\psi_{k_1,x}}(I\otimes P_x) - \eta_0\ketbra{0}{0}\otimes\ketbra{\psi_{k_0,x}}{\psi_{k_0,x}}\|_F^2 \\
    + \|\left((1-\eta_0)\ketbra{\bot}{\bot} - (1-\eta_1)(I\otimes P_x) \ketbra{\bot}{\bot}(I\otimes P_x)\right)\|_F^2\leq 4/81.
\end{multline*}
Or, 
$$\|\eta_1(I\otimes P_x)\ketbra{0}{0}\otimes\ketbra{\psi_{k_1,x}}{\psi_{k_1,x}}(I\otimes P_x) - \eta_0\ketbra{0}{0}\otimes\ketbra{\psi_{k_0,x}}{\psi_{k_0,x}}\|_F^2 \leq 4/81.$$

By~\Cref{fact:frab:square}, 
$$\eta_1^2 + \eta_0^2 - \eta_1\eta_0\Tr\left(P_x\ketbra{\psi_{k_1,x}}{\psi_{k_1,x}}P_x\ketbra{\psi_{k_0,x}}{\psi_{k_0,x}}\right) \leq 4/81.$$

Hence, $$\Tr\left(P_x\ketbra{\psi_{k_1,x}}{\psi_{k_1,x}}P_x\ketbra{\psi_{k_0,x}}{\psi_{k_0,x}}\right)\geq \frac{542}{729}.$$
\end{proof}

With the above lemma in mind, we can prove the different-input correctness.
\begin{lemma}
\label{lem:diffkey:first}
$(\tomography,\verify)$ described in~\Cref{fig:firstinstantiation} satisfies $\left(O(2^{-n}),L,\Pi,\distr_{x}\right)$-different-input correctness, where $L = O(2^{3n} \secparam)$.
\end{lemma}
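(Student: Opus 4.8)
The plan is to show that $\verify(P_x\|k_1\|x\|1, M)$ rejects (outputs $\invalid$) with probability $1 - O(2^{-n})$ over the choice of $(\bfx, \bfx') \leftarrow \distr_x$ and over the randomness of $\tomography$, where $M = \tomography\left( (\Phi_\secparam(P_x\|k_0\|x\|0))^{\otimes L}\right)$. The argument proceeds by contradiction: suppose that with probability more than $\varepsilon = O(2^{-n})$, the verifier outputs $\valid$. On this event, $\verify(P_x\|k_0\|x\|0, M) = \valid$ would also hold with overwhelming probability --- this follows from same-input correctness (\Cref{lem:samekey:first}), since $M$ was itself produced by $\tomography$ applied to $\Phi_\secparam(P_x\|k_0\|x\|0)$, and for $k_0 \in \mathcal{K}_{\sf good}$ the same-input test passes except with negligible probability. (If $k_0 \notin \mathcal{K}_{\sf good}$, that event already has negligible probability and can be absorbed.) So conditioned on the bad event, both $\verify(P_x\|k_0\|x\|0,M) = \valid$ and $\verify(P_x\|k_1\|x\|1,M) = \valid$, which is precisely the hypothesis of \Cref{lem:diff_key_binding}.

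Thus \Cref{lem:diff_key_binding} forces $\Tr\left(P_x\ketbra{\psi_{k_1,x}}{\psi_{k_1,x}}P_x\ketbra{\psi_{k_0,x}}{\psi_{k_0,x}}\right) \geq \frac{542}{729}$. Equivalently, writing $P_x$ acting on the $n$-qubit part (the first qubit being the recognizable-abort flag), this says $\left|\bra{\psi_{k_0,x}} \widehat{P}_x \ket{\psi_{k_1,x}}\right|^2 \geq \frac{542}{729}$ for the $n$-qubit Pauli $\widehat{P}_x$. Now I would invoke the averaging fact over Paulis, \cite[Fact 6.9]{AQY21}: for any two fixed $n$-qubit pure states $\ket{\phi},\ket{\chi}$, $\E_{P \leftarrow \mathcal{P}_n}\left[\left|\bra{\phi}P\ket{\chi}\right|^2\right] = 2^{-n}$. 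By Markov's inequality, for fixed $\ket{\psi_{k_0,x}}, \ket{\psi_{k_1,x}}$, the probability over random $P_x$ that $\left|\bra{\psi_{k_0,x}} \widehat{P}_x \ket{\psi_{k_1,x}}\right|^2 \geq \frac{542}{729}$ is at most $\frac{729}{542} \cdot 2^{-n} = O(2^{-n})$. Since in $\distr_x$ the Pauli $P_x$ is sampled independently of $k_0, k_1$ (and the states $\ket{\psi_{k_0,x}},\ket{\psi_{k_1,x}}$ depend only on $k_0,k_1$), I can condition on $k_0, k_1$ first and then take the probability over $P_x$, giving that the bad event has probability $O(2^{-n}) + \negl(\secparam) = O(2^{-n})$.

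There is one subtlety to handle carefully: the states $\ket{\psi_{k_0,x}}, \ket{\psi_{k_1,x}}$ are only well-defined (as the ``correct output states'' of the recognizable-abort generator) when $k_0, k_1 \in \mathcal{K}_{\sf good}$; and more importantly, the output of $\widehat{G}_\secparam(k,x)$ is $\eta_{k,x}$-close to a pure state only for good keys. For $k_0, k_1 \in \mathcal{K}_{\sf good}$ (which holds except with negligible probability over the sampling in $\distr_x$), \Cref{lem:diff_key_binding}'s hypotheses and conclusion go through as stated, so the argument above applies. I would also double check that the constant $\frac{542}{729}$ coming out of \Cref{lem:diff_key_binding} is indeed bounded away from $0$ (it is), so that the Markov step yields a genuine $O(2^{-n})$ bound rather than something vacuous.

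The main obstacle I anticipate is bookkeeping around the recognizable-abort structure: the Pauli $P_x$ in the construction is an $n$-qubit Pauli applied to the last $n$ qubits of the $(n+1)$-qubit output (leaving the abort flag qubit alone), so I must be careful that the inner product $\bra{\psi_{k_0,x}}\widehat{P}_x\ket{\psi_{k_1,x}}$ that appears is genuinely an inner product of $n$-qubit states conjugated by an $n$-qubit Pauli, which is the exact form needed to apply \cite[Fact 6.9]{AQY21}. Once that identification is made cleanly, the rest is a short chain of Markov's inequality and a union bound over the negligible-probability events (bad keys, same-input test failure).
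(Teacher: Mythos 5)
Your proposal is correct and follows essentially the same route as the paper's proof: condition on $k_0,k_1\in{\cal K}_{\sf good}$ (absorbing bad keys as negligible), use same-input correctness (\Cref{lem:samekey:first}) to upgrade acceptance of the cross-input check to acceptance of both checks on the same $M$, invoke \Cref{lem:diff_key_binding} to force a large Pauli-conjugated overlap, and finish with \cite[Fact 6.9]{AQY21} plus Markov over the independently sampled $P_x$. The only differences are cosmetic (you phrase it as a contradiction and tomograph the $b=0$ input rather than the $b=1$ input, which is immaterial given the symmetry of \Cref{lem:diff_key_binding}).
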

\begin{proof}
Define $\distr'_x$ as follows: sample $P_x \xleftarrow{\$} \mathcal{P}_n$, $k_0,k_1 \xleftarrow{\$} {\cal K}_{\sf good}$ such that $k_0\neq k_1$ and output $\left(\left(P_x,k_0,x,0 \right),(\left(P_x,k_1,x,1 \right)\right)$. 
Let
\begin{multline*}
    p = \prob[ \valid \leftarrow \verify\left(P_x||k_0||x||0 , \tomography \left(\Phi_\secparam\left(P_x||k_1||x||1 \right)\right)^{\otimes L(\secparam)} \right)\ :\\
    \ \left(\left(P_x,k_0,x,0 \right),\left(P_x,k_1,x,1 \right)\right) \leftarrow \distr_{x}  ], 
\end{multline*}
and
\begin{multline*}
    p' = \prob[ \valid \leftarrow \verify\left(P_x||k_0||x||0 , \tomography \left(\Phi_\secparam\left(P_x||k_1||x||1 \right)\right)^{\otimes L(\secparam)} \right) \ :\\
    \ \left(\left(P_x,k_0,x,0 \right),\left(P_x,k_1,x,1 \right)\right) \leftarrow \distr'_{x}  ]
\end{multline*}
Then, since $|{\cal K}_{\sf good}| \geq (1 - \negl(\secparam))2^{\secparam}$, we know that $$p = p'\cdot\prob[ k_0,k_1\in  {\cal K}_{\sf good}:\ \left(\left(P_x,k_0,x,0 \right),\left(P_x,k_1,x,1 \right)\right) \leftarrow \distr_{x}  ] + \negl(\secparam),$$
or 
$$p = p'(1-\negl(\secparam)) + \negl(\secparam).$$
By~\Cref{lem:samekey:first}, we know that $$\prob[\valid\leftarrow\verify\left(P_x||k_1||x||1,\tomography \left(\Phi_\secparam\left(P_x||k_1||x||1 \right)\right)^{\otimes L(\secparam)}\right) :\ k_1\xleftarrow{\$}{\cal K}_{\sf good}] \geq 1-\negl(\secparam).$$
Hence, $$p' = \prob\left[\substack{ \verify(P_x||k_0||x||0,M_x) = \valid,\\ \ \\ \verify(P_x||k_1||x||1,M_x) = \valid  }:\substack{ \left(\left(P_x,k_0,x,0 \right),\left(P_x,k_1,x,1 \right)\right) \leftarrow \distr'_{x}\\ \ \\ M_x = \tomography \left(\Phi_\secparam\left(P_x||k_1||x||1 \right)\right)^{\otimes L(\secparam)}}\right]\cdot (1-\negl(\secparam))+\negl(\secparam).$$
By~\Cref{lem:diff_key_binding}, $$p'\leq \prob\left[\Tr\left(P_x\ketbra{\psi_{k_1,x}}{\psi_{k_1,x}}P_x\ketbra{\psi_{k_0,x}}{\psi_{k_0,x}}\right)\geq \frac{542}{729}:\left(\left(P_x,k_0,x,0 \right),\left(P_x,k_1,x,1 \right)\right) \leftarrow \distr'_{x}\right]\cdot (1-\negl(\secparam))+\negl(\secparam),$$
Or 
$$p'\leq \prob\left[\left|\bra{\psi_{k_1,x}}P_x\ket{\psi_{k_0,x}}\right|^2\geq \frac{542}{729}:\left(\left(P_x,k_0,x,0 \right),\left(P_x,k_1,x,1 \right)\right) \leftarrow \distr'_{x}\right]\cdot (1-\negl(\secparam))+\negl(\secparam).$$
We use the following fact \cite[Fact 6.9]{AQY21}: Let $\ket{\psi}$ and $\ket{\phi}$ be two arbitrary $n$-qubit states. Then, $$\E_{P_x\xleftarrow{\$}\mathcal{P}_n} \left[\left|\bra{\psi}P_x\ket{\phi}\right|^{2}\right]=2^{-n}.$$
For any $k_0,k_1,x$ by the above fact, $\E_{P_x\xleftarrow{\$}\mathcal{P}_n} \left[\left|\bra{\psi_{k_0,x}}P_x\ket{\psi_{k_1,x}}\right|^{2}\right]=2^{-n}.$ Using Markov’s inequality we get that for all $\delta>0$, $$\Pr_{P_x\xleftarrow{\$}\mathcal{P}_n}\left[\left|\bra{\psi_{k_0,x}}P_x\ket{\psi_{k_1,x}}\right|^2\geq\delta\right]\leq \delta^{-1}2^{-n}.$$
Hence, $$p'\leq\frac{729}{542}2^{-n}\cdot (1-\negl(\secparam))+\negl(\secparam),$$ and $$p\leq\frac{729}{542}2^{-n}\cdot (1-\negl(\secparam))+\negl(\secparam).$$
Hence, the scheme satisfies satisfies $\left(O(2^{-n}),L,\Pi,\distr_{x}\right)$-different-input correctness.

\end{proof}

\subsubsection{Second Instantiation}
\label{sec:secondinstantiation}
Similar to the first instantiation, we will start with a $(d(\secparam)+1,n(\secparam))$-PRFS  $\{G_{\secparam}\left(\cdot,\cdot \right)\}$ satisfying recognizable abort property (\Cref{def:classicalgen:strongrec}). We set $d(\secparam) = \lceil \log(\secparam) \rceil$ and $n(\secparam)=\lceil \log(\secparam) \rceil$.

\par We will describe the algorithms $(\tomography,\verify)$ in Figure~\ref{fig:secondinstantiation}. The set of channels ${\cal C}=\{\Phi_{\secparam}:\secparam \in \mathbb{N}\}$ is associated with $(\tomography,\verify)$, where $\Phi_{\secparam}$ is defined as follows: 
\begin{itemize}
    \item Let the input be initialized on register ${\bf A}$.
    \item Controlled on the first register containing the value $(k,i,b) \in \{0,1\}^{\ell(\secparam)}$, where $k \in \{0,1\}^{\secparam},i \in \{0,1\}^d, b \in \{0,1\}$, do the following: compute $ G_{\secparam}(k,i\|b)$ and store the result in the register ${\bf B}$. 
    \item Trace out ${\bf A}$ and output ${\bf B}$. 
\end{itemize}


\begin{figure}[!htb]
\begin{tabular}{|p{16cm}|}
\hline 
\ \\
\noindent \underline{$\tomography(\rho^{\otimes L})$}: On input $L$ copies of an  $2^n$-dimensional density matrix $\rho$, compute ${\cal T}_{\sf imp}(\rho^{\otimes L})$ to obtain $M$, where ${\cal T}_{\sf imp}$ is given in Corollary~\ref{cor:tomography}. Output $M$.  \\
\ \\



\noindent \underline{$\verify(\bfx,M)$}: 
\begin{enumerate}
    \item Run $\rho^{\otimes L} \leftarrow \left( \Phi_\secparam\left(\bfx\right) \right)^{\otimes L}$, where $L=2^{3n+11} \secparam$.
    \item Run $\tomography\left( \rho^{\otimes L} \right)$ to get $\widehat{M}$.
    \item If $\|M - \widehat{M}\|_{F}^2 \leq \frac{9}{128}$, output $\valid$. Output $\invalid$ otherwise.
\end{enumerate}
\\ 
\hline
\end{tabular}
\caption{Second instantiation of $\tomography$}
\label{fig:secondinstantiation}
\end{figure}

\paragraph{Same-Input Correctness.} We prove below that $(\tomography,\verify)$ satisfies same-input correctness. 

\begin{lemma}
\label{lem:samekey:second}
$(\tomography,\verify)$ described in~\Cref{fig:secondinstantiation} satisfies $L$-same-input correctness.
\end{lemma}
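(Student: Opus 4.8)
The plan is to mirror the proof of \Cref{lem:samekey:first} (the same-input correctness lemma for the first instantiation), which is conceptually simpler here because the second instantiation does not perform the $\ket{\bot}$-rejection step in $\verify$: the only thing that needs to be shown is that, with overwhelming probability, $\|M - \widehat{M}\|_F^2 \le \frac{9}{128}$, where both $M$ and $\widehat{M}$ are independent runs of $\tomography$ on $L$ copies of $\rho = \Phi_\secparam(\bfx)$. First I would invoke, as before, the fact that $\{G_\secparam\}$ being a PRFS implies (via \cite[Lemma 3.6]{AQY21}) that there is a ``good'' key set ${\cal K}_{\sf good} \subseteq \{0,1\}^\secparam$ of density $1 - \negl(\secparam)$ on which $\eta_{k,x}$ is negligibly close to $1$; for same-input correctness we actually don't even need this, since $\rho$ is whatever it is and we only compare two tomographs of the same $\rho$. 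So the heart is a purely quantitative argument using \Cref{cor:tomography}.

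The key steps, in order: (1) Unpack the parameters: $N = 2^n$, and $L = 2^{3n+11}\secparam = 4 s N^2 \secparam$ with $s = 2^{n+9} = 512\cdot 2^n$ (check: $4 s N^2 = 4 \cdot 512 \cdot 2^n \cdot 2^{2n} = 2^{11} \cdot 2^{3n}$, matching $L$). (2) Apply \Cref{cor:tomography} to the run producing $M$: with probability $\ge 1 - \negl(\secparam)$ we have $\|M - \rho\|_F^2 \le \frac{9N}{s} = \frac{9 \cdot 2^n}{512 \cdot 2^n} = \frac{9}{512}$. (3) Apply \Cref{cor:tomography} again, independently, to the run inside $\verify$ producing $\widehat{M}$: with probability $\ge 1 - \negl(\secparam)$, $\|\widehat{M} - \rho\|_F^2 \le \frac{9}{512}$. (4) Union-bound over these two events and apply the triangle inequality for the Frobenius norm: $\|M - \widehat{M}\|_F \le \|M - \rho\|_F + \|\rho - \widehat{M}\|_F \le 2\sqrt{9/512}$, so $\|M - \widehat{M}\|_F^2 \le 4 \cdot \frac{9}{512} = \frac{36}{512} = \frac{9}{128}$, exactly the threshold in $\verify$. (5) Conclude $\verify(\bfx, M)$ outputs $\valid$ with probability $\ge 1 - \negl(\secparam)$, for every $\bfx$, which is precisely $L$-same-input correctness (\Cref{def:sameinput}).

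The main obstacle — really just bookkeeping rather than a genuine obstacle — is making sure the arithmetic of the parameters lines up so that twice the per-run error bound squares to at most the $\frac{9}{128}$ constant hardcoded into $\verify$ in \Cref{fig:secondinstantiation}, and confirming that $L = 2^{3n+11}\secparam$ is indeed of the form $4sN^2\secparam$ needed to invoke \Cref{cor:tomography} with the stated error $\frac{9N}{s}$. One should also note that the two invocations of $\tomography$ use fresh, independent randomness and fresh copies of $\rho$ (the ``same-input'' caveat only means the \emph{classical input} $\bfx$ is the same, hence the underlying state is the same $\rho$), so the two error events are independent and the union bound loses only a factor of $2$ in the negligible failure probability. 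No appeal to pseudorandomness or to the $\ket{\bot}$ structure is required for this particular lemma; those enter only in the different-input correctness analysis for the second instantiation.
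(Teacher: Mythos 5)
Your proposal is correct and follows essentially the same route as the paper's proof: two independent invocations of \Cref{cor:tomography} with $N=2^n$, $s=2^{n+9}$ giving $\|M-\rho\|_F^2,\|\widehat{M}-\rho\|_F^2\le 9/512$ each with overwhelming probability, followed by the triangle inequality to get $\|M-\widehat{M}\|_F^2\le 9/128$. Your observation that neither pseudorandomness nor the $\ket{\bot}$ structure is needed here also matches the paper, which likewise makes no such appeal in this lemma.
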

\begin{proof}
Suppose $M \leftarrow \tomography\left(\rho^{\otimes L} \right)$, where $\rho = \Phi_{\secparam}(\bfx)$ for some $\bfx \in \{0,1\}^{\ell(\secparam)}$. We prove that $\verify\left( \bfx,M\right)$ outputs  $\valid$ with overwhelming probability. Let $\widehat{M} \leftarrow \tomography\left(\rho^{\otimes L} \right)$ be generated during the execution of $\verify\left( \bfx,M\right)$.\\

\noindent Let us condition on the event that $\|M-\rho\|_F^2 \leq \frac{9}{512}$ and $\|\widehat{M}-\rho\|_F^2 \leq \frac{9}{512}$. Using triangle inequality, we have that $\|M-\widehat{M}\|_F^2 \leq \frac{9}{128}$.\\

\noindent All that is left is to prove that $\|M-\rho\|_F^2 \leq \frac{9}{512}$ and $\|\widehat{M}-\rho\|_F^2 \leq \frac{9}{512}$ holds with overwhelming probability. We can invoke~\Cref{cor:tomography} since the dimension of $\rho$ is $2^n$ and the number of copies of $\rho$ used in tomography is $4\cdot 2^{n+9} \cdot (2^n)^2 \cdot \secparam$. In more detail, we have the following: 
\begin{eqnarray*}
\prob\left[\|M-\rho\|_F^2 \leq \frac{9}{512} \text{ and } \|\widehat{M}-\rho\|_F^2 \leq \frac{9}{512}\right] & = &  \prob\left[\|M-\rho\|_F^2 \leq \frac{9}{512}\right] \prob\left[ \|\widehat{M}-\rho\|_F^2 \leq \frac{9}{512}\right] \\ 
& \geq & (1 - \negl(\secparam))(1- \negl(\secparam))\ \ \text{(\Cref{cor:tomography})}\\
& \geq & (1 - \negl(\secparam))
\end{eqnarray*}
\end{proof}

\paragraph{Distributional Different-Input Correctness.} We prove below that $\left(\tomography,\verify \right)$ satisfies different-input correctness. 

\begin{lemma}
\label{lem:diffkey:second}
Assuming the security of $\{G_{\secparam}(\cdot,\cdot)\}$, $(\tomography,\verify)$ described in~\Cref{fig:secondinstantiation} satisfies $(L,\Pi,\distr_{i})$-different-input correctness, for every $i \in \{0,1\}^d$, where $\Pi:\{0,1\}^{\ell(\secparam)} \times \{0,1\}^{\ell(\secparam)} \rightarrow \{0,1\}$ and a distribution $\distr_{i}$ are defined as follows:
\begin{itemize}[noitemsep]
    \item $\Pi((k,i,b),(k',i',b'))=0$ if and only if $k=k'$, $i=i'$ and $b \neq b'$,
    \item $\distr_{i}$ is a distribution that samples $k \xleftarrow{\$} \{0,1\}^{\secparam}$ and outputs $((k,i,0),(k,i,1))$. 
\end{itemize} 
\end{lemma}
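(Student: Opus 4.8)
The plan is to show that the single key $k$ sampled by $\distr_i$ yields, with overwhelming probability, two PRFS output states $\Phi_\secparam(k,i,0)=G_\secparam(k,i\|0)$ and $\Phi_\secparam(k,i,1)=G_\secparam(k,i\|1)$ that are far apart in squared Frobenius distance, and then conclude via the accuracy of tomography plus the triangle inequality. Concretely, recall that $\distr_i$ outputs $\bfx=(k,i,0)$ and $\bfx'=(k,i,1)$, which satisfy $\Pi(\bfx,\bfx')=0$, and that inside $\verify(\bfx',M)$ a second tomograph $\widehat M\leftarrow\tomography(\Phi_\secparam(\bfx')^{\otimes L})$ is produced. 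Since $L=2^{3n+11}\secparam=4\cdot 2^{n+9}\cdot(2^n)^2\cdot\secparam$, taking $N=2^n$ and $s=2^{n+9}$ in \Cref{cor:tomography} gives $9N/s=\tfrac{9}{512}$, so $\|M-G_\secparam(k,i\|0)\|_F^2\le\tfrac{9}{512}$ and $\|\widehat M-G_\secparam(k,i\|1)\|_F^2\le\tfrac{9}{512}$, each with overwhelming probability. Now I would prove the following intermediate claim: $\prob_{k\leftarrow\{0,1\}^\secparam}\big[\|G_\secparam(k,i\|0)-G_\secparam(k,i\|1)\|_F^2\ge 1\big]\ge 1-\negl(\secparam)$. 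Given the claim, condition on it together with the two tomography-accuracy events; then by the triangle inequality for $\|\cdot\|_F$,
\[
  \|M-\widehat M\|_F \;\ge\; \|G_\secparam(k,i\|0)-G_\secparam(k,i\|1)\|_F - \|M-G_\secparam(k,i\|0)\|_F - \|\widehat M-G_\secparam(k,i\|1)\|_F \;\ge\; 1 - 2\sqrt{\tfrac{9}{512}} \;=\; 1-\sqrt{\tfrac{9}{128}} \;>\; \sqrt{\tfrac{9}{128}},
\]
using the identity $2\sqrt{9/512}=\sqrt{9/128}$. Hence $\|M-\widehat M\|_F^2>\tfrac{9}{128}$ and $\verify(\bfx',M)$ outputs $\invalid$; since each conditioned event holds with overwhelming probability over the randomness of $\distr_i$, $\Phi_\secparam$, $\tomography$, and $\verify$, this establishes $(L,\Pi,\distr_i)$-different-input correctness.

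It remains to prove the intermediate claim, which is where the security of $\{G_\secparam\}$ is used. Suppose the claim fails; then there is a non-negligible $p(\cdot)$ with $\prob_k[\|G_\secparam(k,i\|0)-G_\secparam(k,i\|1)\|_F^2<1]\ge p(\secparam)$ for infinitely many $\secparam$. I would build a non-uniform, selectively-secure PRFS distinguisher ${\cal D}$ (in the sense of \Cref{def:prfs}, with $s=2$) that hard-codes the two challenge inputs $x_1=i\|0$, $x_2=i\|1$, requests polynomially many copies of each challenge state — polynomially many suffices since $n=O(\log\secparam)$, so $N=2^n=\poly(\secparam)$ — runs ${\cal T}_{\sf imp}$ on each batch to obtain estimates $M_1,M_2$ that are within squared Frobenius distance $\delta$ of the respective challenge states with overwhelming probability (\Cref{cor:tomography}, for a sufficiently small constant $\delta$ to be fixed below, at the cost of only $\poly(\secparam)$ copies), and outputs $1$ iff $\|M_1-M_2\|_F^2<(1+2\sqrt\delta)^2$. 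In the PRFS game, whenever the bad event $\|G_\secparam(k,i\|0)-G_\secparam(k,i\|1)\|_F^2<1$ occurs, the triangle inequality gives $\|M_1-M_2\|_F<1+2\sqrt\delta$, so ${\cal D}$ outputs $1$ with probability at least $p(\secparam)-\negl(\secparam)$. In the Haar game, the challenges are independent Haar-random $n$-qubit states $\ket{\vartheta_1},\ket{\vartheta_2}$, and ${\cal D}$ outputting $1$ forces (again by triangle inequality) $\|\ketbra{\vartheta_1}{\vartheta_1}-\ketbra{\vartheta_2}{\vartheta_2}\|_F^2<(1+4\sqrt\delta)^2=:\eps$; choosing $\delta$ small enough that $\eps<2$, \Cref{fact:frobenius:haar} bounds this probability by $e^{-2^n(1-\eps/2)}\le e^{-\Omega(\secparam)}=\negl(\secparam)$. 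Thus ${\cal D}$ has distinguishing advantage at least $p(\secparam)-\negl(\secparam)$ for infinitely many $\secparam$, contradicting the security of $\{G_\secparam\}$, which proves the claim. (Note this argument automatically handles the degenerate case where both outputs are close to the abort state $\ket{\widehat\bot}$: that event, if it had non-negligible probability over $k$, would itself make the two outputs Frobenius-close and hence be caught by ${\cal D}$.)

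The main obstacle is the intermediate claim, i.e.\ converting ``the two PRFS outputs look Haar-random'' into ``they are Frobenius-far with overwhelming probability over $k$.'' The two ingredients that make this go through are: (i) estimating an $n=O(\log\secparam)$-qubit state to within any constant squared-Frobenius error is efficient (Lowe's procedure amplified by \Cref{cor:tomography}), so the reduction ${\cal D}$ is genuinely QPT; and (ii) the anti-concentration of the overlap of two Haar-random states (\Cref{fact:frobenius:haar}) gives a failure probability $e^{-\Omega(2^n)}=e^{-\Omega(\secparam)}$, which is small enough to dominate any non-negligible distinguishing advantage. Everything else is bookkeeping with the concrete constants $\tfrac{9}{512}$ and $\tfrac{9}{128}$, which were chosen precisely so that the triangle inequality $1-2\sqrt{9/512}>\sqrt{9/128}$ closes with room to spare.
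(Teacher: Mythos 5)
Your proposal is correct and follows essentially the same route as the paper: first argue via a reduction to PRFS security (tomography on the two challenge states plus the Haar anti-concentration bound of \Cref{fact:frobenius:haar}) that for all but a negligible fraction of keys the two outputs $G_\secparam(k,i\|0)$ and $G_\secparam(k,i\|1)$ are Frobenius-far, then conclude with \Cref{cor:tomography} and the triangle inequality against $\verify$'s threshold. The only nit is that your choice of threshold $1$ in the intermediate claim pushes the parameter of \Cref{fact:frobenius:haar} to $\eps=(1+4\sqrt\delta)^2>1$, slightly outside that fact's stated range $0\le\eps\le 1$ (though its proof extends to any $\eps<2$); lowering the claim's threshold to, say, $1/2$ keeps every constant in range while still closing the triangle inequality.
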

\begin{proof}
Define ${\cal K}_{\sf good} \subseteq \{0,1\}^{\secparam}$ as follows: for every $k \in \{0,1\}^{\secparam}$, $k \in {\cal K}_{\sf good}$ if and only if for every $i \in \{0,1\}^d$, $\|G_{\secparam}(k,i\|0) - G_{\secparam}(k,i\|1)\|_F^2 > \frac{81}{512}$. 
\par We show the following. 

\begin{claim}
For every $k \in {\cal K}_{\sf good}$, for every bit $b$,
$$\prob\left[\invalid \leftarrow \verify((k,i,b),M) \ :\ M \leftarrow \tomography((\Phi_{\secparam}((k,i,1-b))^{\otimes L} )) \right] \geq 1-\negl(\secparam)$$
\end{claim}
\begin{proof}
We show this for the case when $b=0$; the argument for the case when $b=1$ symmetrically follows. Let $\widehat{M} \leftarrow \tomography((\Phi_{\secparam}(k,i,0))^{\otimes L})$ be generated during the execution of $\verify((k,i,0),M)$.\\

\noindent Let us condition on the event that $\|\widehat{M} - G_{\secparam}(k,i,0) \|_F^2 \leq \frac{9}{512}$ and $\|\widehat{M} - G_{\secparam}(k,i,1) \|_F^2 \leq \frac{9}{512}$. We prove that $\|M - \widehat{M}\|_F^2 > \frac{9}{512}$. Suppose not, then the following holds:
\begin{eqnarray*}
\|G_{\secparam}(k,i\|0) - G_{\secparam}(k,i\|1)\|_F & \leq & \|\widehat{M} - G_{\secparam}(k,i,0) \|_F + \|\widehat{M} - G_{\secparam}(k,i,1) \|_F + \|M - \widehat{M}\|_F \\
& \leq & \sqrt{\frac{9}{512}} + \sqrt{\frac{9}{512}} + \sqrt{\frac{9}{512}} \leq 3 \sqrt{\frac{9}{512}}
\end{eqnarray*}
Thus, $\|G_{\secparam}(k,i\|0) - G_{\secparam}(k,i\|1)\|_F^2 \leq \frac{81}{512}$. This contradicts the fact that $\|G_{\secparam}(k,i\|0) - G_{\secparam}(k,i\|1)\|_F^2 > \frac{81}{512}$. 
\par To summarise, conditioned on the event that $\|M - G_{\secparam}(k,i,0) \|_F^2 \leq \frac{9}{512}$ and $\|\widehat{M} - G_{\secparam}(k,i,1) \|_F^2 \leq \frac{9}{512}$, it holds that $\|M - \widehat{M}\|_F^2 > \frac{9}{512}$. Thus, $\verify$ outputs $\invalid$. \\

\noindent As we showed in the proof of~\Cref{lem:samekey:second}, it follows that $\|\widehat{M} - G_{\secparam}(k,i,0) \|_F^2 \leq \frac{9}{512}$ and $\|\widehat{M} - G_{\secparam}(k,i,1) \|_F^2 \leq \frac{9}{512}$ holds with probability at least $(1-\negl(\secparam))$. \\

\par Combining the above observations, we have the following: for every $k \in {\cal K}_{\sf good}$, for every bit $b$, we show that $\verify((k,i,0),M)$ outputs $\invalid$ with probability at least $(1-\negl(\secparam))$. This completes the proof. 
\end{proof}

\noindent All that is left is to show that $|{\cal K}_{\sf good}|$ is large enough.  

\begin{claim}
$|{\cal K}_{\sf good}| \geq (1-\negl(\secparam))2^{\secparam}$.
\end{claim}
\begin{proof}
We invoke the security of $\{G_{\secparam}(\cdot,\cdot)\}$ to show this. Suppose the statement of the claim is false (that is $|{\cal K}_{\sf good}| \leq (1-\delta) 2^{\secparam}$, where $\delta$ is non-negligible), we prove that $\{G_{\secparam}(\cdot,\cdot)\}$ is insecure. \par Consider the following QPT distinguisher $D$ that distinguishes whether it is given (classical) oracle access to $G_{\secparam}(\cdot,\cdot)$ or whether it is given access to an oracle, call it ${\cal O}_{\sf Haar}$, that on any input $x \in \{0,1\}^d$, outputs an iid Haar random state. $D$ queries the oracle on all the inputs $\{0,1\}^{d+1}$ and for every input $x \in \{0,1\}^d$, it obtains $L$ copies of two states $\rho_{x\|0}$ and $\rho_{x\|1}$. It then computes $M_{x\|0} \leftarrow \tomography\left( \rho_{x\|0}^{\otimes L} \right)$ and $M_{x\|1}  \leftarrow \tomography \left( \rho_{x\|1}^{\otimes L} \right)$. It outputs 1 if there exists $x \in \{0,1\}^d$ such that $\|M_{x\|0} - M_{x\|1} \|_F^2 \leq \frac{15}{\sqrt{512}}$, otherwise it outputs 0.
\par We consider two cases below. 
\begin{itemize}
    \item $D$ has oracle access to ${\cal O}_{\sf Haar}$: for any $x \in \{0,1\}^d$,  from~\Cref{fact:frobenius:haar}, it follows that $\prob[\| \rho_{x\|0} - \rho_{x\|1} \|_F^2 \leq \frac{15}{\sqrt{512}}] \leq \negl(\secparam)$. By union bound, it follows that the probability that there exists an $x \in \{0,1\}^d$ such that $\| \rho_{x\|0} - \rho_{x\|1} \|_F^2 \leq \frac{15}{\sqrt{512}}$ is negligible in $\secparam$. Thus, $D$ outputs 1 with negligible probability. 
    \item $D$ has oracle access to $G_{\secparam}(k,\cdot)$, where $k \leftarrow \{0,1\}^{\secparam}$: Let us condition on the event that PRFS key $k \notin {\cal K}_{\sf good}$. This means that there exists an input $x \in \{0,1\}^d$ such that $\|G_{\secparam}(k,i\|0) - G_{\secparam}(k,i\|1)\|_F^2 \leq \frac{81}{512}$. Moreover, from~\Cref{cor:tomography}, it follows that with probability at least $(1-\negl(\secparam))$, $\|M_{x\|0} - G_{\secparam}(k,x\|0)\|_F^2 \leq \frac{9}{512}$ and $\|M_{x\|1} - G_{\secparam}(k,i\|1)\|_F^2 \leq \frac{9}{512}$. Thus, with probability at least $1 - \negl(\secparam)$, we have the following:
\begin{eqnarray*}
\|M_{x\|0} - M_{x\|1}\|_F & \leq & \|M_{x\|0} - G_{\secparam}(k,x,0) \|_F + \|M_{x\|1} - G_{\secparam}(k,x,1) \|_F + \| G_{\secparam}(k,x,0) - G_{\secparam}(k,x,1) \|_F \\
& \leq & \sqrt{\frac{9}{512}} + \sqrt{\frac{9}{512}} + \sqrt{\frac{81}{512}} = \frac{15}{\sqrt{512}}
\end{eqnarray*}
\noindent Since the probability that $k \notin {\cal K_{\sf good}}$ is at least $\delta$, we have that the probability that $D$ outputs 1 is at least $\delta(1-\negl(\secparam))$, which is non-negligible in $\secparam$. 
\end{itemize}
Since the difference in the probability that $D$ outputs 0 in both the above cases is at least non-negligible, we have that $D$ violates the security of PRFS. 
\end{proof}

\end{proof}




\section{Applications}
\label{sec:apps}
\noindent In this section, we show how to use PRFS to constrtuct a variety of applications: 
\begin{enumerate}
    \item Bit commitments with classical communication and,
    \item Pseudo one-time pad schemes with classical communication.
\end{enumerate}
\noindent To accomplish the above applications, we use verifiable tomography from~\Cref{sec:tomography}.  

\subsection{Commitment scheme}
\label{sec:commitment}
\noindent We construct bit commitments with classical communication from pseudorandom function-like quantum states. We recall the definition by~\cite{AQY21}. 
\par A (bit) commitment scheme is given by a pair of (uniform) QPT algorithms $(C,R)$, where $C=\{C_{\secparam}\}_{\secparam \in \mathbb{N}}$ is called the \emph{committer} and $R=\{R
_{\secparam}\}_{\secparam \in \mathbb{N}}$ is called the \emph{receiver}. There are two phases in a commitment scheme: a commit phase and a reveal phase. 
\begin{itemize}
    \item In the (possibly interactive) commit phase between $C_{\secparam}$ and $R_{\secparam}$, the committer $C_{\secparam}$ commits to a bit, say $b$. We denote the execution of the commit phase to be $\sigma_{CR} \leftarrow \commit \langle C_{\secparam}(b),R_{\secparam} \rangle$, where $\sigma_{CR}$ is a joint state of $C_{\secparam}$ and $R_{\secparam}$ after the commit phase. 
    \item In the 
    reveal phase $C_{\secparam}$ interacts with $R_{\secparam}$ and the output is a trit $\mu \in \{0,1,\bot \}$ indicating the receiver's output bit or a rejection flag. We denote an execution of the reveal phase where the committer and receiver start with the joint state $\sigma_{CR}$ by $\mu \leftarrow \reveal \langle C_\lambda, R_\lambda, \sigma_{CR} \rangle$. 
    
\end{itemize}

\noindent We require that the above commitment scheme satisfies the correctness, computational hiding, and statistical binding properties below.  

\begin{definition}[Correctness]
We say that a commitment scheme $(C,R)$ satisfies correctness if $$\prob\left[ b^{\ast} = b \ :\ \substack{ \sigma_{CR}\xleftarrow {}\commit \langle C_{\secparam}(b),R_{\secparam}\rangle,\\ \ \\ b^{\ast} \leftarrow \reveal\langle C_{\secparam},R_{\secparam},\sigma_{CR} \rangle  } \right] \geq 1-\nu(\secparam),$$
where $\nu(\cdot)$ is a negligible function. 
\end{definition}

\begin{definition}[Computational Hiding]
We say that a commitment scheme $(C,R)$ satisfies computationally hiding if for any malicious QPT receiver $\left\{R^{\ast}_{\secparam}\right\}_{\secparam\in\mathbb{N}}$, for any QPT distinguisher $\left\{D_{\secparam}\right\}_{\secparam\in\mathbb{N}}$, the following holds: 
$$\left|\Pr_{(\tau,\sigma_{C{R^{\ast}}})\xleftarrow{}\commit \langle C_{\secparam}(0),R_{\secparam}^{\ast}\rangle}\left[D_{\secparam}(\sigma_{R^{\ast}}) = 1\right] - \Pr_{(\tau,\sigma_{CR^{\ast}})\xleftarrow{}\commit\langle C_{\secparam}(1),R_{\secparam}^{\ast}\rangle}\left[D_{\secparam}(\sigma_{R^{\ast}}) = 1\right] \right|\leq \varepsilon(\secparam),$$ for some negligible $\varepsilon(\cdot)$.
\end{definition}

\begin{definition}[Statistical Binding]
We say that a commitment scheme $(C,R)$ satisfies statistical binding if for every QPT sender $\{C^*_{\secparam}\}_{\secparam \in \mathbb{N}}$, there exists a (possibly inefficient) extractor ${\cal E}$ such that the following holds: 
$$\prob\left[ \mu \neq b^* \wedge \mu \neq \bot \ :\ \substack{ (\tau,\sigma_{C^*R})\xleftarrow {}\commit \langle C^*_{\secparam},R_{\secparam}\rangle,\\ \ \\ b^* \leftarrow {\cal E}(\tau),\\ \ \\ \mu \leftarrow \reveal\langle C_{\secparam}^* ,R_{\secparam} ,\sigma_{C^* R} \rangle  } \right] \leq \nu(\secparam),$$
where $\nu(\cdot)$ is a negligible function and $\tau$ is the transcript of the $\commit$ phase. 
\end{definition}

\begin{remark}[Comparison with~\cite{AQY21}]
In the binding definition of~\cite{AQY21}, given the fact that the sender's and the receiver's state could potentially be entangled with each other, care had to be taken to ensure that after the extractor was applied on the receiver's state, the sender's state along with the decision bit remains (indistinguishable) to the real world. In the above definition, however, since the communication is entirely classical, any operations performed on the receiver's end has no consequence to the sender's state. As a result, our definition is much simpler than~\cite{AQY21}.   
\end{remark}

\subsubsection{Construction}
Towards constructing a commitment scheme with classical communication, we use a verifiable tomography from~\Cref{fig:firstinstantiation}.

\paragraph{Construction.} We present the construction in Figure~\ref{fig:commitment}. In the construction, we require  $d(\secparam)=\lceil \log\frac{ 3\secparam}{n} \rceil \geq 1$.



\noindent

\begin{figure}[!htb]
   \begin{tabular}{|p{16cm}|}
   \hline \\
\par \textbf{$\commit(b)$:} 
\begin{itemize}
    \item The reciever $R_{\secparam}$ samples an $m$-qubit Pauli $P = \bigotimes_{x\in\{0,1\}^d} P_x$ where $m=2^dn$. It sends $P$ to the commiter.
    \item The committer $C_{\secparam}$ on intput $b\in\{0,1\}$ does the following:
    \begin{itemize}
        \item Sample $k\xleftarrow{\$}\{0,1\}^{\secparam}$.
        \item For all $x\in\{0,1\}^d$
        \begin{itemize}
            \item Generate $\sigma_x^{\otimes L}\xleftarrow{}  \left(\Phi_\secparam\left(P_x||k||x||b\right)\right)^{\otimes L}$, where $L=3^{8} 2^{3n+5} \secparam$.
            \item $M_x \xleftarrow{} \tomography\left(\sigma_x^{\otimes L}\right).$ 
        \end{itemize}
        \item Send $M = \left(M_x\right)_{x\in\{0,1\}^d}$ to the reciever.
    \end{itemize}
\end{itemize}
\par \textbf{$\reveal$:} 
\begin{itemize}
    \item The commiter sends $\left(k,b\right)$ as the decommitment. If $b\not\in\{0,1\}$, the reciever outputs $\perp$. Output $b$ if for each $x\in\{0,1\}^d$, $\verify(P_x||k||x||b,M) = \valid$, output $\perp$ otherwise.
\end{itemize}
\\ 
\hline
\end{tabular}
\caption{Commitment scheme}
\label{fig:commitment}
\end{figure}

\noindent We prove that the construction in Figure~\ref{fig:commitment} satisfies correctness, computational hiding and statistical binding properties. 

\begin{lemma}[Correctness]
The commitment scheme in Figure~\ref{fig:commitment} satisfies correctness. 
\end{lemma}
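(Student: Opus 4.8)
The plan is to verify that the honest commit-then-reveal execution accepts the correct bit with overwhelming probability, which reduces entirely to the same-input correctness of the verifiable tomography scheme of \Cref{fig:firstinstantiation}. In the commit phase, the honest committer samples $k \xleftarrow{\$} \{0,1\}^\secparam$, and for each $x \in \{0,1\}^d$ computes $M_x \leftarrow \tomography\big((\Phi_\secparam(P_x\|k\|x\|b))^{\otimes L}\big)$; in the reveal phase the receiver accepts $b$ iff $\verify(P_x\|k\|x\|b, M)$ outputs $\valid$ for all $x$. So the quantity to lower-bound is exactly
\[
\prob\Big[\bigwedge_{x \in \{0,1\}^d}\verify\big(P_x\|k\|x\|b,\ \tomography((\Phi_\secparam(P_x\|k\|x\|b))^{\otimes L})\big) = \valid\Big],
\]
where the probability is over $k$, over the Paulis $P_x$ chosen by the honest receiver, and over the internal randomness of $\tomography$ and $\verify$.

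First I would fix an arbitrary $x$, $P_x$, $b$ and apply \Cref{lem:samekey:first}, which states that $(\tomography,\verify)$ satisfies $(L,\distr_{P_x,x,b})$-distributional same-input correctness for the distribution $\distr_{P_x,x,b}$ that samples $k \xleftarrow{\$}\{0,1\}^\secparam$ and outputs $\bfx = (P_x,k,x,b)$. This gives that, for each fixed $(x,P_x,b)$, $\verify(\bfx, \tomography((\Phi_\secparam(\bfx))^{\otimes L}))$ outputs $\valid$ except with negligible probability over $k$ and the tomography randomness. I should double-check that the value of $L$ used in the commitment construction, namely $L = 3^8 2^{3n+5}\secparam$, meets the $L = O(2^{3n}\secparam)$ requirement of \Cref{lem:samekey:first} (it does, with room to spare), so the lemma applies verbatim.

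Next I would take a union bound over the $2^d = O(\poly(\secparam))$ values of $x$ (recall $d(\secparam) = \lceil \log\frac{3\secparam}{n}\rceil$ is logarithmic, so $2^d$ is polynomial), concluding that all $2^d$ verification checks pass simultaneously except with negligible probability. Averaging over the honest receiver's choice of $P = \bigotimes_x P_x$ changes nothing since the bound from \Cref{lem:samekey:first} holds for every fixed $P_x$. Hence the honest execution reveals $b$ with probability $1 - \negl(\secparam)$, establishing correctness.

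I don't expect a genuine obstacle here — the real work is already packaged inside \Cref{lem:samekey:first} and \Cref{cor:tomography}. The only points requiring care are bookkeeping: confirming the parameter $L$ in \Cref{fig:commitment} is at least the $L$ demanded by the same-input correctness lemma (and that $L$ copies of $\Phi_\secparam(P_x\|k\|x\|b)$ can indeed be produced efficiently, which follows since $\widehat G$ and hence $\Phi_\secparam$ run in $\poly(\secparam)$ time and $2^d, 2^n = \poly(\secparam)$), and making sure the union bound over $x$ absorbs only polynomially many negligible terms. If one wanted a self-contained argument rather than invoking \Cref{lem:samekey:first} as a black box, one would unwind it: restrict to the overwhelmingly-large good key set ${\cal K}_{\sf good}$ on which $\eta_{k,x} \geq 1-\negl(\secparam)$ (guaranteed by \cite[Lemma 3.6]{AQY21}), use \Cref{cor:tomography} to get that both $M_x$ and the freshly-tomographed $\widehat M_x$ inside $\verify$ are within Frobenius-squared distance $\tfrac{1}{729}$ of $\rho = \Phi_\secparam(P_x\|k\|x\|b)$, deduce $\|M_x - \widehat M_x\|_F^2 \leq \tfrac{4}{729}$ by the triangle inequality, and separately bound $\bra{\bot}M_x\ket{\bot} \leq \tfrac19$ via \Cref{fact:frob_inequality} using $\bra{\bot}\rho\ket{\bot} = 1-\eta_{k,x} \leq \negl(\secparam)$ — but this is precisely the proof of \Cref{lem:samekey:first}, so citing it suffices.
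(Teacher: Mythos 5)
Your proof is correct and takes the same route as the paper, whose entire proof of this lemma is a one-line citation of \Cref{lem:samekey:first}. The extra bookkeeping you supply — checking the value of $L$ in \Cref{fig:commitment} against the hypothesis of \Cref{lem:samekey:first} and union-bounding over the $2^d = \poly(\secparam)$ inputs $x$ — is exactly the detail the paper leaves implicit.
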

\begin{proof}
This follows from Lemma~\ref{lem:samekey:first}.
\end{proof}

\begin{lemma}[Computational Hiding]
The commitment scheme in Figure~\ref{fig:commitment} satisfies computational hiding.
\end{lemma}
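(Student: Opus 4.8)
The plan is to prove computational hiding through a four-hybrid argument that replaces the pseudorandom states by Haar-random states, exploits unitary invariance of the Haar measure to equate the $b=0$ and $b=1$ experiments in the Haar world, and then undoes the replacement on the $b=1$ side. Fix an arbitrary malicious QPT receiver $R^*_\secparam$ and distinguisher $D_\secparam$. Since the only message from committer to receiver is $M = (M_x)_{x \in \{0,1\}^d}$ (and the transcript is $(P,M)$), it suffices to show that the distribution of $M$ conditioned on the Pauli $P$ produced by $R^*_\secparam$ is computationally indistinguishable between the two experiments; feeding the resulting $\sigma_{R^*}$ to $D_\secparam$ is then a routine reduction that runs $R^*_\secparam$ to get $P$ and its internal state, simulates the committer's single message $M$, finishes $R^*_\secparam$, and invokes $D_\secparam$.

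First I would set up the hybrids. Let $\hybrid_0 = \commit\langle C_\secparam(0), R^*_\secparam\rangle$ and $\hybrid_3 = \commit\langle C_\secparam(1), R^*_\secparam\rangle$. In $\hybrid_0$, since $P_x^0 = I$, the committer forms $M_x \leftarrow \tomography(\widehat{G}_\secparam(k,x)^{\otimes L})$ for uniform $k$. Move to $\hybrid_1$, in which, for each $x \in \{0,1\}^d$, the state $\widehat{G}_\secparam(k,x)$ is replaced by $\ketbra{0}{0} \otimes \ketbra{\vartheta_x}{\vartheta_x}$ with $\ket{\vartheta_x} \leftarrow \Haar_{n(\secparam)}$ sampled independently. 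This I would justify in two sub-steps: (i) by \cite[Lemma 3.6]{AQY21} the recognizable-abort parameter $\eta_{k,x}$ is $1 - \negl(\secparam)$ with overwhelming probability over $k$, so after a union bound over the $2^d = O(\secparam/\log\secparam)$ inputs, with overwhelming probability over $k$ the state $\widehat{G}_\secparam(k,x)$ is within negligible trace distance of $\ketbra{0}{0} \otimes G_\secparam(k,x)$ for every $x$ simultaneously, and passing to $L = \poly(\secparam)$ copies and applying the tomography channel blows this up only by a $\poly(\secparam)$ factor; (ii) by selective PRFS security of $G$ (\Cref{def:prfs}) with the $2^d$ inputs $\{0,1\}^d$ and $t = L$ copies, replacing each $G_\secparam(k,x)$ by $\ket{\vartheta_x}$ is computationally indistinguishable — the distinguisher here is the QPT algorithm that runs $R^*_\secparam$ to obtain $P$, applies the tomography procedure ${\cal T}_{\sf imp}$ of \Cref{cor:tomography} (QPT since $N = 2^{n+1} = \poly(\secparam)$) to the supplied copies to form $M$, continues $R^*_\secparam$, and outputs $D_\secparam(\sigma_{R^*})$.

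Next I would establish the key structural fact. Let $\hybrid_2$ be the analogue of $\hybrid_1$ with $b=1$: the committer forms $M_x \leftarrow \tomography\!\big((\ketbra{0}{0} \otimes P_x\ketbra{\vartheta_x}{\vartheta_x}P_x)^{\otimes L}\big)$. For any fixed $P = \bigotimes_x P_x$, since $\ket{\vartheta_x}$ is Haar-random, $P_x\ket{\vartheta_x}$ is again Haar-random and independent across $x$; hence $\hybrid_1$ and $\hybrid_2$ induce identical distributions on $(P,M)$ and therefore on $\sigma_{R^*}$. Finally, $\hybrid_2$ and $\hybrid_3$ are computationally indistinguishable by exactly the argument relating $\hybrid_0$ and $\hybrid_1$, run in reverse, now with the (harmless, known-to-the-reduction) conjugation by $P_x$ sitting inside the tomographed states. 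Chaining $\hybrid_0, \hybrid_1 = \hybrid_2, \hybrid_3$ yields computational hiding.

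The main obstacle I expect is the bookkeeping around the recognizable-abort structure in sub-step (i): one must check that the flag qubit and the $\ketbra{\bot}{\bot}$ branch are cleanly absorbed so that the object actually being replaced is the PRFS output $G_\secparam(k,x)$ to which \Cref{def:prfs} literally applies, and that the negligible errors survive both taking $L$ copies and the tomography post-processing. A secondary point is to confirm that all relevant sizes — $2^d$, $L$, and $N = 2^{n+1}$ — are polynomial in $\secparam$ (which holds since $d = \lceil\log(3\secparam/n)\rceil$ and $n = \lceil 3\log\secparam\rceil$), so that the tomography algorithm is genuinely QPT and selective security with polynomially many queried inputs is applicable.
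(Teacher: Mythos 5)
Your proposal is correct and follows essentially the same route as the paper: replace the PRFS states by Haar-random states via PRFS security in both the $b=0$ and $b=1$ experiments, then observe that the two Haar-world hybrids coincide by unitary invariance of the Haar measure. You in fact supply more detail than the paper does (the recognizable-abort bookkeeping via \cite[Lemma 3.6]{AQY21}, the efficiency of the tomography-based reduction, and the sufficiency of selective security since $2^d$ is polynomial), all of which is sound.
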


\begin{proof}
We prove the security via a hybrid argument. Fix $\secparam \in \mathbb{N}$. Consider a QPT adversary $R^{\ast}_{\secparam}$. 

\paragraph{Hybrid $H_{1,b}$, for all $b\in\{0,1\}$.} This corresponds to $C$ commiting to the bit $b$.

\paragraph{Hybrid $H_{2,b}$, for all $b\in\{0,1\}$.} This hybrid is the same as before except that for all $x\in \{0,1\}^d$, $\Phi_\secparam\left(P||k||x||b\right)$ replaced with $\left(\left(I\otimes P_x^b\right) \left(\ketbra{0}{0}\otimes\ketbra{\vartheta_{x}}{\vartheta_{x}}\right) \left(I\otimes P_x^b\right)\right)$ where $\ket{\vartheta_1},...,\ket{\vartheta_{2^d}}\xleftarrow{}\mathscr{H}_n$. 

The hybrids $H_{1,b}$ and $H_{2,b}$ are computationally indistinguishable because of the security of $PRFS$. $H_{2,0}$ and $H_{2,1}$ are identical by the unitary invariance property of Haar  distribution. Hence, $H_{1,0}$ and $H_{1,1}$ are computationally indistinguishable.

\end{proof}

\begin{lemma}[Statistical Binding]\label{lem:binding}
The commitment scheme in Figure~\ref{fig:commitment} satisfies $O(2^{-0.5\secparam})$-statistical binding.
\end{lemma}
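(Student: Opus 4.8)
The plan is to show statistical binding by constructing an inefficient extractor $\mathcal{E}$ from the committed transcript $\tau = (P, M)$, where $M = (M_x)_{x \in \{0,1\}^d}$, and then arguing that if the sender later reveals $(k,b)$ that gets accepted, then $b$ agrees with the extracted bit except with probability $O(2^{-0.5\secparam})$. The natural extractor $\mathcal{E}(\tau)$ does the following: it looks for some $k$ and bit $b$ such that $\verify(P_x\|k\|x\|b, M) = \valid$ for all $x \in \{0,1\}^d$; if such a pair exists, output $b$ (if multiple bits $b$ work with possibly different keys, output one of them, say with a fixed tie-breaking rule); otherwise output $\bot$. Since $\reveal$ accepts $(k,b)$ exactly when all these $\verify$ checks pass, whenever $\reveal$ outputs $b^* \neq \bot$ the extractor will have found a valid pair, so the only failure mode is $b^* \neq \mathcal{E}(\tau)$, which requires that there exist keys $k_0, k_1$ with $\verify(P_x\|k_0\|x\|0, M) = \valid$ and $\verify(P_x\|k_1\|x\|1, M) = \valid$ simultaneously for every $x$.

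**The core step** is to bound the probability of this bad event over the receiver's random choice of $P = \bigotimes_x P_x$ (note the $P_x$'s are independent uniform $n$-qubit Paulis, chosen before the sender acts). Here I invoke Lemma~\ref{lem:diff_key_binding}: if both $\verify(P_x\|k_0\|x\|0, M_x) = \valid$ and $\verify(P_x\|k_1\|x\|1, M_x) = \valid$, then $\left|\bra{\psi_{k_1,x}} P_x \ket{\psi_{k_0,x}}\right|^2 \geq \frac{542}{729}$. Now the subtlety is that $k_0, k_1$ are chosen by a malicious committer and may depend on $P$; but since there are only $2^\secparam$ possible keys, I can union-bound over all pairs $(k_0, k_1)$ and all $x$. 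For any \emph{fixed} pair $(k_0, k_1, x)$, \Cref{fact 6.9 of AQY21} gives $\E_{P_x}\left[\left|\bra{\psi_{k_0,x}} P_x \ket{\psi_{k_1,x}}\right|^2\right] = 2^{-n}$, so by Markov the probability that it exceeds $\frac{542}{729}$ is at most $\frac{729}{542} \cdot 2^{-n}$. Since it suffices for this to happen at even a \emph{single} $x$ for the bad pair to be ruled out, and it must happen at \emph{all} $x$ for the pair to survive, the survival probability for a fixed pair is at most $\left(\frac{729}{542} 2^{-n}\right)^{2^d}$ — wait, more carefully, the bad event for the commitment requires that for this pair, at \emph{every} $x$ the overlap is large, so the probability is $\prod_x \Pr_{P_x}[\text{overlap large at } x] \leq \left(\frac{729}{542} 2^{-n}\right)^{2^d}$.

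**Assembling the union bound:** there are at most $2^\secparam \cdot 2^\secparam = 2^{2\secparam}$ pairs $(k_0,k_1)$, so the total probability that some surviving pair exists is at most $2^{2\secparam} \cdot \left(\frac{729}{542}\right)^{2^d} 2^{-n 2^d}$. Using the parameter setting $d(\secparam) = \lceil \log \frac{3\secparam}{n} \rceil$ so that $n \cdot 2^d \geq 3\secparam$, this becomes roughly $2^{2\secparam} \cdot 2^{-3\secparam} \cdot \mathrm{poly} = 2^{-\secparam} \cdot \mathrm{poly}$, which is dominated by $O(2^{-0.5\secparam})$ for large $\secparam$. I also need to fold in the $\negl(\secparam)$ error terms coming from Lemma~\ref{lem:diff_key_binding} being stated only up to the tomography/verification slack, and from the possibility that the $M_x$ produced by a cheating committer doesn't come from any honest $\tomography$ execution — but Lemma~\ref{lem:diff_key_binding} is stated for an \emph{arbitrary} density matrix $M$ satisfying the two $\verify$ conditions, so this is fine.

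**The main obstacle** I anticipate is handling the adaptivity/dependence of $(k_0,k_1)$ on $P$ cleanly: the union bound over $2^{2\secparam}$ key pairs costs a factor $2^{2\secparam}$, which is exactly why we need $n \cdot 2^d \geq 3\secparam$ (giving $2^{-3\secparam}$ from the product over $x$) to beat it — the constants have to be tracked with some care to land at the claimed $O(2^{-0.5\secparam})$ rather than something weaker, and one must double-check that Lemma~\ref{lem:diff_key_binding}'s conclusion $\frac{542}{729}$ combined with the Markov step indeed yields a per-coordinate probability bounded by a constant times $2^{-n}$ (it does: $\frac{729}{542} < 2$). A secondary point to verify is that the extractor's tie-breaking when both bits are simultaneously openable is irrelevant, because the bad event is precisely "both bits openable," whose probability we are bounding directly.
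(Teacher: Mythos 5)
Your proposal is correct and follows essentially the same route as the paper: the same exhaustive-search extractor, the same invocation of Lemma~\ref{lem:diff_key_binding} to reduce double-openability to a large Pauli-conjugated overlap at every $x$, and the same Markov-plus-independence-plus-union-bound calculation, which the paper merely packages as a standalone Fact (\ref{fact:almost_ortho}) yielding $p \le (729/542)^{2^d}\,2^{2\secparam-m}$ with $m = 2^d n \ge 3\secparam$. One tiny quibble: the residual factor $(729/542)^{2^d}$ is $2^{\Theta(\secparam/\log\secparam)}$, i.e.\ subexponential rather than polynomial as you wrote, but it is still swallowed by the $2^{-\secparam}$ and the claimed $O(2^{-0.5\secparam})$ bound goes through.
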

\begin{proof}[Proof of Lemma~\ref{lem:binding}]
Let $C^{\ast} = \left\{C^{\ast}_{\lambda}\right\}_{\lambda\in\N}$ be a malicous committer. Execute the commit phase between $C^{\ast}_{\secparam}$ and $R_{\secparam}$. Let $\tau$ be the classical transcript and let $\sigma_{C^{\ast}R}$ be the joint state of $C^{\ast}R$. We first provide the description of an extractor.

\paragraph{Description of ${\cal E}$.} On the input $\tau = (P,M)$, the extractor does the following:
\begin{enumerate}
    \item For all $k'||b'\in\{0,1\}^{\lambda}\times\{0,1\}$, run for all $x\in\{0,1\}^d$,  $\verify(P_x||k'||x||b',M)$. 
    \item If for all $x\in\{0,1\}^d$, $\verify(P||k'||x||b',M) = \valid$, output $b'$. 
    \item Else output $\bot$.
\end{enumerate}

\begin{fact}\label{fact:almost_ortho}
Let $\mathcal{P}_m$ be the $m$-qubit Pauli group. Then, $$\Pr_{P\xleftarrow{\$}\mathcal{P}_m}\left[\exists k_0,k_1 : \forall x\in\{0,1\}^d,  \left|\bra{\psi_{k_0,x}}P_x\ket{\psi_{k_1,x}}\right|^2\geq\delta\right]\leq \delta^{-2^d}2^{2\secparam-m}.$$
\end{fact}
\begin{proof}
We use the following fact \cite[Fact 6.9]{AQY21}: Let $\ket{\psi}$ and $\ket{\phi}$ be two arbitrary $n$-qubit states. Then, $$\E_{P_x\xleftarrow{\$}\mathcal{P}_n} \left[\left|\bra{\psi}P_x\ket{\phi}\right|^{2}\right]=2^{-n}.$$
For any $k_0,k_1,x$ by the above fact, $\E_{P_x\xleftarrow{\$}\mathcal{P}_n} \left[\left|\bra{\psi_{k_0,x}}P_x\ket{\psi_{k_1,x}}\right|^{2}\right]=2^{-n}.$ Using Markov’s inequality we get that for all $\delta>0$, $$\Pr_{P_x\xleftarrow{\$}\mathcal{P}_n}\left[\left|\bra{\psi_{k_0,x}}P_x\ket{\psi_{k_1,x}}\right|^2\geq\delta\right]\leq \delta^{-1}2^{-n}.$$ Since, all $P_x$'s are independent, $$\Pr_{P\xleftarrow{\$}\mathcal{P}_m}\left[\forall x\in\{0,1\}^d,  \left|\bra{\psi_{k_0,x}}P_x\ket{\psi_{k_1,x}}\right|^2\geq\delta\right]\leq \left(\delta^{-1}2^{-n}\right)^{2^d}.$$ Using a union bound over all $k_0,k_1$, $$\Pr_{P\xleftarrow{\$}\mathcal{P}_m}\left[\exists k_0,k_1 : \forall x\in\{0,1\}^d,  \left|\bra{\psi_{k_0,x}}P_x\ket{\psi_{k_1,x}}\right|^2\geq\delta\right]\leq \delta^{-2^d}2^{2\secparam-m}.$$
\end{proof}

Let the transcript be $(P,M)$ where $P$ is chosen uniformly at random. Let $$p = \prob\left[ \mu \neq b^* \wedge \mu \neq \bot \ :\ \substack{ (\tau,\sigma_{C^*R})\xleftarrow {}\commit \langle C^*_{\secparam},R_{\secparam}\rangle,\\ \ \\ b^* \leftarrow {\cal E}(\tau),\\ \ \\ \mu \leftarrow \reveal\langle \tau,\sigma_{C^* R} \rangle  } \right]$$
Then
$$p = \Pr_{P\xleftarrow{\$}\mathcal{P}_m} \left[\exists k_0,k_1,b_0,b_1\ :\forall x\in\{0,1\}^d\ \substack{ \verify(P_x||k_0||x||b_0,M_x) = \valid,\\ \ \\ \verify(P_x||k_1||x||b_1,M_x) = \valid,\\ \ \\ b_0\neq b_1  } \right].$$
Without loss of generality we can assume $b_0 = 0$ and $b_1 = 1$, 
$$p = \Pr_{P\xleftarrow{\$}\mathcal{P}_m} \left[\exists k_0,k_1\ :\forall x\in\{0,1\}^d\ \substack{ \verify(P_x||k_0||x||0,M_x) = \valid,\\ \ \\ \verify(P_x||k_1||x||1,M_x) = \valid  } \right].$$

By Lemma~\ref{lem:diff_key_binding}, 
$$p\leq \Pr_{P\xleftarrow{\$}\mathcal{P}_m} \left[\exists k_0,k_1\ :\forall x\in\{0,1\}^d, \  Tr(P_x\ketbra{\psi_{k_1,x}}{\psi_{k_1,x}}P_x\ketbra{\psi_{k_0,x}}{\psi_{k_0,x}})\geq 542/729  \right]$$
    
By Fact~\ref{fact:almost_ortho}, 
$$p\leq \frac{729}{542}^{2^d} \left(2^{2\secparam-m}\right).$$
For $m\geq 3\secparam$, the protocol satisfies $O(2^{-0.5\secparam})$-statistical binding.

\end{proof}

\newcommand{\ct}{\mathsf{ct}}
\subsection{Encryption scheme}
\label{sec:encryption}
We construct a psuedo one-time pad scheme with classical communication from psuedorandom function-like quantum states. We first present the definition below. 
\begin{definition}[Psuedo One-Time Pad]
We say that a pair of QPT algorithms $(\enc,\dec)$ is a psuedo one-time pad if the following properties are satisfied: there exists a polynomial $M(\secparam)$ such that
\paragraph{Correctness:} There exists a negligible function $\varepsilon(\cdot)$ such that for every $\secparam$, every $x\in\{0,1\}^{M(\secparam)}$, $$\Pr_{\substack{k\xleftarrow{}\{0,1\}^\secparam\\  
    \ct \xleftarrow{}\enc_{\secparam}(k,x)}}\left[\dec_{\secparam}(k,\ct) = x\right]\geq 1-\varepsilon(\secparam).$$
    
\paragraph{Security:} For every $\secparam$, for every QPT adversary $A_{\secparam}$, there exists a negligible function $\varepsilon(\cdot)$ such that $x_1,x_2\in\{0,1\}^{M(\secparam)}$, $$\left|\Pr_{\substack{k\xleftarrow{}\{0,1\}^\secparam\\  
    \ct \xleftarrow{}\enc_{\secparam}(k,x_1)}}\left[A_{\secparam}(\ct) = 1\right] - \Pr_{\substack{k\xleftarrow{}\{0,1\}^\secparam\\  
    \ct \xleftarrow{}\enc_{\secparam}(k,x_2)}}\left[A_{\secparam}(\ct) = 1\right] \right|\leq \varepsilon(\secparam).$$

\end{definition}
\subsubsection{Construction}
Towards constructing an encryption scheme with classical communication, we use the verifiable tomography $(\tomography,\verify)$ described in~\Cref{fig:secondinstantiation} satisfying $L$-same-input correctness and $(L,\Pi,\distr)$-distributional different-input correctness. Let the set of channels associated with $(\tomography,\verify)$ be ${\cal C}=\{\Phi_{\secparam}\ :\ \secparam 
\in \mathbb{N}\}$. Recall that ${\cal C}$ is associated with $(d,n)$-PRFS. Refer to~\Cref{fig:secondinstantiation} for the description of $L,\Pi,\distr$ and $\Phi_{\secparam}$.

\paragraph{Construction.} We present the construction in Figure~\ref{fig:encryption}.
\begin{figure}[!htb]
   \begin{tabular}{|p{16cm}|}
   \hline \\
\par \textbf{$\enc_{\secparam}(k,x)$:} 
\begin{itemize}
   \item Parse $x=x_1 \cdots x_M$. 
    \item For $i\in[M]$, generate $L(\secparam)$ copies of $\sigma_{i} \leftarrow \left(\Phi_\secparam\left((k,i,x_i)\right)\right)$.
    \item Generate $u_i \xleftarrow{} \tomography\left(\sigma_{i}^{\otimes L(\secparam)}\right).$
    \item Output the ciphertext  $\ct=\left\{u_i\right\}_{i\in[M]}$.
\end{itemize}
\par \textbf{$\dec_{\secparam}(k,\ct)$:} 
\begin{itemize}
    \item Parse $\ct$ as $\left\{u_i\right\}_{i\in[M]}$.
    \item For $i\in [M]$, run $\left(\verify(k||i||0,u_i)\right)$. If $\valid$, set $x_i = 0$, else set $x_i = 1$.
    \item Output $x = x_1 \cdots x_M$.
\end{itemize}
\\ 
\hline
\end{tabular}
\caption{Encryption scheme}
\label{fig:encryption}
\end{figure}

\begin{theorem}
$(\enc_{\secparam},\dec_{\secparam})$ satisfies the correctness property of a psuedo one-time pad.
\end{theorem}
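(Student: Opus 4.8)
The plan is to reduce correctness of the pseudo one-time pad to the $L$-same-input correctness and the $(L,\Pi,\distr_i)$-distributional different-input correctness of the verifiable tomography scheme from \Cref{fig:secondinstantiation}, together with a union bound over the $M(\secparam)$ coordinates. First I would fix an arbitrary message $x = x_1 \cdots x_M \in \{0,1\}^{M(\secparam)}$ and a uniformly random key $k \leftarrow \{0,1\}^\secparam$, run $\enc_\secparam(k,x)$ to obtain $\ct = \{u_i\}_{i \in [M]}$ where $u_i \leftarrow \tomography((\Phi_\secparam((k,i,x_i)))^{\otimes L(\secparam)})$, and then trace through what $\dec_\secparam(k,\ct)$ does on coordinate $i$: it computes $\verify(k\|i\|0, u_i)$ and outputs $x_i' = 0$ if $\valid$ and $x_i' = 1$ otherwise. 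I want to argue $\Pr[x_i' = x_i] \geq 1 - \negl(\secparam)$ for each $i$, and then the probability that $\dec$ recovers all of $x$ is at least $1 - M(\secparam)\cdot\negl(\secparam) = 1 - \negl(\secparam)$, which gives the required negligible error $\varepsilon(\cdot)$.

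The per-coordinate argument splits into the two cases $x_i = 0$ and $x_i = 1$. If $x_i = 0$, then $u_i = \tomography((\Phi_\secparam((k,i,0)))^{\otimes L})$, and I invoke $L$-same-input correctness (\Cref{lem:samekey:second}) with $\bfx = (k,i,0)$: it guarantees $\verify((k,i,0), u_i) = \valid$ with probability $1 - \negl(\secparam)$, so $\dec$ sets $x_i' = 0 = x_i$. If $x_i = 1$, then $u_i = \tomography((\Phi_\secparam((k,i,1)))^{\otimes L})$, and $\dec$ runs $\verify((k,i,0), u_i)$ — this is exactly the different-input situation with $\bfx = (k,i,1)$, $\bfx' = (k,i,0)$, and indeed $\Pi((k,i,1),(k,i,0)) = 0$ since the keys and index agree and the bits differ. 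Here I use the $(L,\Pi,\distr_i)$-distributional different-input correctness (\Cref{lem:diffkey:second}), noting that $k$ is sampled uniformly at random and $\distr_i$ is precisely the distribution that samples $k \leftarrow \{0,1\}^\secparam$ and outputs $((k,i,0),(k,i,1))$; thus $\verify((k,i,0), u_i) = \invalid$ with probability $1 - \negl(\secparam)$, so $\dec$ sets $x_i' = 1 = x_i$.

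There is one subtlety I should be careful about: \Cref{lem:diffkey:second} gives the guarantee in a distributional sense — the randomness is over $k \leftarrow \distr_i$ as well as the internal randomness of $\tomography$ and $\verify$ — whereas correctness of the encryption scheme should hold for every fixed message $x$ but averaged over the key. Since the key in $\enc$ is itself sampled uniformly, the distributional guarantee applies directly for each coordinate $i$ once we observe that marginally the key seen at coordinate $i$ is uniform; the only care needed is that the events "$x_i'\neq x_i$" for different $i$ all fail with the same key $k$, so I should apply the union bound over $i \in [M]$ after fixing the randomness appropriately, i.e. bound $\Pr_k[\exists i: x_i' \neq x_i] \leq \sum_i \Pr_k[x_i' \neq x_i] \leq M \cdot \negl(\secparam)$. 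The main (minor) obstacle is just bookkeeping the various sources of randomness — the key, the tomography procedure's internal coins invoked once inside $\enc$ and once inside $\verify$ — and making sure the lemma invocations are applied to the correct distributions; there is no deep technical difficulty here, as all the work has been front-loaded into establishing same-input and different-input correctness of the second instantiation of verifiable tomography.
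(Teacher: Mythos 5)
Your proposal is correct and follows essentially the same route as the paper's proof: per coordinate, invoke $L$-same-input correctness (\Cref{lem:samekey:second}) when $x_i=0$ and distributional different-input correctness (\Cref{lem:diffkey:second}) when $x_i=1$, then union-bound over the $M(\secparam)$ coordinates. Your extra care about the distributional (over $k$) nature of the different-input guarantee is sound — the underlying claim in \Cref{lem:diffkey:second} is proven for every $k\in\mathcal{K}_{\sf good}$ and both orderings of the bit, so the reversed pair $(\bfx,\bfx')=((k,i,1),(k,i,0))$ you need is covered.
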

\begin{proof}
Let $\ct \leftarrow \enc_{\secparam}(k,x)$. Parse $\ct$ as $\{u_i\}_{i \in [l]}$.
\par For every $i \in [M]$, the following holds:
\begin{itemize}
    \item From~\Cref{lem:samekey:second}, $\verify \left( k||i||x_i,u_i \right)$ outputs $\valid$ with probability $1-\negl(\secparam)$. 
    \item From~\Cref{lem:diffkey:second}, $\verify \left( k||i||(1-x_i),u_i \right)$ outputs $\valid$ with probability at most $\negl(\secparam)$.
\end{itemize}
Thus, for any given $i$, the decryption algorithm can correctly determine $x_i$ with probability at least $1-\negl(\secparam)$. By union bound, it then follows that the probability that the decryption algorithm correctly determines all bits of $x$ is negligibly close to 1. 
\end{proof}

\begin{theorem}
$(\enc_{\secparam},\dec_{\secparam})$ satisfies the  security property of a pseudo one-time pad. 
\end{theorem}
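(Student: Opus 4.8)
The plan is to prove security via a short hybrid argument that strips all dependence of the ciphertext on the plaintext: we replace each PRFS evaluation used by $\enc$ with a fresh Haar-random state using the pseudorandomness of $\{G_\secparam\}$, and then observe that a ciphertext assembled from Haar-random states is distributed independently of the message. Concretely, fix $\secparam$ and two messages $x = x_1 \cdots x_M$ and $x' = x'_1 \cdots x'_M$. Let $\hybrid_0$ be the real experiment $\ct \leftarrow \enc_\secparam(k,x)$ with $k \leftarrow \{0,1\}^\secparam$, so that $u_i \leftarrow \tomography\big((\Phi_\secparam((k,i,x_i)))^{\otimes L}\big) = \tomography\big(G_\secparam(k, i\|x_i)^{\otimes L}\big)$ and $\ct = \{u_i\}_{i \in [M]}$. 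Let $\hybrid_1$ be the same experiment except that for every $i \in [M]$ the state $G_\secparam(k, i\|x_i)$ is replaced by an independent Haar-random $n$-qubit state $\ket{\vartheta_i}$, i.e. $u_i \leftarrow \tomography\big(\ketbra{\vartheta_i}{\vartheta_i}^{\otimes L}\big)$. Define $\hybrid_0'$ and $\hybrid_1'$ identically but with $x$ replaced by $x'$.

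Two observations finish the proof. First, the output distribution of $\hybrid_1$ does not reference $x$ at all — it merely samples $M$ fresh Haar-random states and tomographs $L$ copies of each — so $\hybrid_1$ and $\hybrid_1'$ are \emph{identically distributed}. Second, $\hybrid_0$ and $\hybrid_1$ are computationally indistinguishable (and likewise $\hybrid_0'$ and $\hybrid_1'$): the set of PRFS inputs queried when encrypting the fixed message $x$ is the fixed family $\{\, i\|x_i : i \in [M]\,\}$ of $M = \poly(\secparam)$ distinct strings in $\{0,1\}^{d+1}$, so this is precisely an instance of selective PRFS security (\Cref{def:prfs}) with $s = M$ inputs and $t = L(\secparam) = \poly(\secparam)$ copies per input. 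Given a QPT distinguisher $A_\secparam$ for $\hybrid_0$ versus $\hybrid_1$, we build a PRFS distinguisher $B$ that requests $L$ copies of each $G_\secparam(k, i\|x_i)$ (resp. of each $\ket{\vartheta_i}$), runs $\tomography = {\cal T}_{\sf imp}$ on each block to get $u_i$, forms $\ct = \{u_i\}_i$, runs $A_\secparam(\ct)$, and outputs its bit. Since $N = 2^n = \poly(\secparam)$ (as $n = O(\log\secparam)$) and $L = \poly(\secparam)$, \Cref{cor:tomography} guarantees $\tomography$ runs in $\poly(\secparam)$ time, so $B$ is QPT with the same advantage as $A_\secparam$, contradicting \Cref{def:prfs}. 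Chaining $\hybrid_0 \approx_c \hybrid_1 \equiv \hybrid_1' \approx_c \hybrid_0'$ yields the security statement.

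I expect this argument to be essentially routine rather than to contain a genuine obstacle; the only points needing care are bookkeeping ones — that the total number of consumed copies $M \cdot L$ is polynomial, that $\tomography$ is efficient on $O(\log\secparam)$-qubit states, and that the recognizable-abort PRFS of \Cref{fig:secondinstantiation} still satisfies the (selective) pseudorandomness of \Cref{def:prfs}. In particular, neither $\verify$ nor any correctness property of the verifiable tomography scheme is needed for security: hiding rests solely on the fact that applying an \emph{efficient} tomography procedure to PRFS states cannot distinguish them from Haar-random states, even across $\poly(\secparam)$ copies. (If one wanted security against messages chosen adaptively rather than the fixed $x_1,x_2$ of the definition, one would substitute the log-input adaptive PRFS of \Cref{sec:adaptive-security}, but this is not required here.)
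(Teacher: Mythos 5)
Your proof is correct and follows essentially the same route as the paper: replace each PRFS state $G_\secparam(k,i\|x_i)$ with an independent Haar-random state via (selective) PRFS security, observe the resulting hybrid is message-independent, and chain the hybrids. Your writeup is in fact somewhat more careful than the paper's — spelling out the reduction, why selective security suffices for the fixed query set $\{i\|x_i\}$, and why tomography keeps the reduction efficient — but there is no substantive difference in approach.
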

\begin{proof}
We prove the security via a hybrid argument. Fix $\secparam$ and the messages $x_0,x_1\in\{0,1\}^{M}$. Consider a QPT adversary $A_{\secparam}$. 

\paragraph{Hybrid $H_{1,b}$, for $b\in\{0,1\}$.} Sample $k\xleftarrow{}\{0,1\}^{\secparam}$. Compute $\ct \xleftarrow{}\enc_{\secparam}(k,x_b)$. Output $\ct$. 

\paragraph{Hybrid $H_2$.} Sample $L(\secparam)$ copies of $n$-qubit Haar-random states $\ket{\vartheta_1},...,\ket{\vartheta_l}\xleftarrow{}\mathscr{H}_n$. For every $i \in [M]$, compute $u_i \leftarrow \tomography\left(\ket{\vartheta_1}^{\otimes L}\right)$. The output of this hybrid is $(u_1,\ldots,u_l)$.
\par The hybrids $H_{1,b}$, for $b \in \{0,1\}$, and $H_{2}$ are computationally indistinguishable from the security of PRS. 
\end{proof}

\section*{Acknowledgements}

The authors would like to thank the anonymous TCC 2022 reviewers for their helpful comments. 
The authors would also like to thank Fermi Ma for his suggestions that improved the bounds and the analysis in Section~\ref{sec:simpleranalysis:prs}. 

\ifllncs
  \bibliographystyle{splncs04}
  \bibliography{tcs}
\else
  \printbibliography
\fi

\newpage 

\appendix


\section{Small-Range Distributions over Unitary Operators}
\label{sec:quantum-small-range}

Let $\mathcal U$ be a distribution over unitary operators over a finite Hilbert space, $r$ be a positive integer and $\mathcal X$ be a finite set.
(Looking ahead, we are going to sample a unitary from a distribution and potentially invoke this unitary multiple times. This is not a channel as all invocations are going to use the same unitary.)
We define $\mathcal U^{\mathcal X} := \bigoplus_{x \in \mathcal X} U_x$ to be the unitary that maps $\ket x \otimes \ket y \to \ket x \otimes U_x \ket y$, with each $U_x$ being an independent sample from $\mathcal U$.
We define small-range distributions $\SR_r^{\mathcal U}(\mathcal X)$ sampled as follows:
\begin{itemize}
  \item For each $i \in [r]$, sample a $U_i$ from $\mathcal U$.
  \item For each $x \in \mathcal X$, sample a random $i_x \in [r]$, so that the unitary maps $\ket x \otimes \ket y \to \ket x \otimes U_{i_x} \ket y$ for any state $\ket y$.
\end{itemize}

\begin{theorem}[{\cite[Corollary VII.5]{Zhandry12}}]
  \label{thm:bitstring-small-range}
  Let $D$ be a distribution over bit strings \pnote{of what length?} \luowen{Does not matter}, and let $\mathcal U$ be the unitary distribution for random oracles \pnote{what does unitary distribution for random oracles mean?} \luowen{keep reading, in particular, read immediately after this}, i.e.\,it corresponds to $D$ where for each bit string $m$, we associate a unitary that maps $\ket y \to \ket{y \oplus m}$ in the computational basis.
  The output distributions of a quantum algorithm making $q$ quantum queries to an oracle either drawn from $\SR_r^D(\mathcal X)$ from $D^{\mathcal X}$ are $\ell q^3/r$-close, where $\ell = 8\pi^2/3 < 27$.
\end{theorem}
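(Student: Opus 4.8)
The plan is to obtain this statement as a direct translation of Zhandry's original small-range distribution theorem for classical functions into the unitary-oracle language used here. First I would fix $m$ so that $D$ is supported on $\{0,1\}^m$ and spell out the dictionary between the two models. By construction, a sample from $D^{\mathcal X}$ is the unitary $\ket x \otimes \ket y \mapsto \ket x \otimes \ket{y \oplus f(x)}$ where the value table $f\colon \mathcal X \to \{0,1\}^m$ is drawn i.i.d.\ from $D$; this is exactly the standard quantum oracle $O_f$ for $f$. Likewise, a sample from $\SR_r^D(\mathcal X)$ is $O_f$ for the small-range function obtained by drawing $y_1,\dots,y_r \leftarrow D$ i.i.d., drawing $i_x \in [r]$ uniformly and independently for each $x \in \mathcal X$, and setting $f(x) = y_{i_x}$. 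Under this correspondence, a quantum algorithm making $q$ quantum queries to the unitary oracle is literally a $q$-query quantum algorithm in Zhandry's function-oracle model, and ``$D^{\mathcal X}$ versus $\SR_r^D(\mathcal X)$'' is exactly ``random i.i.d.\ value table versus small-range value table.''

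Then I would invoke Zhandry's small-range distribution theorem \cite[Corollary VII.5]{Zhandry12}, which bounds the trace distance between the output states of any $q$-query algorithm in these two worlds by $\ell q^3 / r$ with $\ell = 8\pi^2/3 < 27$. This is precisely the claimed bound, so the statement follows. The one thing worth checking is that the result there is stated for an arbitrary output distribution $D$, not just the uniform one; this is the case, since the argument never uses anything about $D$ beyond the fact that the two worlds sample the $y_i$'s from the same distribution.

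Finally, for a self-contained treatment I would reprove the underlying bound along Zhandry's lines: express the acceptance probability of the $q$-query algorithm as a multilinear expression of degree at most $2q$ in the bucket-assignment indicators (the polynomial method for quantum query algorithms), observe that the marginal distribution of $f$ on any fixed set of at most $2q$ inputs is identical under $D^{\mathcal X}$ and $\SR_r^D(\mathcal X)$ conditioned on those inputs landing in distinct buckets, and run a hybrid over the $q$ query steps to upgrade a per-step collision probability of $O(q^2/r)$ into the overall $O(q^3/r)$ bound. The genuinely delicate point — and the reason I would rather cite than re-derive — is tracking the constant: the $\pi^2$ in $\ell$ comes from Zhandry's refined amplitude-based accounting (ultimately a $\sum_k 1/k^2$), which is sharper than the crude union bound underlying the sketch above. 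For the purposes of this paper only the cited bound is needed, and it will serve as the base case for the analogous statement over general unitary distributions $\mathcal U$ established next.
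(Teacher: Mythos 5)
Your proposal matches the paper's treatment: the statement is imported verbatim from Zhandry's Corollary VII.5, and the paper offers no proof beyond the citation, relying (as you do) on the observation that $D^{\mathcal X}$ and $\SR_r^D(\mathcal X)$ are exactly the standard XOR oracles for an i.i.d.\ value table and a small-range value table, respectively. Your dictionary between the two models is correct, and the additional polynomial-method sketch, while not needed here, accurately describes where the bound comes from.
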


The goal of this section is to generalize this theorem to the setting where $D$ is a (possibly infinite-support) distribution of arbitrary quantum unitary oracles instead of a distribution over (finite-length) bit strings.

Generalizing the proof of Zhandry's theorem to this setting seems difficult.
For example, the proof uses the fact that the query oracle operates in the computational basis, which a generic unitary might not satisfy.
Therefore, we instead are going to use a downsampling trick, which allows us to invoke Zhandry's theorem as a black box.
Furthermore, we are going to crucially rely on the fact that the bound given by Zhandry's theorem is independent of the size of $D$ or $\mathcal X$.

We invoke some inequalities about the diamond norm.
The diamond norm intuitively captures the best distinguishing advantage from a single invocation of one of two channels.
We refer the readers to the work of Aharonov et al.\,\cite{AharonovKN98} for the formal definitions.

\begin{lemma}[{\cite[lemma 12.6]{AharonovKN98}}]
  \label{lem:diamond2op}
  Let $V, W$ be operators.
  If $\operatornorm{V}, \operatornorm{W} \le 1$, then \[\diamondnorm{V (\cdot) V^\dagger - W (\cdot) W^\dagger} \le 2 \operatornorm{V - W}.\]
\end{lemma}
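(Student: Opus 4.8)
The plan is to prove this bound by decomposing the difference of the two maps and combining the triangle inequality for $\diamondnorm{\cdot}$ with a standard submultiplicativity estimate for ``sandwich'' maps. Write $\Delta := V(\cdot)V^\dagger - W(\cdot)W^\dagger$. First I would use the algebraic identity, valid for every operator $X$,
\[
    V X V^\dagger - W X W^\dagger = V X (V - W)^\dagger + (V - W) X W^\dagger,
\]
which expresses $\Delta = V(\cdot)(V-W)^\dagger + (V-W)(\cdot)W^\dagger$. Since $\diamondnorm{\cdot}$ is a norm, it then suffices to bound each of the two sandwich maps $A(\cdot)B^\dagger$ appearing here and add.

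The key lemma I would isolate is: for arbitrary operators $A,B$, the map $\Lambda_{A,B}(\cdot) := A(\cdot)B^\dagger$ satisfies $\diamondnorm{\Lambda_{A,B}} \le \operatornorm{A}\operatornorm{B}$. To prove it, fix any ancilla space and any density matrix $\rho$ on the joint space, let $\sqrt\rho$ denote its positive square root, and apply the Cauchy--Schwarz inequality for the Hilbert--Schmidt (Schatten-$2$) norm:
\[
    \norm{(A\otimes I)\,\rho\,(B\otimes I)^\dagger}_1 = \norm{\big((A\otimes I)\sqrt\rho\big)\big(\sqrt\rho\,(B\otimes I)^\dagger\big)}_1 \le \norm{(A\otimes I)\sqrt\rho}_2 \cdot \norm{(B\otimes I)\sqrt\rho}_2.
\]
Each factor is controlled by the operator norm, since $\norm{(A\otimes I)\sqrt\rho}_2^2 = \Tr\big((A^\dagger A\otimes I)\rho\big) \le \operatornorm{A}^2\,\Tr(\rho) = \operatornorm{A}^2$, and likewise for $B$. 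Taking the supremum over $\rho$ and over the ancilla dimension yields $\diamondnorm{\Lambda_{A,B}}\le\operatornorm{A}\operatornorm{B}$.

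Putting the pieces together, $\diamondnorm{\Delta}\le \operatornorm{V}\operatornorm{V-W} + \operatornorm{V-W}\operatornorm{W}\le 2\operatornorm{V-W}$, where the last step uses the hypothesis $\operatornorm{V},\operatornorm{W}\le 1$. I do not anticipate a genuine obstacle here: the only point worth flagging is that $V,W$ need not be unitary, so $V(\cdot)V^\dagger$ is merely trace-non-increasing rather than a channel; but the argument never invokes trace preservation---it uses only $\Tr(\rho)=1$ together with the operator-norm bound---so it goes through verbatim. The real content is the single Cauchy--Schwarz step for Schatten norms, which is routine.
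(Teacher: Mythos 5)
Your proof is correct. The paper itself does not prove this lemma—it imports it verbatim from \cite{AharonovKN98}—and your argument (the decomposition $VXV^\dagger - WXW^\dagger = VX(V-W)^\dagger + (V-W)XW^\dagger$ followed by the bound $\diamondnorm{A(\cdot)B^\dagger}\le\operatornorm{A}\operatornorm{B}$) is essentially the standard proof given there, with your Cauchy--Schwarz step for Schatten-$2$ norms playing the role of the H\"older-type inequality $\norm{AXB}_1\le\operatornorm{A}\norm{X}_1\operatornorm{B}$; the latter also disposes of the one minor point you glossed over, namely that the diamond-norm supremum ranges over all $X$ with $\norm{X}_1\le 1$ rather than only density matrices (alternatively, note that the target map is a difference of completely positive maps, so its diamond norm is attained on a pure state).
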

\begin{lemma}[{\cite[lemma 13]{AharonovKN98}}]
  \label{lem:diamond-concat}
  Let $T_1, T_2, T_1', T_2'$ be super-operators with norm $\le 1$, such that $\diamondnorm{T_j' - T_j} \le \varepsilon_j$ for $j = 1, 2$.
  Then $\diamondnorm{T_2'T_1' - T_2T_1} \le \varepsilon_1 + \varepsilon_2$.
\end{lemma}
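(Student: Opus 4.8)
The plan is to reduce this to two standard structural facts about the diamond norm: that $\diamondnorm{\cdot}$ obeys the triangle inequality (it is a norm on the whole space of super-operators, not just on channels) and that it is submultiplicative under composition, $\diamondnorm{ST} \le \diamondnorm{S}\,\diamondnorm{T}$. The latter is exactly the reason one works with the completely bounded norm: writing $\diamondnorm{\Phi} = \sup_d \norm{\Phi \otimes \mathrm{id}_d}_{1 \to 1}$, submultiplicativity of $\norm{\cdot}_{1\to1}$ on each stabilized copy gives it for $\diamondnorm{\cdot}$. Granting these, the lemma is a one-line telescoping argument.

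First I would split the difference so that each term isolates a single approximation:
\[
  T_2'T_1' - T_2T_1 \;=\; T_2'\bigl(T_1' - T_1\bigr) \;+\; \bigl(T_2' - T_2\bigr)T_1 .
\]
Applying the triangle inequality for $\diamondnorm{\cdot}$ yields
\[
  \diamondnorm{T_2'T_1' - T_2T_1} \;\le\; \diamondnorm{T_2'(T_1' - T_1)} \;+\; \diamondnorm{(T_2' - T_2)T_1} .
\]
Then I would bound each summand by submultiplicativity together with the hypothesis $\diamondnorm{T_2'}, \diamondnorm{T_1} \le 1$:
\[
  \diamondnorm{T_2'(T_1' - T_1)} \le \diamondnorm{T_2'}\,\diamondnorm{T_1' - T_1} \le \varepsilon_1,
  \qquad
  \diamondnorm{(T_2' - T_2)T_1} \le \diamondnorm{T_2' - T_2}\,\diamondnorm{T_1} \le \varepsilon_2 .
\]
Adding the two bounds gives $\diamondnorm{T_2'T_1' - T_2T_1} \le \varepsilon_1 + \varepsilon_2$, as claimed.

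The one point requiring care — and the only thing that is not purely formal — is that $T_j' - T_j$ is generally not a channel but merely a (trace-annihilating) super-operator, so the triangle inequality and submultiplicativity must be invoked on the full operator space rather than on the set of CPTP maps; this is precisely the content of the cited statement of Aharonov, Kitaev, and Nisan, so in the write-up I would simply cite it and present the two displays above as the complete proof. There is no genuine obstacle here: the lemma is an immediate consequence of the norm and composition properties of $\diamondnorm{\cdot}$, and it is stated here only because it is used as a black box in the hybrid arguments of \Cref{sec:quantum-small-range}.
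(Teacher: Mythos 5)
Your proof is correct: the telescoping decomposition $T_2'T_1' - T_2T_1 = T_2'(T_1'-T_1) + (T_2'-T_2)T_1$ followed by the triangle inequality and submultiplicativity of $\diamondnorm{\cdot}$ is exactly the standard argument, and it is the one given in Aharonov--Kitaev--Nisan, which the paper simply cites as a black box without reproving. Your caveat that these norm properties must hold for general (not necessarily CPTP) super-operators is the right point to flag, and it is indeed satisfied by the completely bounded trace norm.
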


\begin{lemma}[Finite-support downsampling lemma]
  \label{lem:downsampling-nonuniform}
  For any distribution $\mathcal U$ over unitary operators over a finite Hilbert space and any $\varepsilon > 0$, there exists a distribution $\mathcal F$ \pnote{over what? set of unitaries?} with finite support such that any $q$-query (otherwise unbounded) quantum algorithm cannot distinguish a unitary from either distribution with advantage more than $\varepsilon q$.
\end{lemma}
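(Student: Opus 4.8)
The plan is to realize $\mathcal{F}$ as the pushforward of $\mathcal{U}$ under a nearest-point map onto a finite net of the unitary group, and then bound the distinguishing advantage by a hybrid argument over the $q$ query positions using the diamond-norm inequalities from \Cref{lem:diamond2op} and \Cref{lem:diamond-concat}. The only topological input is that, since $\mathcal{H}$ is finite-dimensional, the unitary group $U(\mathcal{H})$ is compact in the operator norm, hence totally bounded: for every $\delta > 0$ there is a finite set $\{V_1,\dots,V_m\} \subseteq U(\mathcal{H})$ such that every $U$ satisfies $\operatornorm{U - V_i} \le \delta$ for some $i$. Fix such a net and a measurable nearest-point map $\pi : U(\mathcal{H}) \to \{V_1,\dots,V_m\}$ (partition $U(\mathcal{H})$ into measurable cells $C_1,\dots,C_m$ with $U \in C_i \Rightarrow \operatornorm{U - V_i} \le \delta$, and set $\pi|_{C_i} \equiv V_i$). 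Define $\mathcal{F}$ to be the law of $\pi(U)$ when $U \sim \mathcal{U}$; it is supported on at most $m$ points, so it has finite support.

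For the core estimate, fix a $q$-query algorithm and model its interaction with an oracle unitary $W$ (acting as $W \otimes I$ on the oracle register tensored with the rest) as the channel $\Lambda(W) = \mathcal{A}_q \circ \mathcal{W} \circ \mathcal{A}_{q-1} \circ \cdots \circ \mathcal{W} \circ \mathcal{A}_0$, where $\mathcal{W}(\cdot) = (W\otimes I)(\cdot)(W\otimes I)^\dagger$ and the $\mathcal{A}_i$ are the algorithm's own fixed channels (the last one absorbing the final measurement). For a fixed $U$ put $V = \pi(U)$, so that $\operatornorm{(U\otimes I) - (V\otimes I)} = \operatornorm{U-V} \le \delta$. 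Replacing $U$ by $V$ one query at a time, \Cref{lem:diamond2op} bounds the cost of each single replacement by $2\delta$ in diamond norm, and \Cref{lem:diamond-concat} (pre- and post-composition with fixed norm-$\le 1$ channels does not increase diamond distance, applied $q$ times) assembles these into $\diamondnorm{\Lambda(U) - \Lambda(\pi(U))} \le 2q\delta$ for every $U$. Since the real oracle channel is $\mathbb{E}_{U \sim \mathcal{U}}[\Lambda(U)]$ while, because $\mathcal{F} = \pi_\ast \mathcal{U}$, the ideal oracle channel is $\mathbb{E}_{U \sim \mathcal{U}}[\Lambda(\pi(U))]$, convexity of the diamond norm gives
\[
  \diamondnorm{ \mathbb{E}_{U}[\Lambda(U)] - \mathbb{E}_{U}[\Lambda(\pi(U))] } \le \mathbb{E}_{U}\, \diamondnorm{\Lambda(U) - \Lambda(\pi(U))} \le 2q\delta .
\]
Choosing $\delta$ small enough (e.g. $\delta = \varepsilon/2$, recalling that the optimal single-shot distinguishing advantage is half the diamond-norm distance) makes the advantage at most $\varepsilon q$, as required.

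The delicate points are bookkeeping rather than conceptual. First, one must check that the nearest-point map $\pi$ can be chosen measurable so that $\pi_\ast\mathcal{U}$ is a genuine distribution and the identity $\mathbb{E}_{U\sim\mathcal{U}}[\Lambda(\pi(U))] = \mathbb{E}_{V\sim\mathcal{F}}[\Lambda(V)]$ holds even when $\mathcal{U}$ has infinite (e.g.\ continuous) support — this is what makes the coupling legitimate. Second, one needs the interchange $\diamondnorm{\mathbb{E}_U[\,\cdot\,]} \le \mathbb{E}_U \diamondnorm{\,\cdot\,}$, which is Jensen's inequality together with the fact that a mixture of channels is a channel. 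Third, the hybrid has to isolate exactly one application of the oracle so that \Cref{lem:diamond2op} and \Cref{lem:diamond-concat} apply verbatim, using that tensoring with the identity on the non-oracle registers leaves the operator norm unchanged. None of these require new ideas, and crucially the final bound $2q\delta$ is independent of the Hilbert-space dimension, of $m$, and of the algorithm's (otherwise unbounded) complexity, which is exactly what the intended application needs.
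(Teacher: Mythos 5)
Your proposal is correct and follows essentially the same route as the paper's proof: an $\varepsilon/2$-net of the (compact) unitary group, pushing $\mathcal U$ forward via the nearest-point map to get the finite-support $\mathcal F$, a per-query hybrid controlled by \Cref{lem:diamond2op} and \Cref{lem:diamond-concat}, and averaging over $U \sim \mathcal U$. The only difference is that you spell out the measurability and Jensen-type bookkeeping that the paper leaves implicit.
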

\begin{proof}
  Let $\mathcal N$ be the epsilon-net for the unitary group with $|\mathcal N| < \infty$ such that for any unitary $U$, there exists an approximation $U' \in \mathcal N$ such that $\operatornorm{U - U'} \le \frac\varepsilon{2}$.
  We construct $\mathcal F$ by mapping every unitary from $\mathcal U$ to its approximation in $\mathcal N$.
  It follows from \Cref{lem:diamond2op,lem:diamond-concat} that for any fixed $U, U'$, any $q$-query quantum algorithm's distinguishing advantage is at most $\varepsilon q$.
  Therefore the lemma follows by averaging over sampling from $\mathcal U$.
\end{proof}

\begin{lemma}[Uniform downsampling lemma]
  \label{lem:downsampling}
  For any distribution $\mathcal U$ over unitary operators over a finite Hilbert space and any $\varepsilon > 0$, there exists an integer $n > 0$ and a family of unitaries $\mathcal T = (U_1, ..., U_{2^n})$ such that any $q$-query quantum algorithm cannot distinguish a unitary from a uniformly sampled $U_i$ from $\mathcal T$, or $\mathcal U$ with advantage more than $\varepsilon q$.
\end{lemma}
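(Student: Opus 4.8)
\textbf{Proof plan for the Uniform downsampling lemma (\Cref{lem:downsampling}).}

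The plan is to deduce this from the Finite-support downsampling lemma (\Cref{lem:downsampling-nonuniform}) by a simple two-step argument: first pass to a finite-support distribution, and then ``round'' the probabilities of that finite distribution to be uniform over a multiset. First I would apply \Cref{lem:downsampling-nonuniform} with parameter $\varepsilon/2$ to obtain a distribution $\mathcal F$ with finite support $\{W_1,\dots,W_m\}$, with associated probabilities $p_1,\dots,p_m$, such that no $q$-query algorithm distinguishes $\mathcal U$ from $\mathcal F$ with advantage more than $(\varepsilon/2)q$. Since $\mathcal F$ has finite support, it now suffices to approximate $\mathcal F$ itself by a uniform distribution over a suitable multiset $\mathcal T = (U_1,\dots,U_{2^n})$, losing at most another $(\varepsilon/2)q$ in distinguishing advantage; the triangle inequality over the $q$ queries (formally, via $q$ applications of \Cref{lem:diamond-concat} in exactly the same way as in the proof of \Cref{lem:downsampling-nonuniform}) then gives the claimed bound $\varepsilon q$.

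For the rounding step I would choose $n$ large enough that $2^n \ge 2m/\varepsilon$ (in particular a power of two at least this large), and then pick nonnegative integers $c_1,\dots,c_m$ with $\sum_i c_i = 2^n$ and $|c_i/2^n - p_i| \le 1/2^n$ for every $i$; such a choice exists by a standard rounding argument (e.g.\ take $c_i = \lfloor p_i 2^n\rfloor$ and distribute the remaining $2^n - \sum_i \lfloor p_i 2^n \rfloor \le m$ units arbitrarily among the $c_i$). Let $\mathcal T$ be the multiset containing $W_i$ with multiplicity $c_i$, so that sampling a uniformly random $U_j$ from $\mathcal T$ yields $W_i$ with probability $c_i/2^n$. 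The total variation distance between this distribution and $\mathcal F$ is $\frac12 \sum_i |c_i/2^n - p_i| \le \frac12 \cdot m/2^n \le \varepsilon/4 \le \varepsilon/2$. A single query to an oracle drawn from one distribution versus the other can therefore be simulated up to statistical distance $\varepsilon/2$ — and more carefully, for a $q$-query algorithm we can couple the two sampling processes so that the oracle is identical except with probability $\varepsilon/2$, whence the distinguishing advantage over all $q$ queries is at most $\varepsilon/2 \le (\varepsilon/2)q$ (for $q \ge 1$; the case $q=0$ is trivial).

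Combining, any $q$-query algorithm distinguishes a uniformly sampled $U_i$ from $\mathcal T$ from a sample from $\mathcal U$ with advantage at most $(\varepsilon/2)q + (\varepsilon/2)q = \varepsilon q$, as desired. I do not expect a genuine obstacle here; the only mild subtlety is making sure the ``rounding'' argument is phrased in terms of a coupling of the random-oracle distributions (so that it composes cleanly across $q$ queries) rather than a per-query statistical-distance bound, but this is exactly parallel to how \Cref{lem:downsampling-nonuniform} chains its per-query estimate via \Cref{lem:diamond-concat}, so the same bookkeeping applies verbatim.
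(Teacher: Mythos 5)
Your proposal is correct and follows essentially the same route as the paper: both first invoke \Cref{lem:downsampling-nonuniform} with parameter $\varepsilon/2$ and then approximate the resulting finite-support distribution $\mathcal F$ in total variation distance by a uniform distribution over a multiset of size $2^n$, concluding with the triangle inequality. The only difference is in how that multiset is produced — you round the probabilities $p_i$ explicitly to multiples of $2^{-n}$, whereas the paper takes $2^n$ i.i.d.\ samples from $\mathcal F$ and appeals to the law of large numbers; your explicit rounding is, if anything, cleaner and makes the dependence of $n$ on the support size and $\varepsilon$ transparent.
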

\begin{proof}
  Let $\mathcal F$ be the distribution guaranteed by \Cref{lem:downsampling-nonuniform} with distinguishing advantage $\varepsilon q/2$.
  Consider a random family $\mathcal T$ where each entry is an independent sample from $\mathcal F$.
  By the law of large numbers, the expected total variation distance to $\mathcal F$ goes to 0 as $n$ goes to infinity, which implies the existence of a sequence of families whose distance is at most $\varepsilon/2 \le \varepsilon q/2$.
  Therefore, by triangular inequality, the overall distinguishing advantage is at most $\varepsilon q$. \pnote{TVD from what to ${\cal F}$? can you write mathematical expressions for these statements?}
\end{proof}

This lemma shows that we can downsamples an arbitrary distribution to a uniform distribution (with potentially huge support) followed by postprocessing, which is compatible with Zhandry's theorem.
We do not make explicit the support of the downsampled distribution, and this is also not necessary as the loss from Zhandry's theorem is actually independent from the size of the support.
Indeed, we combine these to show that we can generalize Zhandry's theorem over arbitrary unitary distribution with (asymptotically) the same loss.

\begin{theorem}
  \label{thm:smallrange-unitary}
  The output distributions of a quantum algorithm making $q$ queries to an oracle either drawn from $\SR_r^{\mathcal U}(\mathcal X)$ from $\mathcal U^{\mathcal X}$ are $300q^3/r$-close.
\end{theorem}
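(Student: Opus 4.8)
The plan is to reduce \Cref{thm:smallrange-unitary} to Zhandry's bitstring version (\Cref{thm:bitstring-small-range}) via the downsampling machinery just established. First I would apply \Cref{lem:downsampling} to $\mathcal U$ with an error parameter $\varepsilon$ small enough that $\varepsilon q$ is negligible compared to $q^3/r$ (e.g.\ $\varepsilon = 1/r^{10}$ will do, since we only care about the asymptotic $300q^3/r$ bound): this yields an integer $n$ and a finite family $\mathcal T = (U_1,\dots,U_{2^n})$ such that sampling $U_i$ uniformly from $\mathcal T$ is $(\varepsilon q)$-indistinguishable from sampling from $\mathcal U$ to any $q$-query algorithm. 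The key observation is that the uniform distribution over the index set $[2^n]$, together with the fixed postprocessing map $i \mapsto U_i$, is \emph{exactly} a distribution over (finite-length) bit strings $i \in \{0,1\}^n$ followed by a deterministic reinterpretation — so Zhandry's theorem applies to it.

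Next I would track how the small-range and full-support constructions interact with this downsampling. Let $\mathcal I$ denote the uniform distribution over $\{0,1\}^n$ (the ``index'' distribution), and let $\Lambda$ denote the deterministic channel that, controlled on an index register holding $i$, applies $U_i$ to the target register. Then $\mathcal U^{\mathcal X}$ is $(\varepsilon q \cdot |\mathcal X|)$-close to $\Lambda \circ (\mathcal I^{\mathcal X})$ — wait, more carefully: one should downsample once and reuse, so the clean way is to argue that an oracle drawn from $\mathcal U^{\mathcal X}$ (resp.\ $\SR_r^{\mathcal U}(\mathcal X)$) is $(\varepsilon q)$-indistinguishable from one drawn from $\mathcal I^{\mathcal X}$ post-composed with $\Lambda$ (resp.\ $\SR_r^{\mathcal I}(\mathcal X)$ post-composed with $\Lambda$). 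The subtlety here is that \Cref{lem:downsampling} is stated for a \emph{single} invocation, whereas $\mathcal U^{\mathcal X}$ and its small-range variant are queried $q$ times; but because the downsampling lemma's guarantee is phrased per-query (distinguishing advantage $\le \varepsilon q$ for a $q$-query algorithm) and the downsampled family is fixed once and for all, a hybrid argument swapping each of the (at most $q$ distinct, or $r$ in the small-range case) underlying unitaries from $\mathcal U$ to $\mathcal F$/$\mathcal T$ gives the bound, with the $\varepsilon$ chosen small enough to absorb whatever polynomial factor ($q$ or $r$) appears. Since $\varepsilon$ can be taken arbitrarily small independently of everything, this contributes a negligible additive term.

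Then I would invoke \Cref{thm:bitstring-small-range} directly on $\mathcal I$: the output distributions of a $q$-query algorithm with an oracle from $\SR_r^{\mathcal I}(\mathcal X)$ versus $\mathcal I^{\mathcal X}$ are $\ell q^3/r$-close with $\ell < 27$. Post-composing both sides with the \emph{same} fixed channel $\Lambda$ can only decrease (or preserve) the distinguishing advantage — this is just the data-processing / monotonicity of trace distance under channels, applied to the final state of the algorithm — so $\SR_r^{\mathcal U}(\mathcal X)$ and $\mathcal U^{\mathcal X}$, after the two downsampling steps, are within $\ell q^3/r + (\text{negligible})$. Combining via triangle inequality: the total is at most $\ell q^3/r + O(\varepsilon q) \le 27 q^3/r + o(1/r) < 300 q^3/r$ for all $q \ge 1$, which gives the claimed bound with a generous constant.

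The main obstacle I anticipate is the bookkeeping in the second step: making precise that downsampling a \emph{distribution over oracles of the form} $\mathcal U^{\mathcal X}$ or $\SR_r^{\mathcal U}(\mathcal X)$ — which internally reuse the same sampled unitaries across many input values $x$ — can be done coherently so that Zhandry's theorem sees a genuine bitstring distribution. The right framing is: downsample the \emph{base} distribution $\mathcal U$ to a uniform distribution over $\{0,1\}^n$ with postprocessing $\Lambda$, and observe that both $(\cdot)^{\mathcal X}$ and $\SR_r^{(\cdot)}(\mathcal X)$ are natural transformations that commute with this — i.e.\ $\SR_r^{\mathcal U}(\mathcal X) = \Lambda^{\otimes \mathcal X} \circ \SR_r^{\mathcal I}(\mathcal X)$ up to the downsampling error — so that the structure Zhandry exploits (collisions among the $r$ index values) is untouched by $\Lambda$. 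Once that commutation is stated cleanly, the rest is triangle inequality and data processing.
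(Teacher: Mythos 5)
Your overall architecture---downsample $\mathcal U$ to a finite uniform family via \Cref{lem:downsampling}, reduce to Zhandry's bitstring theorem (\Cref{thm:bitstring-small-range}), and combine by the triangle inequality---is exactly the paper's, but the central step of your reduction has a genuine gap. You claim that the unitary oracle drawn from $\SR_r^{\mathcal T}(\mathcal X)$ equals a \emph{fixed} channel $\Lambda$ post-composed with the bitstring oracle drawn from $\SR_r^{\mathcal I}(\mathcal X)$, and then invoke data processing to conclude that the distinguishing advantage is at most $\ell q^3/r$ with no loss. That identity is false as a one-query post-composition: the bitstring oracle acts as $\ket{x}\ket{z} \mapsto \ket{x}\ket{z \oplus f(x)}$, so after one query the index $f(x)$ sits in an ancilla register entangled with $\ket{x}$; applying the controlled unitary $U_{f(x)}$ is fine, but to recover the clean unitary oracle $\ket{x}\ket{y} \mapsto \ket{x}\otimes U_{f(x)}\ket{y}$ you must \emph{uncompute} the index, which costs a second query. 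Data processing (monotonicity of trace distance under a channel applied to the final state) does not apply here because the oracle is invoked adaptively throughout the computation, not post-processed once at the end. The correct statement is that a $q$-query distinguisher against the unitary oracles yields a $2q$-query distinguisher against the bitstring oracles, so the bound inherited from \Cref{thm:bitstring-small-range} is $\ell(2q)^3/r = 216\,q^3/r$, not $27\,q^3/r$. This is exactly how the paper argues (its Hybrids 2--4 simulate each unitary-oracle call with two classical-oracle queries), and it is why the theorem's constant is $300$ rather than something near $27$: the budget is $q^3/r + 216\,q^3/r + q^3/r < 300\,q^3/r$ once the two downsampling errors are each set to $q^3/r$.

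A second, smaller slip: when replacing $\mathcal U^{\mathcal X}$ by $\mathcal T^{\mathcal X}$ you hybridize over ``at most $q$ distinct'' unitaries, but a superposition query touches all $|\mathcal X|$ independent samples at once, so the hybrid runs over $|\mathcal X|$ positions and contributes $\varepsilon q |\mathcal X|$. This is harmless in the end---$\varepsilon$ is a free parameter and may be chosen depending on $|\mathcal X|$ (the paper takes $\varepsilon = q^2/(r|\mathcal X|)$, making this term exactly $q^3/r$)---but $\varepsilon = 1/r^{10}$ alone does not suffice when $|\mathcal X|$ is large, so your accounting of this term as ``negligible'' needs the dependence on $|\mathcal X|$ made explicit.
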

\begin{proof}
  Let $\varepsilon = \frac{q^2}{r|\mathcal X|}$, and let $\mathcal T$ be the distribution with length $n$ guaranteed by \Cref{lem:downsampling}.
  We show the theorem by the following sequence of hybrids:
  \begin{description}
    \item[Hybrid 0] The oracle is drawn from $\mathcal U^{\mathcal X}$.
    \item[Hybrid 1] The oracle is drawn from $\mathcal T^{\mathcal X}$.
      Since the oracle is essentially a direct sum of $|\mathcal X|$ independent unitary samples, the distinguishing advantage between the output distributions of this and the last hybrid is at most $\varepsilon q \cdot |\mathcal X| = q^3/r$.
    \item[Hybrid 2] We instead sample a classical random oracle $(\{0, 1\}^n)^{\mathcal X}$, and simulate the distribution of $\mathcal T^{\mathcal X}$ using two queries to the random oracle.
      In particular, we first query the random oracle to get $i \in \bit^n$, apply $U_i$, and then uncompute $i$ in superposition.
      By construction, this is perfectly indistinguishable to the last hybrid.
    \item[Hybrid 3] We change the classical random oracle to be instead sampled by $\SR_r^{\{0, 1\}^n}(\mathcal X)$.
      By \Cref{thm:bitstring-small-range} and the construction, the distance to the last hybrid is at most $27(2q)^3/r$.
    \item[Hybrid 4] We switch the oracle to $\SR_r^{\mathcal T}(\mathcal X)$.
      This is perfectly indistinguishable to the last hybrid.
    \item[Hybrid 5] We switch the oracle to $\SR_r^{\mathcal U}(\mathcal X)$.
      The distance for this is again at most $q^3/r$.
  \end{description}
  By triangle inequality, the two distributions are $< 300q^3/r$ indistinguishable.
\end{proof}

\end{document}